\newcommand{\sgn}{\mathrm{sgn}}
\newcommand{\qmax}{q_{\mathrm{max}}}
\newcommand{\qhalfone}{q_{\mathrm{half},1}}
\newcommand{\qhalftwo}{q_{\mathrm{half},2}}
\newcommand{\intR}{\int_{\mathbb{R}}} 
\newcommand{\supR}{\sup_{\eps\in\mathbb{R}}}
\newcommand{\supRp}{\sup_{\eps'\in\mathbb{R}}}
\newcommand{\Norm}[1]{\lVert{#1}\rVert_1}
\newcommand{\Norminfty}[1]{\lVert{#1}\rVert_\infty}
\newcommand{\Normtwo}[1]{\lVert{#1}\rVert_2}
\newcommand{\Kmax}{K_{\mathrm{sup}}}
\newcommand{\Kzmax}{K_{0,\mathrm{sup}}}
\newcommand{\tKmax}{\tilde{K}_{\mathrm{sup}}}
\newcommand{\tF}{\tilde{F}}
\newcommand{\tK}{\tilde{K}}
\renewcommand{\Theta}{\theta} 
\newcommand{\ii}{\mathrm{i}}
\newcommand{\nd}{{\phantom\dag}}
\newcommand{\ee}{{\rm e}}
\newcommand{\eps}{\epsilon}
\newcommand{\veps}{\varepsilon}
\newcommand{\R}{{\mathbb R}}
\newcommand{\cK}{\mathcal{K}} 
\newcommand{\cR}{\mathcal{R}} 
\newcommand{\vk}{\mathbf{k}}
\newcommand{\hV}{\hat{V}}
\newcommand{\fBCS}{\mathrm{f}_{\mathrm{BCS}}}
\newcommand{\ffBCS}{\mathrm{f}}
\newtheorem{theorem}{Theorem}[section]
\newtheorem{lemma}[theorem]{Lemma}
\newtheorem{corollary}[theorem]{Corollary}
\theoremstyle{definition}
\newtheorem{definition}[theorem]{Definition}
\theoremstyle{remark}
\newtheorem{remark}[theorem]{Remark}
\begin{document}

\title{Universal and nonuniversal features of Bardeen-Cooper-Schrieffer theory with finite-range interactions}
\author{Edwin Langmann}
\affiliation{Department of Physics, KTH Royal Institute of Technology SE-106 91 Stockholm, Sweden}
\author{Christopher Triola}
\affiliation{Los Alamos National Laboratory, Los Alamos, New Mexico 87544, USA}
\date{\today}

\begin{abstract}
We study analytic solutions to the Bardeen-Cooper-Schrieffer (BCS) gap equation for isotropic superconductors with finite-range interaction potentials over the full range of temperatures from absolute zero to the superconducting critical temperature, $0\leq T\leq T_c$. Using these solutions, $\Delta(\epsilon,T)$, we provide a proof of the universality of the temperature dependence of the BCS gap ratio at the Fermi level, $\Delta(\epsilon=0,T)/T_c$. Moreover, by examining the behavior of this ratio as a function of energy, $\epsilon$, we find that non-universal features emerge away from the Fermi level, and these features take the form of a temperature independent multiplicative factor, $F(\epsilon)$, which is equal to $\Delta(\epsilon,T)/\Delta(\epsilon=0,T)$ up to exponentially small corrections, i.e., the error terms vanish like $\rm{e}^{-1/\lambda}$ in the weak-coupling limit $\lambda\rightarrow 0$. We discuss the model-dependent features of both $F(\epsilon)$ and $T_c$, and we illustrate their behavior focusing on several concrete examples of physically-relevant finite-range potentials. Comparing these cases for fixed coupling constants, we highlight the importance of the functional form of the interaction potential in determining the size of the critical temperature and provide guidelines for choosing potentials which lead to higher values of $T_c$. We also propose experimental signatures which could be used to probe the energy-dependence of the gap and potentially shed light on the underlying mechanisms giving rise to superconductivity. 
\end{abstract}

\maketitle

\section{Introduction}
Bardeen, Cooper and Schrieffer (BCS)  in their seminal paper \cite{BCS} employed a simple model of electron interactions to study the emergence of superconductivity from electron pairing. 
Despite the simplified nature of the BCS model, it has led to remarkable breakthroughs in our understanding of superconductivity and, in particular, it reproduces the essential qualitative features obtained using more realistic pairing interactions \cite{BardeenPines}. Today it is understood that many of these successful predictions of BCS theory are actually independent of model details, a concept referred to as {\em universality} \cite{Leggett}. In this light, the BCS interaction potential is simply a convenient tool for exploring these universal features.
One famous universal prediction of BCS theory is the ratio of the temperature dependent superconducting gap, $\Delta(T)$, to the superconducting critical temperature, $T_c$, as a function of the reduced temperature, $t=T/T_c$:
\begin{equation} 
\label{Universality}
\Delta(T)/T_c \simeq \fBCS(T/T_c)
\end{equation} 
with an exactly known special function $\fBCS(t)$ \cite{Leggett} which gives $\fBCS(0)\approx 1.76$ and $\fBCS(t)\approx 3.06\sqrt{1-t}$ for $t$ close to 1 (a detailed discussion of this function $\fBCS(t)$ is given in Appendix~\ref{app:fBCS}; we use units such that $k_B=\hbar=1$ throughout this paper). 

The standard arguments for the universality of the BCS gap ratio assume that the physics is dominated by electrons on the Fermi surface, and the energy-dependence of the interaction potential is integrated out \cite{Leggett}. However, since interactions in real superconductors can be considerably more complicated than the BCS potential, possessing a highly nontrivial energy-dependence \cite{BardeenPines}, it is not obvious that the universal predictions of the BCS model hold for more complicated potentials. Furthermore, it is not known how universal this ratio is at energies away from the Fermi level. In this paper, we address both of these issues, providing a proof of the universality of the BCS gap ratio at the Fermi level for a broad class of interaction potentials. We also provide expressions which fully characterize the deviation from universality away from the Fermi surface. To obtain our results, we exploit the smallness of both the BCS coupling parameter, $\lambda$, and the quantity $\ee^{-1/\lambda}$, and our results are exact up to exponentially small corrections, i.e., the error terms are $O(\ee^{-1/\lambda})$ \cite{remO}. 
 
To obtain these results, we employ a formalism outlined in a paper written by ourselves together with Balatsky \cite{LTB}. 
In that work, we investigated the sensitivity of $T_c$ to the spatial dependence of the pairing interaction, using a generalized BCS model in which the pairing interaction is described by a finite-range potential, $V(r)$, with non-trivial dependence on the inter-electron separation, $r$. We presented an explicit formula for $T_c$ which accounts for the details of $V(r)$ and that is exact up to exponentially small corrections.    
We also demonstrated that this improvement on the $T_c$-equation captures important physics which is completely overlooked by the BCS-potential, most notably, the importance of a length scale, $\ell$, which characterizes the range of the interaction. A surprising consequence of this length scale is that, for a broad class of mathematically well-behaved potentials, $V(r)$, $T_c(n)$ is always a non-monotonic function of the electron density, $n$, with $T_c(n)$ increasing from zero for $n\ll\ell^{-3}$, up to a maximum value when $n\sim \ell^{-3}$, and decaying to zero for $n\rightarrow \infty$. These results were subsequently confirmed and made more mathematically rigorous by Henheik \cite{henheik2022a}; see also \cite{BBP20} for a numerical confirmation in a closely related lattice fermion model. The fact that $T_c(n)$ vanishes in the large density limit can be understood from the decay of the interaction potential in Fourier space, or, equivalently, the non-singular short-range behavior of the potential in real space \cite{LTB}. Moreover, from that analysis it is clear that the monotonic dependence of $T_c(n)$ predicted by BCS theory is, in fact, an artifact of the BCS potential which corresponds to the limiting case of $\ell\to 0$ together with a regularization of the gap equation to avoid a divergence arising in that limit (this point is discussed further in Section~\ref{sec:BCS}).

The aim of the present paper is to elaborate on and extend the results in \cite{LTB} to the full range of temperatures, $0\leq T\leq T_c$. Moreover, we prove, using mathematically rigorous arguments, that for the broad class of finite-range potentials discussed in this work, denoted $V(r)$, the gap ratio has the form $\Delta(\eps,T)/T_c \simeq F(\eps)\fBCS(T/T_c)$,  at energies, $\eps$, away from the Fermi level, where $F(\eps)$ is a function, independent of $T$, which depends on the details of $V(r)$ and is defined such that $F(0)=1$. Importantly, using the expressions derived here in, we recover the universality of the BCS gap ratio, $\Delta(T)/T_c \simeq \fBCS(T/T_c)$, for energies close to the Fermi level, and we provide explicit expressions characterizing the deviation from universality for more general interaction potentials. Furthermore, since this ratio is, in principle, measurable using ARPES, the results contained in this paper could aid in the characterization of interaction potentials in real superconductors.

We note that until recently the spatial dependence of the pairing interaction has received relatively little attention in the physics literature on superconductivity. A notable exception is work by Swihart published in 1963 \cite{swihart1963energy} presenting numerical results on the temperature and energy dependence of the gap in BCS theory for a few energy dependent interactions. In that work it was found, numerically, that the ratio $\Delta(\eps,T)/\Delta(\eps=0,T)$ was essentially independent of temperature for the examples considered, and that $\Delta(\eps=0,T)/\Delta(\eps=0,T=0)$ was approximately identical to the weak coupling result from BCS theory. As we will show, analytically, in Section \ref{sec:universal} these are, in fact, generic features of the solutions to the gap equation. 
Other previous papers in the physics literature have emphasized the role of the pairing interaction as a physical approach to regularizing the log-singularity in the BCS equation \cite{rothwarf1967new,L89,YS00}. 

The work presented here builds on recent progress made in the mathematical physics community on solving the BCS gap equation for particular classes of models \cite{henheik2022a,henheik2022b,L20,FHNS,HS,HS2}; see \cite{HSreview} for a review of the topic. More specifically, our results represent a generalization of these recent results in three directions: (i) the class of models under consideration is extended; (ii) the full temperature range is considered; and (iii) the expansion in $\lambda$ is taken from next-leading order to arbitrarily high order. Moreover, the mathematical proofs in Refs.\ \cite{henheik2022a,henheik2022b,L20,HS} rely heavily on advanced functional analytic tools, and may therefore be less accessible for many members of the condensed matter community. 
Since we avoid advanced mathematical techniques, the present paper can act as a bridge between the mathematical physics community working on BCS theory and the broader community of physicists working on superconductivity.

This paper is organized as follows. We begin, in Section~\ref{sec:mod}, with a discussion of the starting point for our calculations, the BCS gap equation with a finite-range interaction potential. We then proceed to transform this equation so that it is amenable to our analytic techniques (Section~\ref{subsec:Reformulation}); this form has the further advantage of being applicable to a larger class of models. In Section~\ref{sec:universal}, starting from the transformed gap equation, we present a systematic treatment of its solution up to exponentially small corrections in a control parameter and, in particular, establish the universality of the ratio $\Delta(0,T)/T_c$.
In Section~\ref{sec:summary}, we summarize the results of our analytic work in the form of a theorem, making precise the nature of our exact solution for the BCS gap equation for a broad class of finite-range interaction potentials; technical details promoting our arguments in Section~\ref{sec:summary} to a mathematical proof of this theorem are provided in appendices.
In Section~\ref{sec:applications} we illustrate some of the non-universal features of these solutions to the gap equation using several concrete examples of interaction potentials. We also present a simple measure of the pairing efficiency allowing a comparison of the maximum achievable critical temperature for different interaction potentials. In Section~\ref{sec:conclusions} we offer concluding remarks. 

\section{Prerequisites}
\label{sec:mod} 
To define our notation, we recall the standard model of interacting fermions (Section~\ref{subsec:model}) and the BCS gap equation (Section~\ref{subsec:gapeq}). 
We then re-write the gap equation to make it amenable to an analytic treatment (Section~\ref{subsec:Reformulation}).

\subsection{Model}
\label{subsec:model} 
The starting point for our calculation is the standard Hamiltonian describing interacting electrons, 
\begin{multline} 
\label{H} 
H = \int d^3r\,\sum_{\sigma=\uparrow,\downarrow}\psi^\dag_{\sigma,\textbf{r}} (h\psi)_{\sigma,\textbf{r}} \\ 
+ \frac12\iint d^3r d^3r'\,\sum_{\sigma,\sigma'=\uparrow,\downarrow} \psi^\dag_{\sigma,\textbf{r}}\psi^\dag_{\sigma',\textbf{r}'}V(|\textbf{r}-\textbf{r}'|)\psi^\nd_{\sigma',\textbf{r}'}\psi_{\sigma,\textbf{r}}
\end{multline} 
where $\psi^\dag_{\sigma,\textbf{r}}$ ($\psi_{\sigma,\textbf{r}}$) creates (annihilates) an electron with spin $\sigma$ at position $\textbf{r}$, 
$h$ is a local operator describing the kinetic energy of the electrons, and $V(r)$ is an attractive non-local interaction potential depending on the inter-fermion distance $r=|\textbf{r}-\textbf{r}'|$. While our main results apply to electronic systems with a variety of kinetic energy operators, for concreteness we will assume the standard jellium  form: 
$(h\psi)_{\sigma,\textbf{r}} =\left(-\frac{\nabla^2}{2m^*}-\mu\right)\psi_{\sigma,\textbf{r}}$, where $m^*$ and $\mu$ are the effective mass and chemical potential for the electrons, respectively.

\subsection{Gap equation}
\label{subsec:gapeq} 
It is well-known that, in mean-field theory, a superconducting state at temperature $T$ can be characterized by an order parameter, or superconducting gap function, $\Delta_{\vk}(T)$, which is determined by the equation 
\begin{equation}
\label{Gap} 
\Delta_{\vk}(T) = -\int \frac{d^3k'}{(2\pi)^3} \hat{V}_{\vk,\vk'} \frac{\tanh\frac{E_{\vk'}}{2 T}}{2E_{\vk'}} \Delta_{\vk'}(T)
\end{equation} 
where $E_{\vk}=\sqrt{\eps_{\vk}^2+\Delta_{\vk}^2}$ with
\begin{equation} 
\eps_{\vk}= \frac{\vk^2}{2m^*}-\mu 
\end{equation} 
the Sommerfeld dispersion relation, and $\hat{V}_{\vk,\vk'}$ is the Fourier transform of the pairing potential (a physics textbook derivation of Eq. \eqref{Gap} can be found in \cite{Leggett}, for example; for mathematically precise derivations see \cite{HSreview} and references therein).

We now assume that the model describes an s-wave superconductor, i.e., the gap is constant on surfaces of constant energy, $\eps_{\vk}=\eps$, and there is no angular dependence (as discussed in Section~\ref{sec:applications}, this assumption is known to be true for many interesting examples). 
In such a case, we can use the electronic density of states (DOS), $N(\epsilon)$, to transform the integrals in Eq.\ \eqref{Gap} from momentum to energy: 
\begin{equation}
\label{BCSgapEq}
\Delta(\eps,T) = \intR  d\epsilon' \, \Lambda(\eps,\eps')
\frac{\tanh{\tfrac{E(\eps',T)}{2T}}}{2E(\eps',T)}\Delta(\eps',T)
\end{equation}
where $E(\eps,T)=\sqrt{\eps^2+\Delta(\eps,T)^2}$, 
\begin{equation} 
\label{Lambda}
\Lambda(\eps,\eps') \equiv -\hV(\epsilon,\epsilon')N(\epsilon'), 
\end{equation} 
with the energy-resolved interaction potential $\hV(\eps,\eps')$ obtained by averaging the interaction potential $\hat{V}_{\vk,\vk'}$ over the energy surfaces $\eps_{\vk}=\eps$ and $\eps_{\vk'}=\eps'$ (see Appendix~\ref{app:model} for details). Here, $\mathbb{R}$ denotes the real numbers, which we use as the domain of integration for simplicity even though the integrand is non-zero only for $\eps'\geq -\mu$. We note that, in BCS theory, a cutoff is imposed as a means of regularizing the divergent integral; this can be viewed as taking $\Lambda(\eps,\eps')$ to be the function which is 1 for $\eps'\in [-\omega_D,\omega_D]\subset \mathbb{R}$ and zero otherwise, where $\omega_D$ is the Debye frequency.

We note that, since the pairing of electrons is strongest at the Fermi level, the parameter,
\begin{equation}
\lambda \equiv \Lambda(0,0) = -\hV(0,0)N(0), 
\label{eq:lambda_definition}
\end{equation}
represents a measure of the coupling strength and can be identified as a generalization of the BCS coupling parameter \cite{P58}.
In fact, as we will show in the next section, $\lambda$ emerges as a natural control parameter for a broad class of interaction potentials.

It is worth emphasizing that, because we have changed our integration variable to energy, the function $\Lambda(\eps,\eps')$ in Eq. \eqref{Lambda} encodes \textit{all} model-dependent information relevant to the BCS gap equation in Eq. \eqref{BCSgapEq}. 
Moreover, even though we assumed a quadratic energy dispersion in three-dimensions for the fermion kinetic energy, for concreteness, Eq. \eqref{BCSgapEq} and our results presented in Sections \ref{sec:universal} and \ref{sec:summary} apply to a much larger class of models.

\subsection{Reformulation of the gap equation}
\label{subsec:Reformulation}
While it is straightforward to obtain solutions to Eq.~\eqref{BCSgapEq} numerically,  we will show that a great deal of insight can be gained from solving this problem analytically. 

It is well-known that Eq.\ \eqref{BCSgapEq} possesses a logarithmic divergence as $T\rightarrow 0$.  This motivates us to use of the following ansatz for the gap function, 
\begin{equation} 
\label{BCSgap1} 
\Delta(\eps,T) = \Lambda(\eps,0)\Delta(0,T)\log\frac{\Omega_{T,\Delta}(\eps)}{T}.
\end{equation} 
This equation acts as an implicit definition of a new function, $\Omega_{T,\Delta}(\eps)$, which is, notably, free of divergences, as we prove in Sec.~\ref{subsec:gapratio}. 
From this expression it is clear that, at the Fermi level ($\eps=0$), the gap equation becomes, 
\begin{equation} 
\label{BCSgap00} 
1 = \lambda\log\frac{\Omega_{T,\Delta}(0)}{T}
\end{equation} 
provided $\Delta(0,T)\neq0$.
Adopting a shorthand for the $\Delta\rightarrow 0$ limit: $\Omega_{T_c,0}(0)\equiv\lim_{\Delta\rightarrow 0}\Omega_{T_c,\Delta}(0)$, we see that the critical temperature may be defined implicitly by:
\begin{equation} 
\label{Tc111} 
T_c = \Omega_{T_c,0}(0)\ee^{-1/\lambda}.
\end{equation} 

Returning to the definition in Eq.~\eqref{BCSgap1} and using Eq.~\eqref{BCSgapEq}, we find that, at arbitrary energies, $\eps$, $\Omega_{T,\Delta}(\eps)$ can be written as:   
\begin{equation} 
\label{OmegaTDeltaeps}
\Omega_{T,\Delta}(\eps) = T\exp\Biggl[ \intR  d\epsilon' 
\frac{\tanh\frac{E(\eps',T)}{2T}}{2E(\eps',T)}\frac{\Lambda(\epsilon,\epsilon')\Delta(\eps',T)}{\Lambda(\epsilon,0)\Delta(0,T)}\Biggr].
\end{equation}  
From this  we find the following equation for the gap ratio,  
\begin{multline} 
\label{Deltaeps11}
\frac{\Delta(\eps,T)}{\Delta(0,T)} =  \frac{\Lambda(\eps,0)}{\Lambda(0,0)} \Biggl\{  1 + \lambda  \intR  d\epsilon' \, 
\frac{\tanh\frac{E(\eps',T)}{2T}}{2E(\eps',T)}
\\ \times 
\Biggl[  \frac{\Lambda(\eps,\eps')}{\Lambda(\eps,0)} -\frac{\Lambda(0,\eps')}{\Lambda(0,0)}  
 \Biggr]\frac{\Delta(\eps',T)}{\Delta(0,T)} \Biggr\}
\end{multline} 
(details on how this key formula is obtained are given in Appendix~\ref{app:gapratio}).
From this equation we can already see a hint that we are moving in the right direction: if $E$ did not depend on the gap, then \eqref{Deltaeps11} would be an inhomogeneous linear Fredholm integral equation\cite{davis1960}. If that were the case, then it is clear that iterating this equation would generate a series in powers of $\lambda$ whose coefficients would be completely independent of the gap. In this way, a solution for the ratio $\Delta(\eps,T)/\Delta(0,T)$ could be given to arbitrary order in $\lambda$. In the next section we will argue that, in fact, this procedure can be followed rigorously with errors exponentially small in $\lambda$, and that combining this solution with the other equations in this subsection leads to expressions for the critical temperature as well as a constructive proof of the universality of the ratio $\Delta(0,T)/T_c$.

\section{Derivation of results}
\label{sec:universal}
In this section we derive three main results: (i) expressions for the ratio of the superconducting gap at finite energy $\eps$ away from the Fermi level to the superconducting gap at the Fermi level, $\Delta(\eps,T)/\Delta(0,T)$, (ii) expressions for the critical temperature, $T_c$, and (iii) a mathematically rigorous proof of the universality of the ratio $\Delta(0,T)/T_c$. 

These results are summarized and made more precise in Section~\ref{sec:summary} where we present a theorem, including sufficient conditions on the function $\Lambda(\eps,\eps')$ for these results to hold true. Mathematically inclined readers can use the present section as an outline of a proof for this theorem, concentrating on the key steps; technical details can be found in appendices.

%============================================================================================================================================
\subsection{Energy dependence of gap}
\label{subsec:gapratio} 
%============================================================================================================================================
As noted below Eq.\ \eqref{Deltaeps11}, it is almost possible to solve this equation by iteration; the main problem is the gap-dependence of the integrand on the right-hand side. To make progress on this front we will examine the difference between the exact expression in Eq. \eqref{Deltaeps11} and the more desirable one in which we take the limit $\Delta\rightarrow 0$. In our analysis, we also find it is convenient to take the $T\rightarrow 0$ limit on the right-hand side in Eq.\ \eqref{Deltaeps11}. As we will see, the error introduced by these two simplifications is, surprisingly, exponentially small.

Let $I_{\Delta,T}$ be  the integral in Eq.\ \eqref{Deltaeps11}, i.e., 
\begin{equation}
\label{eq:integral_delta_t}
I_{\Delta,T}\equiv  \intR d\eps' \frac{\tanh\frac{\sqrt{(\eps')^2+\Delta(\eps',T)^2}}{2T}}{\sqrt{(\eps')^2+\Delta(\eps',T)^2}}|\eps'|g(\eps,\eps')
\end{equation}
where we define $g(\eps,\eps')\equiv K(\eps,\eps')\Delta(\eps,T)/\Delta(0,T)$ with:  
\begin{equation}
\label{Kdef} 
    K(\eps,\eps')\equiv \frac{1}{2|\eps'|}\frac{\Lambda(\eps,0)}{\Lambda(0,0)}\left[\frac{\Lambda(\eps,\eps')}{\Lambda(\eps,0)}-\frac{\Lambda(0,\eps')}{\Lambda(0,0)} \right].
\end{equation}
We observe that the term in the square brackets in \eqref{Kdef} vanishes in the limit $\eps'\to 0$. 
Thus, for reasonably well-behaved functions $\Lambda(\eps,\eps')$, the function $g(\eps,\eps')$ remains finite in the limit $\eps'\to 0$. 
This makes clear that, in the limit where $\Delta\to 0$ and $T\to 0$, the integral in Eq. \eqref{eq:integral_delta_t} does {\em not} suffer from a log-divergence. Moreover, as shown in Appendix~\ref{app:approximation1}, the error terms introduced by replacing $I_{\Delta,T}$ with $I_{0,0}\equiv \intR d\eps'\, g(\eps,\eps')$ can be bounded in the following way, 
\begin{equation} 
\label{IIestimate}
|I_{\Delta,T}-I_{0,0}| \leq  \left(4T + \pi\sup_{\eps'\in\R}|\Delta(\eps',T) |\right) \sup_{\eps'\in\R}|g(\eps,\eps')|.
\end{equation} 
Since $T_c$ and $\Delta$ both vanish like $\ee^{-1/\lambda}$ in the weak coupling limit, one can expect that the quantity on the right hand side in Eq. \eqref{IIestimate} is $O(\ee^{-1/\lambda})$ so that Eq. \eqref{Deltaeps11} may be approximated by replacing $\tanh(\frac{E(\eps',T)}{2T})/2E(\eps',T)$ in the integrand with $1/2|\eps'|$. A more careful analysis confirms that this is indeed the case (the interested reader can find detailed arguments in Appendix~\ref{app:approximation1}), and we obtain our first main result: 
\begin{equation} 
\frac{\Delta(\eps,T)}{\Delta(0,T)} = F(\eps) +O(\ee^{-1/\lambda})
\label{Deltaeps}
\end{equation} 
with 
\begin{equation} 
F(\eps) = \frac{\Lambda(\eps,0)}{\Lambda(0,0)} + \lambda  \intR  d\epsilon' K(\eps,\eps')F(\eps'). 
\label{FEq} 
\end{equation} 
A corollary of this is that the gap ratio, $\Delta(\eps,T)/\Delta(0,T)$, is independent of temperature, up to corrections $O(e^{-1/\lambda})$.

From this result one can readily verify that the gap ratio is given by the infinite series:
\begin{equation} 
\label{Fseries1}
 F(\eps)= F_0(\eps) + F_1(\eps)\lambda + F_2(\eps)\lambda^2+\cdots 
\end{equation} 
with coefficients $F_0(\eps)=\Lambda(\eps,0)/\Lambda(0,0)$ and 
\begin{equation} 
F_{n}(\eps)= \intR d\eps'\, K(\eps,\eps') F_{n-1}(\eps')
\end{equation} 
for $n=1,2,\ldots$. 

The series in Eq. \eqref{Fseries1} is different from more common asymptotic series arising in quantum physics in that it is convergent, and its convergence radius controls the magnitude of the error term introduced by truncating the series at a finite order. 
Thus, before we proceed to the next subsection, it is important to discuss the conditions under which the infinite series in Eq.\ \eqref{Fseries1} converges. 
To begin we note that, 
\begin{equation}
\begin{aligned}
|F_n(\eps)|&\leq \intR d\eps' \sup_{\eps''\in\R}|K(\eps,\eps'')|\cdot|F_{n-1}(\eps')| \\ 
& = \Kmax(\eps) \Norm{F_{n-1}}
\end{aligned}    
\end{equation} 
using the definitions $\Norm{F_{n-1}}\equiv \intR d\eps'\, |F_{n-1}(\eps)|$ and 
\begin{equation} 
\label{Kmax} 
\Kmax(\eps) \equiv  \sup_{\eps'\in\R} |K(\eps,\eps')|. 
\end{equation} 

Integrating this inequality on both sides yields the following bounds, 
\begin{equation}
\begin{aligned}
\Norm{F_n} &\leq \Norm{\Kmax}\cdot \Norm{F_{n-1}} \\
&\leq \Norm{\Kmax}^n\cdot \Norm{F_0}
\end{aligned}
\label{eq:inequality_fn}
\end{equation} 
where 
\begin{equation} 
\label{normFmax} 
\Norm{\Kmax}\equiv  \intR d\eps\, |\Kmax(\eps)|, 
\end{equation} 
with the second inequality obtained by iterating the first one. 
Combining this inequality with the series in Eq.\ \eqref{Fseries1}, we immediately see that this solution for the gap ratio will converge provided the following holds:
\begin{equation}
\label{eq:convergence_criterion}
0<\lambda < \frac1{\Norm{\Kmax}}.
\end{equation}
The condition in Eq. \eqref{eq:convergence_criterion} is important since our results are applicable if and only if it is fulfilled. 
Moreover, it allows us to estimate the error terms introduced by truncating the series in Eq. \eqref{Fseries1} at finite order; see Section~\ref{sec:summary} for details. 

Since $\lambda$ is proportional to the coupling constant $g$, one can always satisfy the constraint in \eqref{eq:convergence_criterion} by choosing $g$ sufficiently small. 
As shown in Sec.~\ref{sec:applications} for several examples, the constraint in \eqref{eq:convergence_criterion} is a minor restriction in practical applications as our results are applicable for remarkably large values of the coupling constant $g$.

\subsection{Critical temperature}
\label{subsec:Tc}
We now turn our attention to the critical temperature, as given implicitly in Eq.\ \eqref{Tc111}. From that definition it is clear that the main barrier to obtaining an exact solution for $T_c$ is finding an expression for $\Omega_{T,0}(0)$, which we will now proceed to do. 

To begin we define the following auxiliary quantity: 
\begin{equation} 
\label{Om0}
\Omega^{(0)}_T = T \exp\left[ \intR  d\epsilon' \, 
\frac{\tanh\frac{\eps'}{2T}}{2\eps'}\theta(\omega_c-|\eps'|) \right] 
\end{equation} 
with $\omega_c>0$ arbitrary. Notice that $\Omega^{(0)}_T$ is very similar to the function $\Omega_{T,0}(0)$ in Eq. \eqref{OmegaTDeltaeps} that we are looking for. As already shown in the original BCS paper \cite{BCS}, the limit $T\to 0^+$ of $\Omega^{(0)}_T$ is well-defined and equal to  
\begin{equation} 
\label{Om01} 
\Omega^{(0)}_0 =\frac{2\ee^{\gamma}}{\pi}\omega_c
\end{equation} 
where $\gamma\approx 0.577$ is the Euler-Mascheroni constant and $2\ee^{\gamma}/\pi\approx 1.13$ is the famous constant appearing in the BCS $T_c$-equation \cite{BCS}(see Appendix~\ref{app:BCSintegral} for proof).

The importance of the quantity defined in Eq. \eqref{Om0} becomes more apparent when we take the ratio $\Omega^\nd_{T,0}(0)/\Omega^{(0)}_T$:  
\begin{equation} 
\begin{aligned}
\label{OmegaT0eps}
\frac{\Omega_{T,0}(0)}{\Omega^{(0)}_T} &= \exp\left[ \intR  d\epsilon
\frac{\tanh\frac{\eps}{2T}}{2\eps} \frac{\Lambda(0,\epsilon)}{\Lambda(0,0)}\frac{\Delta(\eps,T)}{\Delta(0,T)}  \right. \\
&-\left. \frac{\tanh\frac{\eps}{2T}}{\eps}\theta(\omega_c-|\eps|)  \right].
\end{aligned}
\end{equation} 
It is now easy to take the limit $T\to 0$ and obtain an expression for $\Omega_{0,0}(0)$: 
\begin{equation} 
\begin{aligned}
\label{Omega00eps}
\Omega_{0,0}(0) &= \frac{2\ee^{\gamma}}{\pi}\omega_c \exp\Biggl\{ \intR
\frac{d\epsilon}{2|\eps|}\Biggl[ \frac{\Lambda(0,\epsilon)}{\Lambda(0,0)}\frac{\Delta(\eps,0)}{\Delta(0,0)}   \\
&-\theta(\omega_c-|\eps|) \Biggr] \Biggr\}.
\end{aligned}
\end{equation} 
Moreover, as shown in Appendix~\ref{app:approximation2}, the error term introduced by replacing $\Omega_{T_c,0}(0)$ with $\Omega_{0,0}(0)$ in Eq.\ \eqref{Tc111} is exponentially small: $\log(\Omega_{T_c,0}(0)/\Omega_{0,0}(0))=O(\ee^{-1/\lambda})$. We thus obtain our result for the critical temperature: 
\begin{multline} 
T_c =    \frac{2\ee^{\gamma}}{\pi}\omega_c\exp\Biggl\{-\frac1\lambda +  \intR  \frac{d\epsilon}{2|\eps|} \, 
\Biggl[\frac{\Lambda(0,\epsilon)}{\Lambda(0,0)}F(\eps) \\ -\theta(\omega_c-|\eps|)\Biggr] +O(\ee^{-1/\lambda})\Biggr\},
\label{eq:critical_temp}
\end{multline} 
which we can readily evaluate using the series expansion for $F(\eps)$ given in Eq.\ \eqref{Fseries1}.

\subsection{Universality}
\label{subsec:universality} 
We will now prove that the ratio $\Delta(0,T)/T_c$ is given by a universal function, $\fBCS(t)$, in the reduced temperature, $t=T/T_c$, which is independent of the functional form of the interaction potential or any of the model parameters.

To begin, we recall that the function $\fBCS(t)$ may be defined, implicitly, in the following way \cite{Leggett}:
\begin{equation}
\int_0^{\infty}  d\veps \left\{ 
\frac{\tanh\frac{\sqrt{\veps^2+\fBCS(t)^2}}{2t}}{\sqrt{\veps^2+\fBCS(t)^2}} - 
\frac{\tanh\tfrac{\veps}{2}}{\veps}\right\}=0
\label{fBCS}
\end{equation}
for $0\leq t\leq 1$; see Appendix~\ref{app:fBCS} for further details, including a plot of this function $\fBCS(t)$.
This definition will help guide us in our proof.
Next, we recall Eq.~\eqref{BCSgap00}, which implies that the ratio $\Omega_{T,\Delta}(0)/T$ is independent of temperature. Therefore, it is clear that $\log(\Omega_{T,\Delta}(0)/T) - \log( \Omega_{T_c,0}(0)/T_c) = 0$.
Combining this identity with Eq.~\eqref{OmegaTDeltaeps} yields the following exact equation,
\begin{multline} 
\label{towardsfBCS}
\intR  d\epsilon \Biggl\{ 
\frac{\tanh\frac{E(\eps,T)}{2T}}{E(\eps,T)}\frac{\Delta(\eps,T)}{\Delta(0,T)} - \frac{\tanh\tfrac{\eps}{2T_c}}{\eps}\frac{\Delta(\eps,T_c)}{\Delta(0,T_c)}\Biggr\} \\ \times\frac{\Lambda(0,\eps)}{\Lambda(0,0)}= 0 , 
\end{multline} 
which is our starting point.

To proceed we recall the definition of $E(\eps,T)$ and use Eq.\ \eqref{Deltaeps} the fact that the gap ratio $\Delta(\eps,T)/\Delta(0,T)$ is independent of temperature, up to corrections $O\left(\ee^{-1/\lambda}\right)$. Thus, Eq.\ \eqref{towardsfBCS} becomes:
\begin{multline} 
\intR  d\epsilon \left\{ 
\frac{\tanh\left(\frac{\sqrt{\eps^2+\Delta(\eps,T)^2}}{2T}\right)}{\sqrt{\eps^2+\Delta(\eps,T)^2}} - \frac{\tanh\left(\tfrac{\eps}{2T_c}\right)}{\eps}\right\} \\ \times 
\frac{\Lambda(0,\eps)\Delta(\eps,T)}{\Lambda(0,0)\Delta(0,T)}= O\left(\ee^{-1/\lambda}\right).
\end{multline} 
Then, changing the integration variable $\eps$ to $\veps = \eps/T_c$,  we find: 
\begin{multline} 
\intR  d\veps \left\{ 
\frac{\tanh\left(\frac{\sqrt{\veps^2+[\Delta(\veps T_c,T)/T_c]^2}}{2T}\right)}{\sqrt{\veps^2+[\Delta(\veps T_c,T)/T_c]^2}} - \frac{\tanh\left(\tfrac{\veps}{2}\right)}{\veps}\right\} \\ \times 
\frac{\Lambda(0,\veps T_c)\Delta(\veps T_c,T)}{\Lambda(0,0)\Delta(0,T)}= O\left(\ee^{-1/\lambda}\right).
\end{multline}
Now,  the differences: $\Lambda(0,\veps T_c)-\Lambda(0,0)$ and $\Delta(\veps T_c,T)-\Delta(0,T)$, are proportional to $T_c/\mu=O(\ee^{-1/\lambda})$. Hence, the replacements $\Lambda(0,\veps T_c)\to \Lambda(0,0)$ and $\Delta(\veps T_c,T)\to \Delta(0,T)$ only add to the $O(\ee^{-1/\lambda})$-corrections. 
Thus,
\begin{equation} 
\label{ffBCS0} 
\intR  d\veps \left\{ 
\frac{\tanh\frac{\sqrt{\veps^2+[\Delta(0,T)/T_c]^2}}{2T/T_c}}{\sqrt{\veps^2+[\Delta(0,T)/T_c]^2}} - 
\frac{\tanh\tfrac{\veps}{2}}{\veps}\right\}= R_U 
\end{equation} 
with an error term $R_U=O(\ee^{-1/\lambda})$. 
As shown in Appendix~\ref{app:proof}, Eq.\ \eqref{ffBCS0} is equivalent to the following remarkable equation, $\Delta(0,T)/T_c =  \ee^{-R_U}\fBCS(\ee^{R_U}T/T_c)$ with the function $\fBCS(t)$ defined in Eq.\ \eqref{fBCS}, and it is therefore clear that
\begin{equation} 
\frac{\Delta(0,T)}{T_c} = \fBCS(T/T_c) + O(\ee^{-1/\lambda})
\label{eq:bcs_universal}
\end{equation} 
(mathematically inclined readers can find details of the proof in Appendix~\ref{app:approximation3}).

This completes our derivation of the universality of the gap ratio $\Delta(0,T)/T_c$. It is interesting to note that, combining all three of the main results in this section, we arrive at the following gap ratio for energies away from the Fermi level:
\begin{equation} 
\frac{\Delta(\eps,T)}{T_c} =  F(\eps)\fBCS(T/T_c)  + O(\ee^{-1/\lambda}).
\label{eq:nonuniversal_ratio}
\end{equation} 
From Eq.\ \eqref{eq:nonuniversal_ratio} we can clearly see that, while the gap ratio is universal at the Fermi level, it acquires non-universal corrections at finite energies away from the Fermi level. Moreover, these non-universal corrections take the form of the model-dependent multiplicative factor $F(\eps)$.

\section{Summary of results}
\label{sec:summary}
In Section~\ref{sec:universal} we derived three results about solutions to the BCS gap equation in Eq.\ \eqref{BCSgapEq}. 
In this section, we present a theorem summarizing these results in a way which is mathematically precise and, in particular, provide sufficient conditions on the function $\Lambda(\eps,\eps')$ under which these results hold true (Section~\ref{subsec:theorem}; the proof of this theorem can be found in Appendix~\ref{app:proof}). 
We also discuss how to use these results in practice (Section~\ref{subsec:bounds}). 

\subsection{Main theorem}
\label{subsec:theorem}
For clarity, and in the interest of making this section self-contained, we recall some definitions introduced in Section~\ref{sec:universal} that we need to formulate our result. First, we recall that all relevant model details are represented by a function $\Lambda(\eps,\eps')$, and this function determines three other important functions entering our solutions: 
\begin{equation}
\tag{\ref{Kdef}} 
    K(\eps,\eps')\equiv \frac{1}{2|\eps'|}\frac{\Lambda(\eps,0)}{\Lambda(0,0)}\left[\frac{\Lambda(\eps,\eps')}{\Lambda(\eps,0)}-\frac{\Lambda(0,\eps')}{\Lambda(0,0)} \right]
\end{equation}
and 
\begin{equation} 
\label{F0tF0} 
F_0(\eps)\equiv \frac{\Lambda(\eps,0)}{\Lambda(0,0)} ,\quad \tF_0(\eps)\equiv \frac{\Lambda(0,\eps)}{\Lambda(0,0)} . 
\end{equation} 
We recall the integral equation 
\begin{equation}
\int_0^{\infty}  d\veps \left\{ 
\frac{\tanh\frac{\sqrt{\veps^2+\fBCS(t)^2}}{2t}}{\sqrt{\veps^2+\fBCS(t)^2}} - 
\frac{\tanh\tfrac{\veps}{2}}{\veps}\right\}=0
\tag{\ref{fBCS}}
\end{equation}
defining the special function $\fBCS(t)$, $0\leq t\leq 1$; see Appendix~\ref{app:fBCS}, Lemma~\ref{lem:fBCS} for a summary of pertinent properties of this function. 

To prove our results, we need some technical conditions on the function $\Lambda(\eps,\eps')$ which we summarize as follows. 

\medskip

\begin{definition} 
\label{def} 
We call a real-valued function $\Lambda(\eps,\eps')$ of the two variables $\eps,\eps'\in\R$ {\em proper} if it satisfies the following conditions: 
\begin{itemize} 
\item[(i)] there exists $\alpha>1$ such that $|\Lambda(\eps,0)||\eps|^\alpha$ is finite for all $\eps\in\R$, 
\item[(ii)] $|\Lambda(\eps,\eps')/\Lambda(\eps,0)|$ is finite for all $\eps,\eps'\in\R$, 
\item[(iii)] there exists $\eps_0>0$ such that (a) $\Lambda(\eps,\eps')$ is $C^2$ \cite{remC2} in the region $|\eps|<\eps_0$, $|\eps'|<\eps_0$, (b) $\Lambda(\eps,\eps')$ is a piecewise $C^1$-function \cite{remC1} of $\eps\in\R$ for all fixed $\eps'$ such that $|\eps'|<\eps_0$, and (c) $\Lambda(\eps,\eps')$ is a piecewise $C^1$-function of $\eps'\in\R$ for all fixed $\eps$ such that $|\eps|<\eps_0$. 
\end{itemize} 
\end{definition} 

As proved in Appendix~\ref{app:proper} (see Lemma~\ref{lem:proper2}(b)), if $\Lambda(\eps,\eps')$ is proper, then the function  
\begin{equation} 
\Kmax(\eps) \equiv  \sup_{\eps'\in\R} |K(\eps,\eps')|
\tag{\ref{Kmax}}
\end{equation} 
is well-defined, and its norm
\begin{equation} 
\Norm{\Kmax}\equiv  \intR d\eps\, |\Kmax(\eps)|
\tag{\ref{normFmax}}
\end{equation} 
is finite, i.e., $1/\Norm{\Kmax}>0$. We note in passing that the conditions in Definition~\ref{def} guarantee also the finiteness of other norms that we need to rigorously prove our main result (see Lemma~\ref{lem:proper2}); 
however, since these other norms are not needed to state our theorem, we do not discuss them here. 

We are now ready to state our main result. 
%=====================================================================================================================
\begin{theorem} 
%=====================================================================================================================
\label{thm}
Let $\Lambda(\eps,\eps')$ be a real-valued function of the two variables $\eps,\eps'\in\R$ which is proper (in the sense of Definition~\ref{def}) and such that 
the parameter $\lambda\equiv \Lambda(0,0)$ is in the range 
\begin{equation} 
\label{range}
0<\lambda<\frac1{\Norm{\Kmax}}. 
\tag{\ref{eq:convergence_criterion}}
\end{equation} 
Then the BCS gap equation in \eqref{BCSgapEq} has a unique solution given by: 
\begin{equation} 
\frac{\Delta(\eps,T)}{T_c} =   F(\eps)\fBCS(T/T_c) + O(\ee^{-1/\lambda})
\tag{\ref{eq:nonuniversal_ratio}}
\end{equation}
where $\fBCS(t)$, $0\leq t\leq 1$,  is the universal BCS gap function determined by Eq.\ \eqref{fBCS}, the model-dependent function $F(\eps)$ satisfying $F(0)=1$ is determined by the integral equation:  
\begin{equation} 
F(\eps)=F_0(\eps) + \lambda \intR d\eps'\, K(\eps,\eps') F(\eps')
\tag{\ref{FEq}} 
\end{equation}  
with $K(\eps,\eps')$ defined in Eq.\ \eqref{Kdef}, and the critical temperature is given by: 
\begin{equation} 
\label{Tc} 
T_c = \frac{2\ee^{\gamma}}{\pi}\omega_c \exp\left(-\frac1\lambda + a(\lambda) + O(\ee^{-1/\lambda})\right)
\end{equation} 
where $\gamma\approx 0.577$ is the Euler-Mascheroni constant; here, $a(\lambda)$ is determined by the function $F(\eps)$: 
\begin{equation} 
\label{a} 
a(\lambda) =  \intR \frac{d\eps'}{2|\eps'|}\left[\tF_0(\eps')F(\eps')-\Theta(\omega_c-|\eps'|) \right],  
\end{equation}  
and $\omega_c>0$ is an arbitrary constant, i.e., $T_c$ is invariant under all transformations $\omega_c\to \omega_c'$. 

Moreover, the functions $F(\eps)$ can be represented by the series 
\begin{equation} 
\begin{aligned}
F(\eps) & =\sum_{n=0}^\infty F_n(\eps)\lambda^n , \\
F_{n}(\eps)&= \intR d\eps'\, K(\eps,\eps') F_{n-1}(\eps')\quad (n=1,2,\ldots), 
\end{aligned}
\label{Fseries} 
\end{equation} 
and, similarly, $a(\lambda)$ may be expressed as:
\begin{equation} 
\label{aseries1} 
a(\lambda)=\sum_{n=0}^\infty a_n\lambda^n 
\end{equation} 
with 
\begin{equation} 
\label{an}
\begin{split} 
a_0 &=  \intR \frac{d\eps'}{2|\eps'|}\left[\tF_0(\eps')F_0(\eps')-\Theta(\omega_c-|\eps'|) \right] , \\
a_n &=  \intR \frac{d\eps'}{2|\eps'|}\tF_0(\eps')F_n(\eps')\quad (n=1,2,\ldots). 
\end{split} 
\end{equation} 
Finally, the series in Eqs. \eqref{Fseries} and \eqref{aseries1} are both absolutely and uniformly convergent. 
%=====================================================================================================================
\end{theorem}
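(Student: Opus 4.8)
The plan is to prove the theorem in three stages mirroring the three results derived in Section~\ref{sec:universal}, tying them together by a self-consistent control of the exponentially small error terms. The analytic backbone is the observation below Eq.~\eqref{Deltaeps11}: once the gap- and temperature-dependence of the integration kernel is frozen, the ratio $\Delta(\eps,T)/\Delta(0,T)$ obeys a linear inhomogeneous Fredholm equation whose Neumann series is controlled by $\Norm{\Kmax}$. First I would fix the functional-analytic setting. By properness (Definition~\ref{def}) and Lemma~\ref{lem:proper2}, the kernel $K(\eps,\eps')$ defines a bounded integral operator $\mathcal{K}$ on $L^1(\R)$: condition (iii) guarantees that the apparent $1/|\eps'|$ singularity in Eq.~\eqref{Kdef} is integrable, since the bracket vanishes linearly as $\eps'\to 0$, while conditions (i)--(ii) supply the decay needed for $F_0\in L^1$ together with the finiteness of both $\Norm{\Kmax}$ and $\sup_\eps\Kmax(\eps)$. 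Because the $L^1\to L^1$ operator norm of $\mathcal{K}$ is bounded above by $\Norm{\Kmax}$, the range \eqref{range} makes $(I-\lambda\mathcal{K})$ boundedly invertible, which simultaneously yields existence and uniqueness of $F$ solving Eq.~\eqref{FEq}, with $F=\sum_n\lambda^n\mathcal{K}^n F_0$.

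The convergence statements then follow from the iterated estimate \eqref{eq:inequality_fn}: the geometric bound $\Norm{F_n}\le\Norm{\Kmax}^n\Norm{F_0}$ gives absolute convergence of the series \eqref{Fseries} in $L^1$, while the pointwise inequality $|F_n(\eps)|\le\Kmax(\eps)\Norm{F_{n-1}}$ combined with $\sup_\eps\Kmax(\eps)<\infty$ promotes this to uniform convergence on $\R$. Applying the same bounds termwise inside Eq.~\eqref{an} controls the $a_n$ and establishes the convergence of \eqref{aseries1}. That $F(0)=1$ is read off by evaluating \eqref{FEq} at $\eps=0$, since $F_0(0)=1$ and the bracket in $K(0,\eps')$ vanishes identically.

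Next I would justify reducing the exact gap equation to this linear problem and then prove universality. The bound \eqref{IIestimate} shows that replacing $I_{\Delta,T}$ by $I_{0,0}$ in Eq.~\eqref{Deltaeps11} costs at most $(4T+\pi\sup_{\eps'}|\Delta(\eps',T)|)\sup_{\eps'}|g(\eps,\eps')|$, which—granted the a priori scaling $T_c,\ \sup_{\eps'}|\Delta(\eps',T)|=O(\ee^{-1/\lambda})$—yields Eq.~\eqref{Deltaeps}, i.e.\ temperature-independence of the gap ratio to that order. For the universality of $\Delta(0,T)/T_c$ I would start from the exact identity \eqref{towardsfBCS}, insert this temperature-independence, rescale $\eps=\veps T_c$, and use that by the smoothness in condition (iii) the differences $\Lambda(0,\veps T_c)-\Lambda(0,0)$ and $\Delta(\veps T_c,T)-\Delta(0,T)$ are $O(T_c/\mu)=O(\ee^{-1/\lambda})$ over the effective support of the integrand, arriving at Eq.~\eqref{ffBCS0}. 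Comparing with the defining equation \eqref{fBCS}, and invoking the properties of $\fBCS$ in Lemma~\ref{lem:fBCS} together with the scaling relation established in Appendix~\ref{app:proof}, gives Eq.~\eqref{eq:bcs_universal}; multiplying by $F(\eps)$ produces the advertised form \eqref{eq:nonuniversal_ratio}. The critical-temperature formula \eqref{Tc} follows from Eq.~\eqref{Tc111} by substituting the explicit limit \eqref{Omega00eps} and showing $\log(\Omega_{T_c,0}(0)/\Omega_{0,0}(0))=O(\ee^{-1/\lambda})$; the asserted invariance under $\omega_c\to\omega_c'$ is verified by differentiating the exponent in \eqref{Tc} with respect to $\omega_c$, whereupon the explicit $\log\omega_c$ cancels against the $\theta(\omega_c-|\eps'|)$ contribution in \eqref{a}. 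Uniqueness of the full solution follows from uniqueness of $F$ (linear Fredholm) and of $\fBCS$ (unique by Lemma~\ref{lem:fBCS}).

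The main obstacle is the self-consistency of the error control: the estimate \eqref{IIestimate}, and likewise the passage \eqref{OmegaT0eps}--\eqref{Omega00eps}, presuppose that $\Delta$ and $T$ are themselves $O(\ee^{-1/\lambda})$, yet these are precisely the quantities being solved for. I would resolve this by a bootstrap that does not use the refined result: first derive a crude a priori upper bound $T_c\le C\,\ee^{-1/\lambda}$ directly from Eq.~\eqref{Tc111} and the leading behavior of $\Omega_{T_c,0}(0)$, together with $\sup_{\eps'}|\Delta(\eps',T)|\le C'\,T_c$ obtained from the reformulation \eqref{BCSgap1} and the boundedness of $F_0$; then feed these bounds back into \eqref{IIestimate} to close the argument and upgrade every $O(\ee^{-1/\lambda})$ claim to a rigorous inequality. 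A secondary technical point, relegated to the appendices (Appendices~\ref{app:approximation1}--\ref{app:approximation3}), is making precise the limit interchanges $T\to 0$ under the integral sign, which are justified by dominated convergence with dominating functions furnished by the properness conditions.
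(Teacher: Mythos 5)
Your proposal is correct and follows essentially the same route as the paper's proof in Appendix~\ref{app:proof}: the Neumann-series inversion of $(I-\lambda\cK)$ on $L^1(\R)$ controlled by $\Norm{\Kmax}$ (Lemma~\ref{lemma:key0}), the kernel-freezing error bound of Lemma~\ref{lemma:key1}, the rescaling argument leading to Eq.~\eqref{ffBCS0}, and the $T_c$ estimate via $\Omega_{0,0}(0)$. The only notable difference is organizational: where you close the self-consistency loop by a bootstrap on $T_c$ and $\sup_{\eps}|\Delta(\eps,T)|$, the paper first establishes boundedness of the exact nonlinear gap ratio by successive approximations exploiting $\tanh(E/2T)\,|\eps'|/E\leq 1$ (Lemma~\ref{lem:proper2}(d)) and then reads off the exponential smallness from Eqs.~\eqref{Tc111} and \eqref{eq:bcs_universal} --- the same idea in a different order.
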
 
%=====================================================================================================================
(The proof is given in Appendix~\ref{app:proof}.)

\bigskip

A few remarks are in order. 

First, we note that Eq.\ \eqref{fBCS} determines a unique, model-independent, function $\fBCS(t)$, $0\leq t\leq 1$, which can easily be computed to any desired accuracy; see Appendix~\ref{app:fBCS} for details, including a plot illustrating its behavior. 

Second, it follows, immediately, from these results that the ratio of the gaps at fixed energy but different temperatures is universal and independent of energy: 
\begin{equation} 
\label{DeltaT} 
\frac{\Delta(\eps,T)}{\Delta(\eps,0)} = \frac{\fBCS(T/T_c)}{\fBCS(0)}+O(\ee^{-1/\lambda}). 
\end{equation} 
This is a vast improvement over the universality of the gap ratio at the Fermi energy, $\eps=0$, discussed above. However, in contrast to this universal result, the ratio of the gaps at fixed temperature but different energies is given by the non-universal function $F(\eps)$: 
\begin{equation} 
\tag{\ref{Deltaeps}}
\frac{\Delta(\eps,T)}{\Delta(0,T)} = F(\eps)+ O(\ee^{-1/\lambda}). 
\end{equation} 
Thus, we see that the temperature-dependence of the gap is universal while the energy-dependence of the gap is non-universal. 

Third, regarding the arbitrary nature of the constant $\omega_c$, we note that the transformation $\omega_c\to \omega_c'$ changes $a(\lambda)\to a(\lambda)-\log(\omega_c/\omega_c')$. Therefore, $T_c$ is invariant under this transformation and, hence, without loss of generality, this constant may be set to any value which is most convenient for the calculation at hand.

We finally note that, since $\Lambda(\eps,\eps')$ is proportional to the interaction potential, $\hV(\eps,\eps')$, one can always satisfy the condition in Eq.\ \eqref{eq:convergence_criterion} by rescaling the magnitude of the interaction, $\hV(\eps,\eps')\rightarrow s\hV(\eps,\eps')$, so that $\lambda$ is sufficiently small. 
However, the functions $K(\eps,\eps')$, $F_0(\eps)$ and $\tilde{F}_0(\eps)$ are invariant under such a rescaling and, in particular, the norm $\Norm{\Kmax}$ governing the validity of our expansion is determined by the functional form of the interaction, but not its overall magnitude. 

\subsection{Approximations and Error Bounds}
\label{subsec:bounds}
One challenge in applying the above results to specific models is the numerical computation of the integrals which yield the coefficients of $F(\eps)$, 
\begin{equation} 
\begin{split} 
F_1(\eps)&=\int d\eps_1\, K(\eps,\eps_1)F_0(\eps_1),\\
F_2(\eps)&=\iint d\eps_1d\eps_2\, K(\eps,\eps_1)K(\eps_1,\eps_2)F_0(\eps_2), \\
&\vdots 
\end{split} 
\end{equation} 
It is clear that, in practice, one has to approximate the function $F(\eps)$ by truncating the series in Eq. \eqref{Fseries} at some finite order, $N$. 
Fortunately, the derivation in Section \ref{sec:universal} allows us to deduce a simple estimate of the accuracy of such an approximation, as we now explain.
  
Recall from Eq.\ \eqref{eq:inequality_fn} that the $n^{\text{th}}$ term in this series satisfies $\Norm{F_n}\lambda^n\leq\Norm{\Kmax}^n\lambda^n\cdot \Norm{F_0}$. From this inequality we can see that the relative size of the $n^{\text{th}}$ term in this series is controlled by the parameter 
\begin{equation} 
\label{delta}
\delta\equiv \lambda\Norm{\Kmax}.
\end{equation} 
Specifically, we find that the $N^{\text{th}}$-order approximation, 
\begin{equation} 
\label{Fseries2} 
F^{(N)}(\eps)=\sum_{n=0}^N F_n(\eps)\lambda^n, 
\end{equation} 
is accurate up to a remainder term that can be estimated from: 
\begin{equation} 
\label{RF} 
|F(\eps)-F^{(N)}(\eps)|\leq \lambda|\Kmax(\eps)|\Norm{F_0}\frac{\delta^{N}}{1-\delta}
\end{equation} 
with $\Kmax(\eps)$ defined in \eqref{Kmax}. Similarly, we find that 
\begin{equation} 
\label{TcN} 
T_c^{(N)} = \frac{2\ee^{\gamma}}{\pi}\omega_c \exp\left(-\frac1\lambda + \sum_{n=0}^N a_n\lambda^n \right)
\end{equation} 
is related to the exact $T_c$ in the following way
\begin{equation} 
\label{RTc} 
\left|T_c^{(N)}/T_c\right| \leq \exp\left( C_a \Norm{F_0}\frac{\delta^{N+1}}{1-\delta}\right) ,
\end{equation} 
where $\Norm{F_0}=\intR|F_0(\eps)|d\eps'$ and  
\begin{equation} 
\label{Ca}
C_a= \intR \frac{d\eps'}{2|\eps'|}\left|\tF_0(\eps')\right|\Kmax(\eps')
\end{equation} 
(mathematical proofs of these error bounds can be found in Appendix~\ref{app:approximation1}; see Lemma~\ref{lemma:key0}). 
As proved in Appendix~\ref{app:proper}, $\Norm{F_0}$ and $C_a$ are both finite for proper functions $\Lambda(\eps,\eps')$; see Lemma~\ref{lem:proper2}(a) and Corollary~\ref{cor:proper}.

Thus, for a given function $\Lambda(\eps,\eps')$, the numerical accuracy of our results can be characterized by three numbers: $\delta$, $\Norm{F_0}$, and $C_a$. The results in Eqs. \eqref{RF} and \eqref{RTc} provide a simple means to estimate the order, $N$, needed to reach the desired accuracy for a given coupling constant $\lambda$. 

While higher-order corrections should become increasingly important as $\lambda$ is increased towards $1/\Norm{\Kmax}$, it is important to note that low-order approximations can be accurate even if the magnitude of the interaction potential is large. 
In particular, for the models discussed in Section~\ref{sec:mod}, $\lambda=-\hV(0,0)N(0)$, and therefore, even in cases where the interaction strength, $|\hV(0,0)|$, is large, the density of states, $N(0)$, can act to suppress the magnitude of $\lambda$. Thus, as shown for this class of models in Ref.\ \cite{LTB}, the crudest approximation for $T_c$, 
\begin{equation} 
\label{Tcminus1} 
T_c^{(-1)} =  \frac{2\ee^{\gamma}}{\pi}\mu \ee^{-\frac1\lambda},  
\end{equation} 
can be accurate for a remarkably broad range of coupling constants. Here, $\mu$ is a particularly convenient choice of the parameter $\omega_c$ for the models discussed in Ref \cite{LTB}.

In the next section, we investigate some of the properties of the solutions to the gap equation, as given by Theorem \ref{thm}, for this class of models and several concrete examples of interaction potentials, $V(r)$.

\subsection{Conjecture}
\label{subsec:conjecture} 
The key tool in our proof of Theorem~\ref{thm} in Appendix~\ref{app:proof}     is the inequality 
\begin{equation} 
\left| \intR f(\eps)g(\eps)d\eps\right|\leq \Norminfty{f}\Norm{g}, 
\end{equation} 
allowing to estimate integrals of a product of two functions, $fg$, in terms of the so-called supremum norm, $\Norminfty{f}$,  and $L_1$-norm, $\Norm{g}$, of the functions $f$ and $g$; see \eqref{norm12} for the definition of these norms. 
However, it would be natural to try to prove a variant of this results where one uses instead the Cauchy-Schwarz inequality: 
\begin{equation} 
\left| \intR f(\eps)g(\eps)d\eps\right|\leq \Normtwo{f}\Normtwo{g},\quad 
\end{equation} 
with the $L_2$-norm 
\begin{equation} 
\Normtwo{f} = \left( \intR|f(\eps)|^2d\eps \right)^{1/2}. 
\end{equation} 
Using this, one can control the series in \eqref{Fseries} also by  
\begin{equation} 
\Normtwo{K}\equiv \left(\intR d\eps\intR d\eps'|K(\eps,\eps')|^2 \right)^{1/2} . 
\label{K2}
\end{equation} 
This suggests to us  that Theorem~\ref{thm} remains true with  \eqref{eq:convergence_criterion} replaced by the condition 
\begin{equation} 
0<\lambda< \frac1{\Normtwo{K}} \quad (?) 
\label{eq:weak_convergence_criterion} 
\end{equation} 
(we write the question mark to make clear that this is a conjecture). Moreover, we expect that not only \eqref{RF} is true, but also  
\begin{equation} 
\label{weakRF} 
|F(\eps)-F^{(N)}(\eps)|\leq  \lambda K_2(\eps)\Normtwo{F_0}\frac{\tilde\delta^{N}}{1-\tilde\delta}\quad (?), 
\end{equation}  
with $\tilde\delta\equiv \lambda \Normtwo{K}$ and 
\begin{equation} 
K_2(\eps)\equiv  \left(\intR d\eps'|K(\eps,\eps')|^2 \right)^{1/2} . 
\end{equation} 
If this was true, then one can replace the condition (i) in Definition~\ref{def} by the following weaker requirement,  
\begin{itemize} 
\item[(i')] there exists $\alpha>\frac12$ such that $|\Lambda(\eps,0)||\eps|^\alpha$ is finite for all $\eps\in\R$. 
\end{itemize} 
This extension of functions $\Lambda(\eps,\eps')$ to which Theorem~\ref{thm}   applies might appear minor at first sight. 
However, as we show in the next section, there are several interesting examples of pairing potentials which lead to functions $\Lambda(\eps,\eps')$ satisfying (i') but not (i); 
for these examples, the norms $\Normtwo{K}$ are finite, and we found empirically that the results of Theorem~\ref{thm} are accurate for $\lambda$-values in the range \eqref{eq:weak_convergence_criterion} 
despite of $\Norminfty{\Kmax}$ being infinite.

\section{Applications}
\label{sec:applications}
In this section we apply the results summarized in Sec \ref{sec:summary} to study the solutions of the BCS gap equation, Eq. \eqref{Gap}, for several concrete examples of interaction potentials. We demonstrate that a common feature of these solutions is the non-monotonic doping-dependence of critical temperatures, in agreement with previous results \cite{LTB}. Also, by comparing the maximum values of the critical temperature achievable for different functional forms of the interaction potential, we show that the spatial dependence of the interaction potential can change the value of the critical temperature by orders of magnitude, with all other physical parameters held constant. 

\subsection{Finite-range potentials}
\label{subsec:finiterange} 
For this section we focus our attention on a class of interaction potentials which have the following form in real space,  
\begin{equation} 
\label{VW} 
V(r) = -g \ell^{-3} W(r/\ell),
\end{equation}
with $g>0$ a constant describing the interaction strength and $\ell>0$ a length scale associated with the decay of the interaction in position space. Here, the spatial-dependence of the interaction is captured by the dimensionless function $W(x)$, depending on the dimensionless variable $x=r/\ell$, and normalized such that 
\begin{equation} 
\label{Wnormalization} 
\int_{\mathbb{R}^3} W(x)d^3x=4\pi\int_0^\infty W(x)x^2dx=1.
\end{equation}  
This condition ensures that, in the limit where the interaction range becomes zero, $\ell\to 0$, the interaction potential becomes local, $V(r)\to -g\delta^3(\mathbf{r})$. 

In Table~\ref{table:W} we list several examples of such functions, $W(x)$, satisfying these criteria. These examples include: the decaying exponential, the Lorentzian distribution, the gaussian distribution, the Yukawa potential, as well as box functions in both momentum and real space. We also find it convenient to define a class of functions which interpolates between exponential and Yukawa potentials: 
\begin{equation} 
\label{Wab} 
W_{a,b}(x) = (2a + bx) \frac{\ee^{-x}}{8\pi x}.
\end{equation} 
In Table~\ref{table:W} these are abbreviated as $a$Y$+b$E, $(a,b)=(2,-1)$ and $(1,-1)$. Throughout this section we will make use of all of these examples to illustrate various properties of the solutions of Eq.~\eqref{Gap}. 

It is straightforward to show that the Fourier transform of the potential in Eq. \eqref{VW} may be written as, 
\begin{equation} 
\label{VW_k} 
\hat{V}_{\textbf{k},\textbf{k}'} = -g \hat{W}(\ell|\textbf{k}-\textbf{k}'|)
\end{equation} 
with
\begin{equation} 
\label{hW} 
\hat{W}(q)\equiv \frac{4\pi}{q}\int_0^\infty W(x)\sin(xq)xdx. 
\end{equation} 
Note that the normalization condition in Eq. \eqref{Wnormalization} is equivalent to $\hat{W}(0)=1$. 

For many such potentials, rotation invariance is unbroken in the superconducting state, i.e., the physically relevant solution $\Delta_{\vk}(T)$ of Eq.~\eqref{Gap} depends only on $|\vk|$ (s-wave solutions). A sufficient condition for this to be the case is that the Fourier transform $\hat V_{\mathbf{k},\mathbf{k}'}$ is non-postitive \cite{FHNS}, i.e., 
\begin{equation} 
\label{Wcondition1}
\hat{W}(q)\geq 0 .  
\end{equation} 
Assuming that our potential is of this kind, it is useful to change variables in Eq. \eqref{hW} to $\eps=\eps(\vk)$, $\eps'=\eps(\vk')$. One thus obtains the gap equation in Eq.~\eqref{BCSgapEq} with 
\begin{equation} 
\label{eq:VeeNe} 
\begin{split} 
\Lambda(\eps,\eps') = & \frac{g}{g_0}\theta(\mu+\eps)\Theta(\mu+\eps')\\
\times & \frac{ f_W\left(\ell^2(p_\eps+p_{\eps'}\right)^2) -  f_W\left(\ell^2(p_\eps-p_{\eps'}\right)^2)}{2\ell p_\eps}, 
\end{split}
\end{equation} 
where $p_\eps\equiv \sqrt{2m^*(\mu+\eps)}$, and we define the following function, determined by $W(x)$,  
\begin{equation} 
\label{fW}
f_W(\veps)\equiv \int_0^\veps \hat{W}(\sqrt{\veps'})d\veps'; 
\end{equation} 
here and in the following, we find it convenient to measure the coupling constant $g$ in units of 
\begin{equation} 
g_0 \equiv \frac{(2\pi)^2 \ell}{m^*}. 
\end{equation} 
From Eqs. (\ref{eq:VeeNe}) and (\ref{fW}) it is clear that $\lambda$ may be written as, 
\begin{equation} 
\lambda = \frac{g}{g_0}\frac{f_W([2\ell k_F]^2)}{2\ell k_F},\quad k_F\equiv \sqrt{2m^*\mu}
\end{equation} 
(the interested reader can find details on how these formulas are obtained in Appendix~\ref{app:model}). In many cases of physical interest, $f_W(\veps)$ is given by simple explicit formulas; see Table~\ref{table:W} for examples. 

As we discuss in more detail below, the simple model studied in the seminal BCS paper \cite{BCS} can be regarded as a limiting case where the interaction becomes local, $V(r)\to -g\delta^3(\boldsymbol{r})$. However, such a local interaction leads to short-distance divergences in the gap equation, Eq. \eqref{Gap}, and the BCS potential can be regarded as a simple ad-hoc regularization of these divergences. 
Thus, it is important to note that the gap equation is mathematically well-defined provided the function $W(x)$ in Eq. \eqref{VW} satisfies the following condition \cite{FHNS}:  
\begin{equation} 
\label{Wcondition2}
\int_0^\infty |W(x)|^p x^2 dx <\infty \; \mbox{ for } \; 1\leq p\leq \frac32. 
\end{equation} 
This condition allows for potentials $V(r)$ which can be singular at short distances and/or decay algebraically at large distances. More specifically, allowed potentials $V(r)$ can diverge like $r^{-c}$ as $r\to 0$ with  $c<2$, and they can decay like $r^{-c'}$ for $r\to\infty$ provided $c'>3$.

We have checked in several examples that the conditions in \eqref{Wcondition1} and \eqref{Wcondition2} guarantee that the function $\Lambda(\eps,\eps')$ is proper in the sense of Definition~\ref{def}. While this is true for four of our examples shown in Table \ref{table:W} (exponential, lorentzian, $k$-box, and gaussian), it does not hold for all of them (yukawa, $Y-E$, $2Y-E$, and $x$-box). However, as we discuss below, there are reasons to suspect that a version of the proof could be formulated to include these functions. 

\onecolumngrid
\begin{widetext} 
\begin{table}[htb]
\caption{Examples of functions $W(x)$ determining finite-range potentials as in Eq.~\eqref{VW}, together with their Fourier transforms $\hat{W}(q)$ and associated functions $f_W(\veps)$.}
\begin{tabular}{ |c| l| l| l| l| l|}
\hline\hline
Norms & Potential & $W(x)$ & $\hat{W}(q)$ & $f_W(\veps)$   \\ [0.6ex]
\hline
 & Exponential $\qquad $ & $\exp(-x)/8\pi$ & $1/(1+q^2)^2$ & $\veps/(1+\veps)$ \\ [0.6ex]

$||\Kmax||_1<\infty$  & Lorentzian & $1/\pi^2(1+x^2)^2$  & $\exp(-|q|)$ & $2\left(1-\exp(-\sqrt{\veps})\left(1+\sqrt{\veps}\right)\right)$ \\ [0.6ex]

$||K_2||_2<\infty$ & $k$-box & $[\sin(x)-x\cos(x)]/2\pi^2x^3$ $\qquad $ & $\theta(1-|q|)$ & $\veps\theta(1-\veps)+\theta(\veps-1)$ \\ [0.6ex]

& Gaussian& $\exp(-x^2/2)/(2\pi)^{3/2}$ & $\exp(-q^2/2)$ & $2\left(1-\exp(-\veps/2) \right)$ \\ [0.6ex]
\hline
& Yukawa & $\exp(-x)/4\pi x$ & $1/(1+q^2)$ & $\ln(1+\veps)$ \\ [0.6ex]

$||\Kmax||_1=\infty$  & Y$-$E & $(2-x)\exp(-x)/8\pi x$ & $q^2/(1+q^2)^2$ & $\ln(1+\veps)-\veps/(1+\veps)$ \\ [0.6ex]

$||K_2||_2<\infty$ & 2Y$-$E & $(4-x)\exp(-x)/8\pi x$ & $(1+2q^2)/(1+q^2)^2$ & $2\ln(1+\veps)-\veps/(1+\veps)$ \\ [0.6ex]

& $x$-box &  $3\theta(1-|x|)/4\pi$  &  $3[\sin(q)-q\cos(q)]/q^3$ $\qquad $  & $6\left(1-\sin(\sqrt{\veps})/\sqrt{\veps}\right)$ \\ [0.6ex]
\hline
\end{tabular}
\label{table:W}
\end{table}
\end{widetext} 
\twocolumngrid

The model above motivates our work, and we use it to illustrate our results. However, we stress again that the results summarized in Theorem \ref{thm} are more general and apply to a much broader class of model interactions than the ones described by Eq. \eqref{VW}. 

In our following discussion it will be useful to simplify formulas by setting $2m^*=1$. 
Then $T_c\ell^2$, $\mu\ell^2$, and $g/\ell$ are dimensionless (recall that we set $k_B=\hbar=1$). 
One can easily transform to physical units as follows, 
\begin{equation}
\label{units}  
T_c \ell^2 \to k_BT_c/E_0,\quad \mu\ell^2 \to \mu/E_0,\quad g/\ell \to g/E_0\ell^3
\end{equation} 
with $E_0=\hbar^2/2m^*\ell^2$ the natural scale in our problem. Note that, in terms of this energy scale, we can write $g_0 = 2(2\pi)^2 E_0\ell^3$. 

\subsection{BCS pairing interaction}
\label{sec:BCS} 
To set our results in perspective, we discuss the solution of the model with the BCS pairing interaction, as solved in the seminal BCS paper \cite{BCS}. As we will see, the classic results for the energy-dependent gap function and the critical temperature may be recovered as a special case of Theorem~\ref{thm}. 

The BCS gap equation in Eq. \eqref{Gap} is, in general, difficult to solve directly. However, as proposed in \cite{BCS}, if one assumes that $\hat{V}_{\vk,\vk'} =-g$ is non-zero and independent of $\vk$ and $\vk'$ in energy shells of width $\omega_D\ll\mu$ around the Fermi surfaces $\eps_{\vk}=0$ and $\eps_{\vk'}=0$, then Eq. \eqref{Gap} is reduced to Eq. \eqref{BCSgapEq} with the BCS pairing potential 
\begin{equation} 
\label{VBCS} 
\hV_{\mathrm{BCS}}(\eps,\eps')N(\eps) = -\lambda\theta(\omega_D-|\eps|)\theta(\omega_D-|\eps'|)
\end{equation} 
with $\lambda=gN(0) >0$. 
The solution of this problem was obtained in \cite{BCS} and it was found that the superconducting gap is given by 
\begin{equation} 
\label{gapBCS} 
\frac{\Delta(\eps,T)}{T_c} = \theta(\omega_D-|\eps|) \fBCS(T/T_c)  + O(\ee^{-1/\lambda}) 
\end{equation} 
for $0\leq T\leq T_c$, with the superconducting critical temperature 
\begin{equation} 
\label{TcBCS} 
T_c = \frac{2\ee^{\gamma}}{\pi}\omega_D \exp\left( -\frac1{\lambda} + O(\ee^{-1/\lambda})\right) 
\end{equation} 
and the universal BCS gap function $\fBCS(t)$ defined by Eq.\ \eqref{fBCS}. Thus, while the universal temperature dependence of the gap is correctly captured by this simple model, the energy dependence of the gap is fixed and put in by hand: $F(\eps)=\theta(\omega_D-|\eps|)$. Moreover, since $\lambda = g N(0)\propto \sqrt{\mu}$, the BCS $T_c$-equation predicts a monotonic increase of $T_c$ with the chemical potential $\mu$. As discussed in Section \ref{sec:F}, these latter predictions are an artifact of the simplified interaction in Eq. \eqref{VBCS}. 

It is worth stressing that the BCS solution in Eqs. \eqref{gapBCS} and \eqref{TcBCS} is not exact but contains correction terms. Thus, Theorem~\ref{thm} is a mathematically rigorous generalization of the classic BCS results to a much broader class of interaction potentials.

\subsection{Superconducting domes}
\label{sec:domes} 
To display the general behavior of the critical temperature for the potentials shown in Table \ref{table:W}, we plot $T_c$ as function of the chemical potential $\mu$, for coupling constant, $g=g_0/4$, in Fig. \ref{fig:Figure_tc1}, for those interaction potentials satisfying $||K_{sup}||_1<\infty$. In each case we show three different levels of approximation to the infinite series in $\lambda$: $T_c^{(-1)}$, $T_c^{(0)}$, and $T_c^{(1)}$, as defined in Eqs. (\ref{Tcminus1}) and (\ref{TcN}). 

Notice, in Fig. \ref{fig:Figure_tc1} that there is remarkable agreement between the $T_c^{(0)}$ and $T_c^{(1)}$ approximations for all examples. In fact, for several of the examples shown the two curves appear to overlap for much of the ranges of $\mu$ considered. Moreover, even the lowest order approximation, $T_c^{(-1)}$, captures both the general features and orders of magnitude of $T_c$ in all cases. However, it is important to note that, even though the total coupling strength $g=-\int V(r)d^3r$ is the same for all cases, the temperature scales for different potentials can be seen to vary over several orders of magnitude. 

\begin{figure*}
 \begin{center}
  \centering
\includegraphics[width=0.75\textwidth]{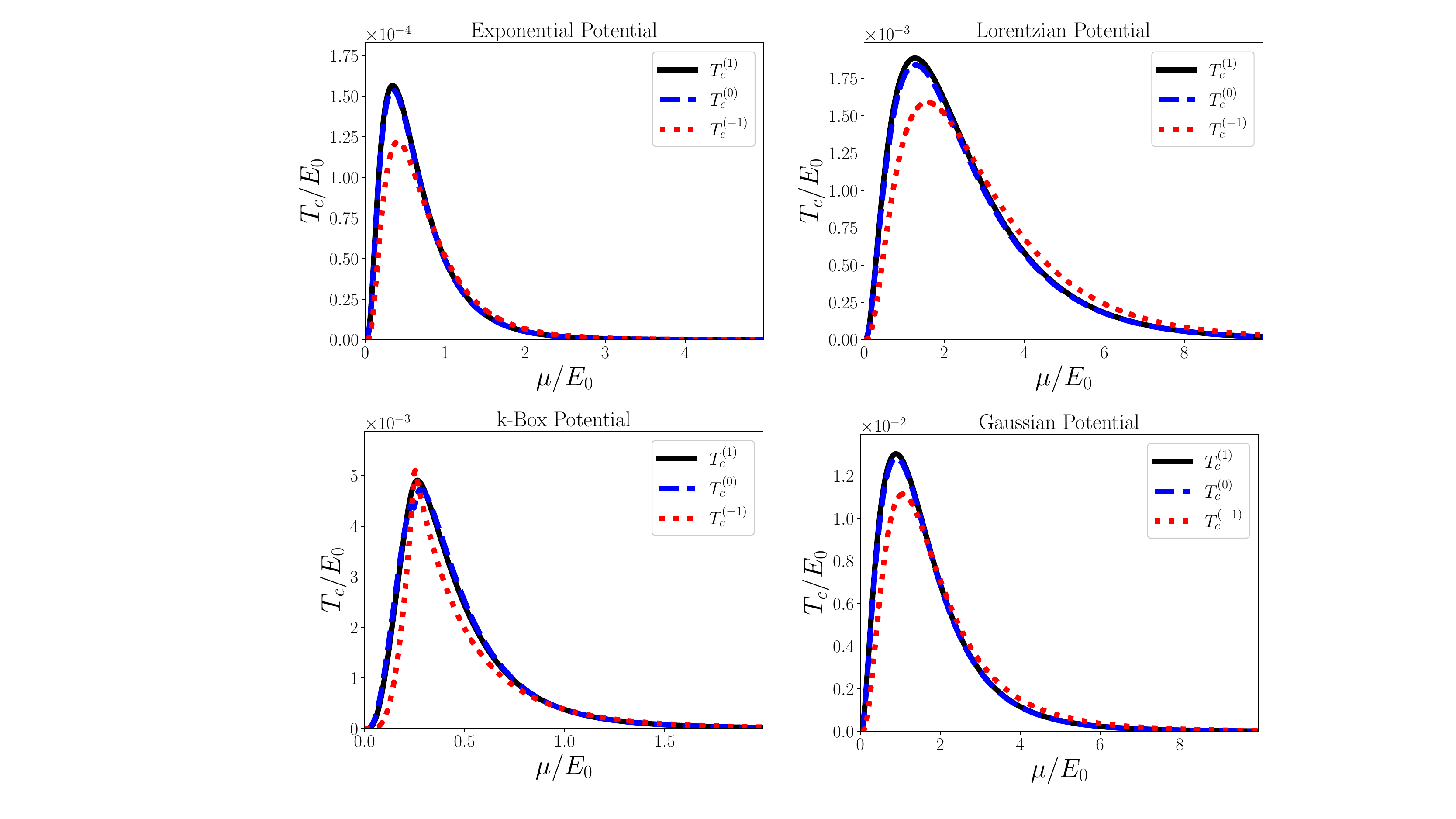}
\caption{Critical temperature $T_c$ as a function of the chemical potential $\mu$ for the four finite-range potential examples given in Table \ref{table:W} which satisfy both $||K_{sup}||_1<\infty$ and $||K_2||_2<\infty$: Exponential, Lorentzian, $k$-box, and Gaussian. In each case we show: (red/dotted) $T_c^{(-1)}$, computed by truncating Eq.~\eqref{Tc} at the $1/\lambda$ order; (blue/dashed) $T_c^{(0)}$, computed by truncating Eq.~\eqref{Tc} at the $a_0$ order; and (black/solid) $T_c^{(1)}$ computed by truncating Eq.~\eqref{Tc} at the $a_1$ order. All energies are reported in units of $E_0=1/2m^*\ell^2$, and the coupling constant is set so that $g/g_0=0.25$. 
}
\label{fig:Figure_tc1}
\end{center}
\end{figure*}

A more quantitative estimate of the accuracy of the results shown in Fig. \ref{fig:Figure_tc1} can be obtained by examining the convergence criterion in Eq. (\ref{eq:convergence_criterion}). Recall from Sec.~\ref{sec:summary} that the error in the series expansion of $T_c$ can be estimated from the product $\delta\equiv ||\Kmax||_1 \lambda$. When $\delta<1$ it was found that the series representing $T_c$, Eq.~\eqref{Tc}, converges. Furthermore, we note that for $\delta\ll 1$ it is straightforward to show that fewer terms from Eq.~\eqref{Tc} need to be accounted for to give quantitatively accurate predictions of $T_c$. In Fig. \ref{fig:four_panel_k1_finite}a we plot $||\Kmax||_1 g_0\lambda/g$ for the example potentials in Table \ref{table:W} for which this is well-defined, as function of $\mu$. For each example we have normalized the chemical potentials to the value $\mu=\mu_{max}$, defined as the value of $\mu$ for which $T_c(\mu)$ achieves its maximum value. 

Notice, in Fig. \ref{fig:four_panel_k1_finite}a, that for $g=g_0/2$ we find that $||\Kmax||_1 \lambda<1$ in all cases, provided $\mu>\mu_{max}$. Moreover, we see that for the exponential and lorentzian potentials the series in Eq.~\eqref{Tc} should converge for a wide range of $\mu$ even when $g\approx g_0$, providing an explanation for the remarkable agreement between the various approximations shown in Fig. \ref{fig:Figure_tc1}. Additionally, for all examples, we see that $||\Kmax||_1 \lambda$ generally decreases for large $\mu$, indicating that the lower order approximations should improve at higher dopings. In Fig. \ref{fig:four_panel_k1_finite}b, we plot $L^2$ norm of the function $K(\eps,\eps')$, defined in Eq. (\ref{K2}), showing that for these examples in which both $||K_2||_2$ and $||K_{sup}||_1$ exist, $||K_2||_2$ is always less than $||K_{sup}||_1$. These comparisons will be useful to keep in mind when discussing the non-proper examples in Table \ref{table:W}.

\subsection{Pairing Efficiency}
\label{sec:efficiency}
As shown in Fig. \ref{fig:Figure_tc1}, as well as Ref. \cite{LTB}, it is not only the strength of the interactions, $g$, which determine the maximum value of $T_c$ but, in fact, the functional form of the interaction potential in position space, $W(x)$, can play a dominant role in determining the magnitude of $T_c$. We will now present a simple method for comparing the maximum values of $T_c$ achievable by various interaction potentials having the form of Eq.\ \eqref{VW}. 

\begin{figure*}
 \begin{center}
  \centering
\includegraphics[width=0.75\textwidth]{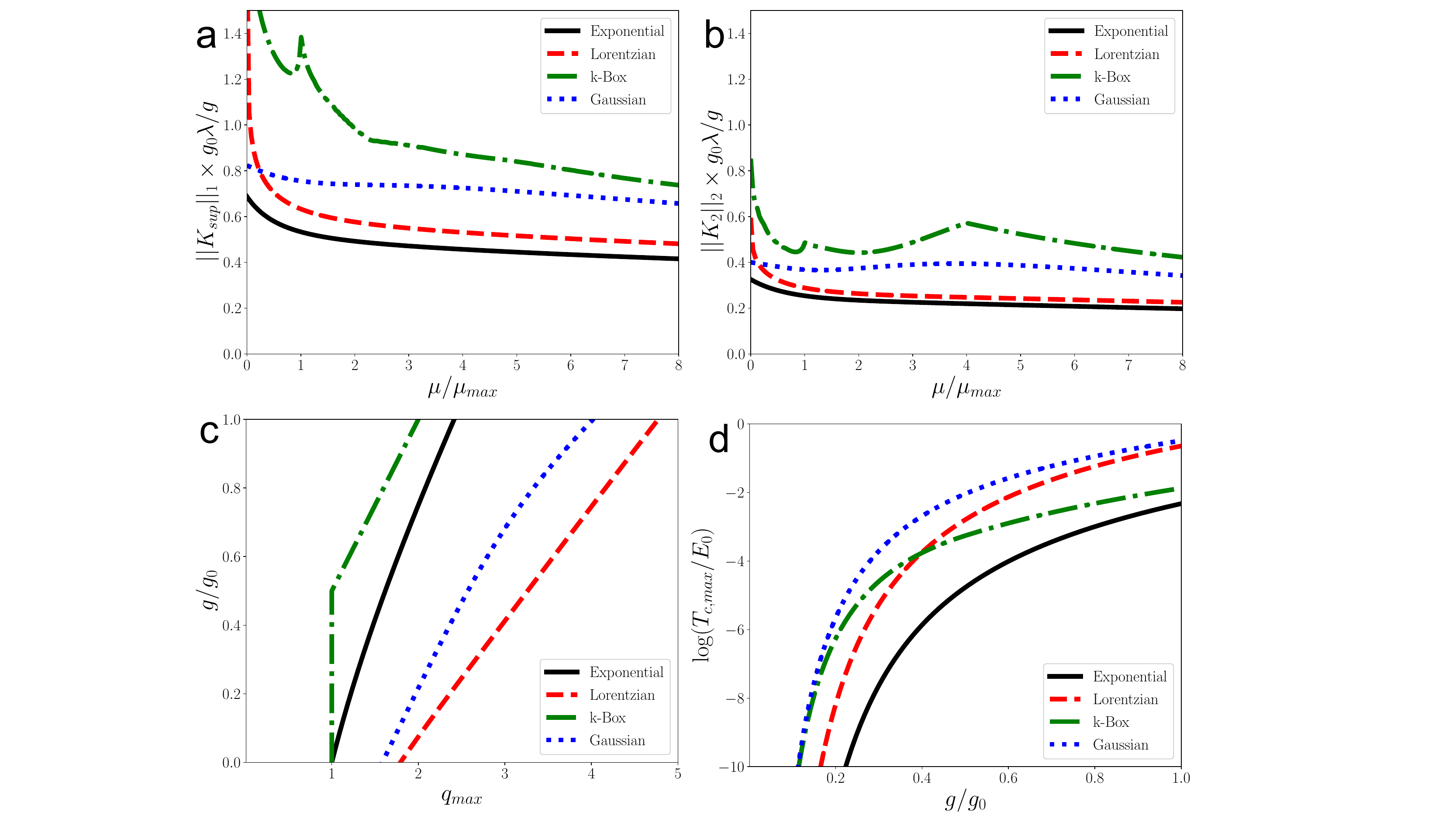}
\caption{Plots of: (a) $||\Kmax||_1 g_0\lambda/g$ as a function of $\mu$; (b) $||K_2||_2 g_0\lambda/g$ as a function of $\mu$; (c) $g/g_0$ as a function of $q_{max}$; (d) $\log(T_{c,max}/E_0)$ as a function of $g/g_0$; for the proper example potentials in Table \ref{table:W} (see legend). In each case, we normalize $\mu$ to the value $\mu_{max}$ associated with the maximum value of $T_c(\mu)$. 
}
\label{fig:four_panel_k1_finite}
\end{center}
\end{figure*}

We start by assuming that $\lambda$ is small enough so that the lowest-order approximation of $T_c$, Eq.\ \eqref{Tcminus1}, may be used as a good estimate. Then, the critical temperature may be written as:
\begin{equation}
T_c=\frac{2e^\gamma}{\pi} E_0 (k_F\ell)^2 \exp\left[-\frac{g_0}{g}\frac{2k_F \ell}{f_W\left([2k_F \ell]^2 \right)}\right]
\end{equation}
where $g_0\equiv (2\pi)^2\ell/m^*$ and $E_0=1/2m^*\ell^2$.

We can see by inspection that, in general, $T_c(k_F)$ is a non-monotonic function with a maximum value, $T_{c,\mathrm{max}}$, at some finite value of $k_F=k_{F,\mathrm{max}}$. 
It is straightforward to show that the maximum $T_c$ is achieved at $k_F=\qmax/2\ell$ where $q=\qmax$ is a solution to the equation  
\begin{equation}
    \frac{g}{g_0}=\frac{q}{2f_W(q^2)}\left[1-\frac{2q^2 f'_W(q^2)}{f_W(q^2)} \right]\quad (q= 2\ell k_F).
    \label{eq:q0def}
\end{equation}
From Eq.\ \eqref{eq:q0def} it is clear that, in general, $T_{c,\mathrm{max}}$ will depend on both the coupling constant, $g$, and on the functional form of the interaction which determines $f_W$. In Fig. \ref{fig:four_panel_k1_finite}c we show the relationship between $g$ and $q_{max}$ for the proper example potentials in Table \ref{table:W}. 

With the above definitions, we define $\lambda_{\mathrm{max}}$ by the equation
\begin{equation} 
T_{c,\mathrm{max}} = \frac{2\ee^{\gamma}}{\pi}E_0 \left(\frac{\qmax}{2}\right)^2\ee^{-1/\lambda_{\mathrm{max}}}. 
\label{eq:tcmax}
\end{equation} 
The quantities $\qmax$, $\lambda_{\mathrm{max}}$, and $T_{c,\mathrm{max}}/E_0$ characterize the magnitude of the possible critical temperatures that can be achieved for a given interaction described by $W(x)$ for a fixed coupling strength $g/g_0$ and, hence, represent the ``pairing efficiency" of $W(x)$. 
Two other parameters characterizing the shape of the superconducing dome are the $k_F$-values where $T_c(k_F)$ is half its maximum value: $T_c(k_{F})=T_{c,\mathrm{max}}/2$; clearly, there are two such values: $k_F=\qhalfone/2\ell$ in the underdoped regime, and $k_F=\qhalftwo/2\ell$ in the overdoped regime. 
In Table \ref{table:W1} we give these parameters $\qmax$, $\lambda_{\mathrm{max}}$, $T_{c,\mathrm{max}}/E_0$, $\qhalfone$ and $\qhalftwo$ for the potential functions $W(x)$ defined in Table \ref{table:W}.

\begin{table}[htb]
\caption{Parameters characterizing the pairing efficiency of the finite-range interaction potentials defined in Table~\ref{table:W} for $g/g_0=0.25$, as explained in the text.}
\begin{tabular}{|l|c|c|c|c|c|r}
\hline
  & $\qmax$ & $\lambda_{\mathrm{max}}$ & $\frac{T_{c,\mathrm{max}}}{E_0}$ & $\frac{\qhalfone}{\qmax}$ & $\frac{\qhalftwo}{\qmax}$  \\ [0.6ex]
\hline
Exponential $\qquad $ & $\;\;1.28\;\;$ & $\;\;0.121\;\;$ & $1.2\cdot10^{-4}$& $\;\;0.66\;\;$ & $\;\;1.49\;\;$ \\ [0.6ex]

Lorentzian & $2.51$ & $0.142$ & $1.6\cdot 10^{-3}$& $0.62$ & $1.53$ \\ [0.6ex]

$k$-box & $1.00$ & $0.250$ & $5.2\cdot 10^{-3}$& $0.90$ & $1.31$ \\ [0.6ex]

Gaussian & $2.07$ & $0.213$ & $1.1\cdot 10^{-2}$& $0.65$ & $1.48$ \\ [0.6ex]
\hline
Yukawa & $4.16$ & $0.175$ & $1.6\cdot 10^{-2}$& $0.49$ & $1.98$ \\ [0.6ex]
 
Y$-$E & $5.32$ & $0.113$ & $1.2\cdot 10^{-3}$& $0.60$ & $1.72$ \\ [0.6ex]

2Y$-$E & $7.85$ & $0.232$ & $2.4\cdot 10^{-1}$& $0.42$ & $2.14$ \\ [0.6ex]

$x$-box & $4.30$ & $0.423$ & $4.9\cdot 10^{-1}$& $0.63$ & $1.37$ \\ [0.6ex]
\hline
\end{tabular}
\label{table:W1}
\end{table}

In Fig. \ref{fig:four_panel_k1_finite}d we plot $\log(T_{c,max}/E_0)$ for the proper example potentials in Table \ref{table:W}, as calculated from Eq. (\ref{eq:tcmax}), to demonstrate its dependence on the coupling strength $g/g_0$. It is clear from this comparison of the logarithms of $T_{c,max}/E_0$ that the order magnitude of the critical temperature depends on both the coupling constant $g$ and the functional form of the interaction potential. Thus, different interaction potentials can possess drastically different pairing efficiencies. This helps to drive home of the point that the accurate description of the superconducting critical temperature requires a detailed understanding of the underlying interaction potential. 

\begin{figure*}
 \begin{center}
  \centering
\includegraphics[width=0.75\textwidth]{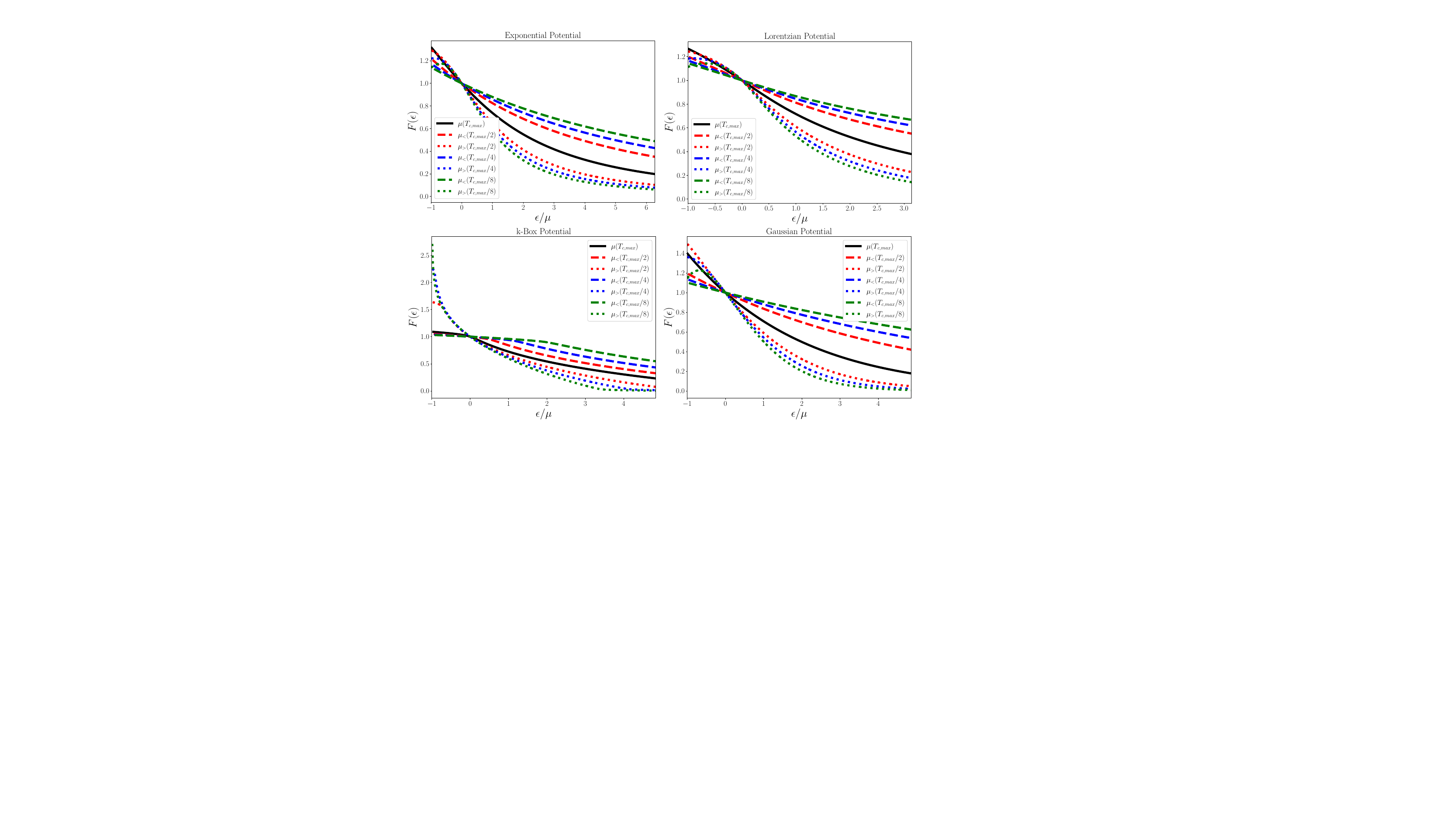}
\caption{Gap ratio, $F(\eps)\simeq \Delta(\eps,T)/\Delta(0,T)$, for the finite-range potential examples given in Table \ref{table:W} satisfying $||K_{sup}||_1<\infty$. In each case, we set $g/g_0=0.25$, $\eps$ is expressed in units of the chemical potential, $\mu$, and seven different values of $\mu$ are considered: 
$\mu(T_{c,max})$ (black/solid), $\mu_{<}(T_{c,max}/2)$ (red/dashed), $\mu_{>}(T_{c,max}/2)$ (red/dotted), $\mu_{<}(T_{c,max}/4)$ (blue/dashed), $\mu_{>}(T_{c,max}/4)$ (blue/dotted),  $\mu_{<}(T_{c,max}/8)$ (green/dashed), $\mu_{>}(T_{c,max}/8)$ (green/dotted). In each case, $\mu_<(T)$ ($\mu_>(T)$) represents the smaller (larger) value of $\mu$ associated with the temperature $T$. 
} 
\label{fig:gap_ratios_k1_finite}
\end{center}
\end{figure*}

\subsection{Energy dependence of gap}
\label{sec:F} 
Now that we have demonstrated the deep connection between the value of a superconductor's critical temperature and the functional form of the underlying interaction potential, it is desirable to develop an understanding of how this interaction potential could be characterized in a real superconductor. One quantity that can be measured using standard spectroscopic techniques, is the energy-resolved superconducting gap ratio, $F(\eps)\equiv \Delta(\eps,T)/\Delta(0,T)$, which can be calculated to arbitrary order in $\lambda$ using the series in Eq. (\ref{FEq}). 

In Fig. \ref{fig:gap_ratios_k1_finite} we plot the gap ratio to linear order in $\lambda$, $F(\eps)\approx F^{(0)}(\eps) + \lambda F^{(1)}(\eps)$, for the proper example potentials in Table \ref{table:W}. Note that the general trends are similar in each case, specifically, the ratio is equal to unity at the Fermi level (by definition) and generally decreases for $\eps>0$. However, there are marked differences between the behaviors exhibited by the different potentials for $\eps<0$.

\subsection{Non-proper Examples}
\label{sec:non_proper_examples}
In this subsection we turn our attention to the subset of examples in Table~\ref{table:W} which do not strictly satisfy the assumptions of Theorem~\ref{thm}. While this set of example potentials appear to yield convergent results for $T_c$ using the results of Theorem~\ref{thm}, they do not provide $\Lambda(\eps,\eps')$'s which are proper according to Definition~\ref{def} and they do not exhibit convergent values of $||K_{sup}||_1$. However, as we show, these examples do yield convergent norms of the form given in Eq. (\ref{K2}).

As discussed in Sec \ref{sec:universal}, this $L^2$ norm can replace $||K_{sup}||_1$ in some of the steps of our derivations leading to Theorem~\ref{thm}. However, it is not clear if these replacements can be done in such a way that allows us to prove a modified version of Theorem~\ref{thm}. Moreover, it is not clear whether that theorem would ultimately be more or less restrictive than the one presented in this work. This being said, the behavior of these anomalous examples is suggestive of the possibility of such an $L^2$-based theorem, therefore we show these results to help motivate this conjecture.

\begin{figure*}
 \begin{center}
  \centering
\includegraphics[width=0.75\textwidth]{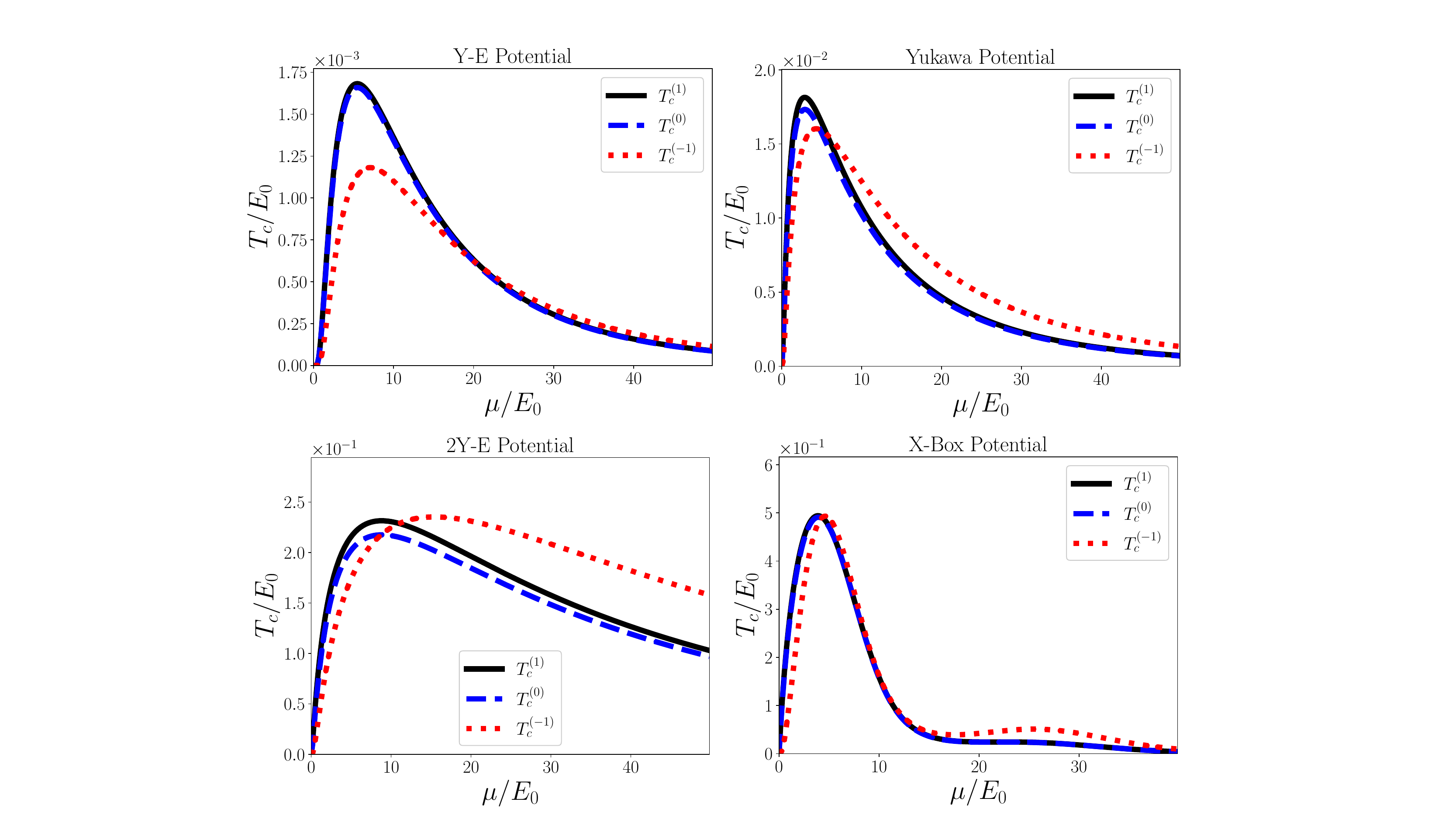}
\caption{Critical temperature $T_c$ as a function of the chemical potential $\mu$ for the four finite-range potential examples given in Table \ref{table:W} which only satisfy $||K_2||_2<\infty$: Yukawa, Y-E, 2Y-E, and $x$-box. In each case we show: (red/dotted) $T_c^{(-1)}$, computed by truncating Eq.~\eqref{Tc} at the $1/\lambda$ order; (blue/dashed) $T_c^{(0)}$, computed by truncating Eq.~\eqref{Tc} at the $a_0$ order; and (black/solid) $T_c^{(1)}$ computed by truncating Eq.~\eqref{Tc} at the $a_1$ order. All energies are reported in units of $E_0=1/2m^*\ell^2$, and the coupling constant is set so that $g/g_0=0.25$. 
}
\label{fig:Figure_tc2}
\end{center}
\end{figure*}

In Fig. \ref{fig:Figure_tc2} we show results for the critical temperature as a function of $\mu$, as in Fig. \ref{fig:Figure_tc1}, for the potentials with non-convergent $||K_{sup}||_1$. Notice that the general features are very similar to those illustrated in Fig. \ref{fig:Figure_tc1}, with non-monotonic superconducting domes and values of $T_c$ spanning several orders of magnitude depending on the functional form of the interaction. However, in contrast to the proper examples in Fig. \ref{fig:Figure_tc1}, the examples in Fig. \ref{fig:Figure_tc2} lead to larger values of $T_c$ and exhibit a much slower decay of $T_c$ as a function of $\mu$, mirroring the slow decay of these potentials as a function of energy, shown in Table \ref{table:W}. Additionally, in the case of the $x$-box potential, we see that more than one local maximum exists in $T_c(\mu)$, a feature not observed for any of the other potentials in Table \ref{table:W}.
\begin{figure*}
\begin{center}
  \centering
\includegraphics[width=0.75\textwidth]{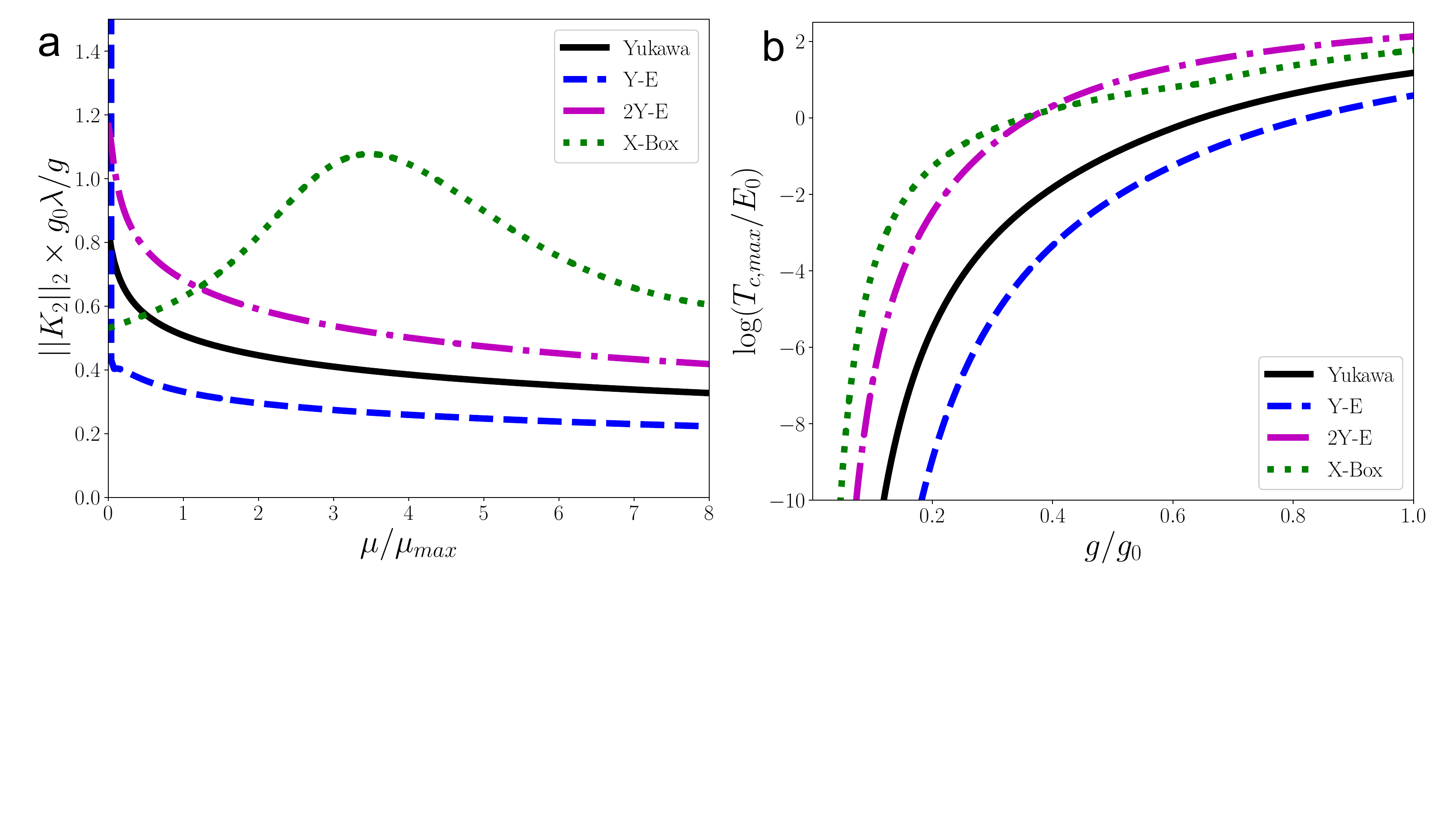}
\caption{Plots of: (a) $||K_2||_2 g_0\lambda/g$ as a function of $\mu$, and (b) $\log(T_{c,max}/E_0)$  as a function of $g/g_0$, for the non-proper example potentials in Table \ref{table:W} (see legend). In each case, we normalize $\mu$ to the value $\mu_{max}$ associated with the maximum value of $T_c(\mu)$. 
}
\label{fig:two_panel_k2_finite}
\end{center}
\end{figure*}

In Fig. \ref{fig:two_panel_k2_finite}a we show plots of $||K_2||_2$, Eq. (\ref{K2}), for the four non-proper examples considered here, demonstrating the finite behavior of this norm for a wide range of dopings. Comparing these to the corresponding norms in Fig. \ref{fig:four_panel_k1_finite} we see that for these non-proper examples $||K_2||_2$ is generally larger than for the proper examples. Furthermore, we note that the Y-E potential appears to exhibit especially strong divergent behavior near $\mu=0$. This coincides with a somewhat more rapid vanishing of $T_c$ near $\mu=0$, when compared to the other cases in Fig. \ref{fig:Figure_tc2}, as well as the exact vanishing of interactions at $\eps=0$. 

To demonstrate the wide variability of $T_{c,max}$ for these examples, we plot $\log(T_{c,max}/E_0)$ in Fig. \ref{fig:two_panel_k2_finite}b as a function of $g/g_0$, similar to Fig. \ref{fig:four_panel_k1_finite}d. In these cases we confirm that several orders of magnitude are spanned for each potential as a function of $g/g_0$, as in the case of the proper examples. Moreover, for the examples in Fig. \ref{fig:two_panel_k2_finite}b, we see that significantly larger values of $T_c$ can be achieved for these cases, compared to any of the examples in Fig. \ref{fig:four_panel_k1_finite}d, as indicated by the fact that the scale goes up to 2 on this log-plot.

For completeness, we include plots of the gap ratio for the non-proper examples in Fig. \ref{fig:gap_ratios_k2_finite}. Notice the similar behavior exhibited by all Yukawa-based potentials, as compared to the $x$-box, or any of the examples in Fig. \ref{fig:gap_ratios_k1_finite}. These distinctions could allow the discrimination of the spatial dependence of interaction potentials based on gap ratio measurements.
\begin{figure*}
 \begin{center}
  \centering
\includegraphics[width=0.75\textwidth]{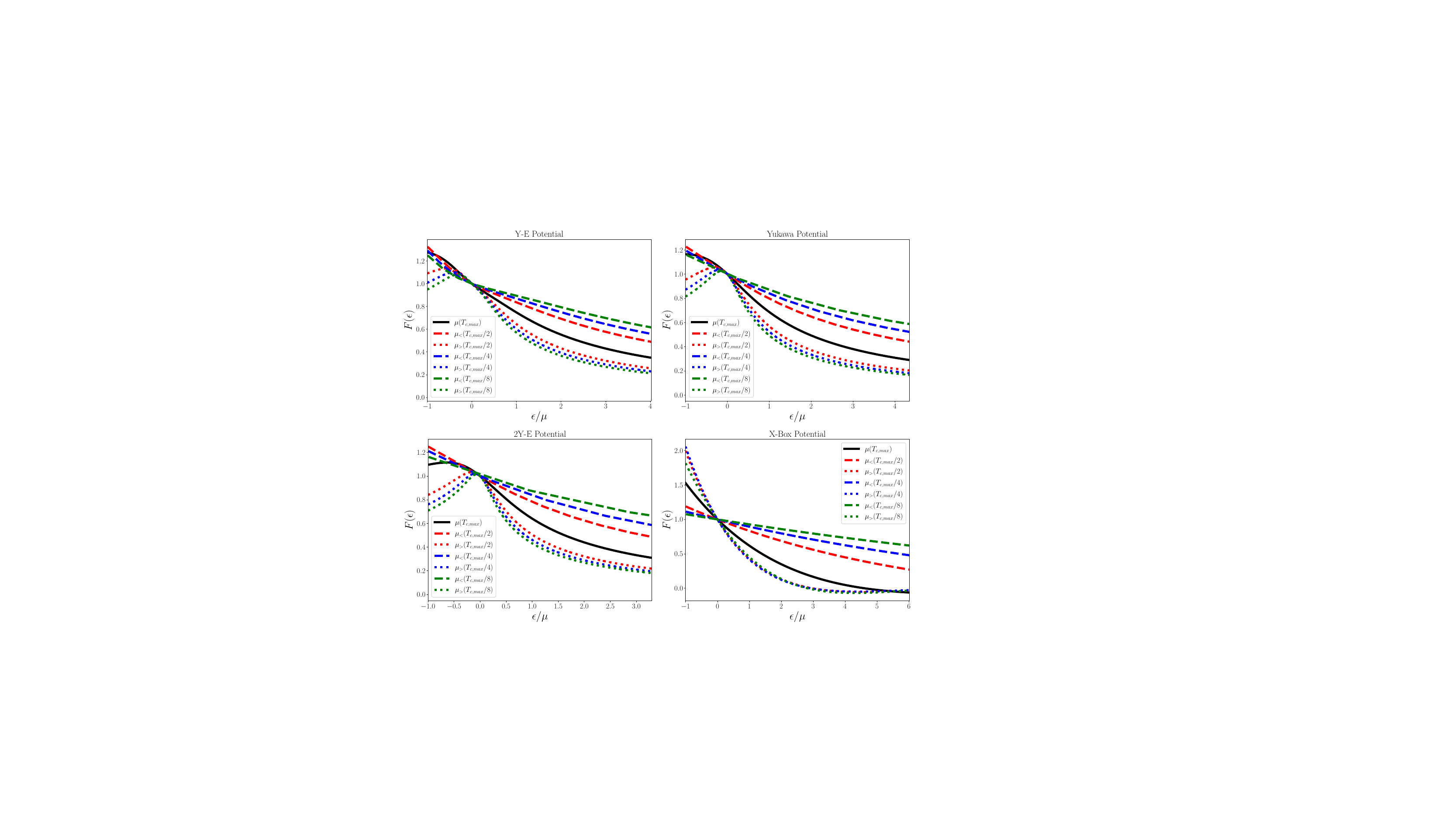}
\caption{Gap ratio, $F(\eps)\simeq \Delta(\eps,T)/\Delta(0,T)$, for the non-proper finite-range potential examples given in Table \ref{table:W}. As in Fig. \ref{fig:gap_ratios_k1_finite}, in each case, we set $g/g_0=0.25$, $\eps$ is expressed in units of the chemical potential, $\mu$, and seven different values of $\mu$ are considered: 
$\mu(T_{c,max})$ (black/solid), $\mu_{<}(T_{c,max}/2)$ (red/dashed), $\mu_{>}(T_{c,max}/2)$ (red/dotted), $\mu_{<}(T_{c,max}/4)$ (blue/dashed), $\mu_{>}(T_{c,max}/4)$ (blue/dotted),  $\mu_{<}(T_{c,max}/8)$ (green/dashed), $\mu_{>}(T_{c,max}/8)$ (green/dotted). In each case, $\mu_<(T)$ ($\mu_>(T)$) represents the smaller (larger) value of $\mu$ associated with the temperature $T$. 
} 
\label{fig:gap_ratios_k2_finite}
\end{center}
\end{figure*}

\section{Conclusions}
\label{sec:conclusions}
We studied analytic solutions to the BCS gap equation for a broad class of finite range interaction potentials, $V(r)$, which give rise to $s$-wave order parameters. Assuming these interaction potentials satisfy certain technical criteria, we were able obtain general expressions allowing the systematic computation of both the gap function and the critical temperature to arbitrary order in a small parameter, $\lambda$. 
Importantly, we found that the energy- and temperature- dependence of the superconducting order parameter factorize and that, while the temperature-dependence is universal, the energy-dependence at fixed temperature is sensitive to the details of the interaction. Moreover, we showed how this non-universal energy-dependence is given by a function $F(\eps)$ that can be accurately computed using a power series in $\lambda$. With these exact expressions we gave a proof of the universal gap ratio in BCS theory at the Fermi level, together with non-universal corrections which emerge at finite energies away from the Fermi level.  
Furthermore, the utility of the numerical approximations of these expansions for both $T_c$ and $F(\eps)$ were investigated for several examples of interaction potentials. 

We also used these expressions to extend our previous results, first reported in \cite{LTB}, relating the spatial-dependence of the pairing interaction to the magnitude of $T_c$. In particular, we investigated the behavior of $T_c$ for a set of finite-range potentials which are normalized so that $g\equiv -\int V(r)d^3r$ defines a coupling parameter. While these potentials are normalized over real space, importantly, each potential possesses a different spatial-dependence and, hence, spreads-out in different ways. For different kinds of spatial-dependence, we found that the maximum possible $T_c$ can differ by several orders of magnitude; see Figs.~\ref{fig:Figure_tc1} and \ref{fig:Figure_tc2}. To clarify this relationship we provided some simple expressions estimating the maximum value of $T_c$ achievable for a given interaction potential, allowing a quantitative comparison of the ``pairing efficiency" of different interaction potentials.

Our findings make clear that an adequate treatment of the spatial-dependence of the interaction potential is imperative to obtain reliable predictions of $T_c$. While, strictly speaking, these results were only obtained for the BCS gap equation, it is reasonable to assume that any efforts to predict $T_c$ using Eliashberg theory must also account for the momentum-dependence of the Coulomb and phonon interactions, in addition to the frequency dependencies of those interactions. We hope that our results motivate further developments in this area of research and serve as a template for more advanced methods of predicting $T_c$. 

\begin{acknowledgments}
We would like to thank Alexander Balatsky, Kamran Behnia, Annica M. Black-Schaffer, G\"oran Grimvall, Christian Hainzl, Tomas L\"{o}thman, Andreas Rydh, and Robert Seiringer for helpful discussions and we are grateful to Alexander Balatsky for encouragement and support. We are also grateful to Egor Babaev for providing feedback on the manuscript. 
\end{acknowledgments}

\appendix

\section{Universal BCS gap function}
\label{app:fBCS}
The special function $\fBCS(t)$ defined by Eq.\ \eqref{fBCS} is important in BCS theory but not widely known (a notable exception is Leggett's monograph \cite{Leggett}). 
In this appendix we state and prove properties of this function which we use in the main text. 

\begin{lemma} 
\label{lem:fBCS}
For each $t$ in the range $0\leq t\leq 1$, Eq.\ \eqref{fBCS} determines a unique $\fBCS(t)\geq 0$ satisfying 
 \begin{multline} 
\label{fBCS1} 
\fBCS(t) = \pi\ee^{-\gamma}\exp\Biggl\{ -\frac{\fBCS(t)}{2t}\int_{1}^\infty d\veps\, \\ \times 
 \frac{\log\left(\veps +\sqrt{\veps^2-1}\right)}{\cosh^2\left(\veps\frac{\fBCS(t)}{2t}\right)}\Biggr\}. 
\end{multline} 
Moreover, this function $\fBCS(t)$ is differentiable, monotonically decreasing from $\fBCS(0)= \pi\ee^{-\gamma}\approx 1.76$ to $\fBCS(1)=0$, and 
\begin{align} 
\label{fBCSa}
\fBCS(t) = \sqrt{f_1(1-t)+O((1-t)^2)}, \nonumber \\
f_1 =\frac1{\int_0^\infty\frac{\sinh(\veps)-\veps}{4\veps^3\cosh^2\frac{\veps}{2}}d\veps} \approx (3.06)^2 .
\end{align} 
\end{lemma}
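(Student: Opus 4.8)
The plan is to treat the defining relation \eqref{fBCS} as the zero set of
\begin{equation}
G(\Delta,t)\equiv\int_0^\infty\left\{\frac{\tanh\frac{\sqrt{\veps^2+\Delta^2}}{2t}}{\sqrt{\veps^2+\Delta^2}}-\frac{\tanh\frac{\veps}{2}}{\veps}\right\}d\veps,
\end{equation}
and to prove each assertion by analyzing $G$. First I would check convergence: for large $\veps$ both $\tanh$'s are exponentially close to $1$, so the bracket decays like $\veps^{-3}$, while near $\veps=0$ the integrand is bounded; hence $G$ is well defined and smooth in $(\Delta,t)$ for $\Delta\geq0$, $0<t\leq1$. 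Writing $u=\sqrt{\veps^2+\Delta^2}$, the first term equals $\tanh(u/2t)/u$, and since $\partial_\Delta u=\Delta/u>0$ the elementary inequality $x<\tfrac12\sinh(2x)$ (for $x=u/2t>0$) gives $\partial_u[\tanh(u/2t)/u]<0$; thus $\partial_\Delta G<0$ for $\Delta>0$, so $G$ is \emph{strictly decreasing} in $\Delta$. Using the BCS integral of Appendix~\ref{app:BCSintegral} (the computation behind \eqref{Om01}) one evaluates the boundary value $G(0,t)=-\log t$, which is $>0$ for $t<1$ and $=0$ for $t=1$, while $G(\Delta,t)\to-\infty$ as $\Delta\to\infty$. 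Monotonicity together with the intermediate value theorem then yields a \emph{unique} $\fBCS(t)\geq0$ with $G(\fBCS(t),t)=0$, and in particular $\fBCS(1)=0$.

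To obtain the explicit fixed-point form \eqref{fBCS1}, I would substitute $u=\sqrt{\veps^2+\Delta^2}$ in the first term, giving $\int_\Delta^\infty\tanh(u/2t)/\sqrt{u^2-\Delta^2}\,du$, and integrate by parts against the antiderivative $\mathrm{arccosh}(u/\Delta)=\log\bigl((u+\sqrt{u^2-\Delta^2})/\Delta\bigr)$ of $1/\sqrt{u^2-\Delta^2}$. Imposing a common cutoff and using $\int_0^\Lambda\tanh(\veps/2)/\veps\,d\veps=\log(2\ee^{\gamma}\Lambda/\pi)+o(1)$ to cancel the $\log\Lambda$ pieces (the boundary term supplying $\log(2\Lambda/\Delta)$), I obtain the identity $G(\Delta,t)=\log(\pi\ee^{-\gamma}/\Delta)-\frac{\Delta}{2t}J(\Delta/2t)$, where $J(a)\equiv\int_1^\infty\log(\veps+\sqrt{\veps^2-1})\cosh^{-2}(a\veps)\,d\veps\geq0$; setting $G=0$ reproduces \eqref{fBCS1} exactly. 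This also re-confirms $G\to-\infty$ as $\Delta\to\infty$, since $J\geq0$ and $-\log\Delta\to-\infty$. Letting $t\to0^+$ makes $J(\Delta/2t)$ vanish exponentially, so $\fBCS(0)=\pi\ee^{-\gamma}\approx1.76$.

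Differentiability of $\fBCS$ on $(0,1)$ then follows from the implicit function theorem, since $\partial_\Delta G\neq0$. For the monotonic decrease I would compute $\partial_t[\tanh(u/2t)/u]=-\tfrac{1}{2t^2}\cosh^{-2}(u/2t)<0$, whence $\partial_tG<0$; implicit differentiation gives $\fBCS'(t)=-\partial_tG/\partial_\Delta G<0$, so $\fBCS$ decreases strictly from $\pi\ee^{-\gamma}$ to $0$.

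Finally, for the behavior near $t=1$ I would Taylor expand $G$ in the small quantities $(1-t)$ and $\Delta^2$ about $(\Delta,t)=(0,1)$, where $G=0$. The first coefficient is $\partial_tG|_{(0,1)}=-\tfrac12\int_0^\infty\cosh^{-2}(\veps/2)\,d\veps=-1$. Viewing the first term as a function of $\Delta^2$, the second coefficient is $\partial_{(\Delta^2)}G|_{(0,1)}=\int_0^\infty\bigl[\tfrac{\veps}{2}\cosh^{-2}(\veps/2)-\tanh(\veps/2)\bigr]/(2\veps^3)\,d\veps$, which by the identity $\tfrac{\veps}{2}\cosh^{-2}(\veps/2)-\tanh(\veps/2)=(\veps-\sinh\veps)/(2\cosh^2(\veps/2))$ equals $-\int_0^\infty\frac{\sinh\veps-\veps}{4\veps^3\cosh^2(\veps/2)}\,d\veps=-1/f_1$. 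Hence $G=(1-t)-\Delta^2/f_1+R$, and solving $G=0$ gives $\fBCS(t)^2=f_1(1-t)+O((1-t)^2)$, i.e.\ \eqref{fBCSa}, with $\sqrt{f_1}\approx3.06$. The main obstacle is rigor rather than the formal algebra: I must justify differentiation under the integral sign and, above all, bound the Taylor remainder $R=O((1-t)^2)$ uniformly — controlling the $O(\Delta^4)$, $O(\Delta^2(1-t))$, and $O((1-t)^2)$ contributions by dominated convergence once $\Delta^2=O(1-t)$ is known — and likewise justify the cutoff cancellations used to derive \eqref{fBCS1}; these analytic estimates, not the closed-form manipulations, are the delicate part.
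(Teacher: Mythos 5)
Your proposal is correct and follows essentially the same route as the paper's proof: define the two-variable function $J(f,t)$ (your $G(\Delta,t)$), establish monotonicity in both arguments and the boundary values $J(0,t)=\log(1/t)$ and $J(f,0)=\log(\pi\ee^{-\gamma}/f)$, derive \eqref{fBCS1} via the substitution $\tilde\veps=\sqrt{\veps^2+f^2}$, cutoff regularization, partial integration, and the exact integral \eqref{BCSintegral}, and obtain differentiability and monotonic decrease from the implicit function theorem. Your treatment of the $t\to1$ asymptotics (Taylor expanding $G$ in $(1-t)$ and $\Delta^2$ and computing the coefficients $-1$ and $-1/f_1$ explicitly) is the same computation the paper organizes as an ansatz $\tilde f(s)=f_1s+f_2s^2+\cdots$ with power matching, and your explicit evaluation of $\partial_{(\Delta^2)}G|_{(0,1)}$ correctly reproduces the stated formula for $f_1$.
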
 
The formula in Eq.\ \eqref{fBCS1} is useful for a numerical computation of $\fBCS(t)$: it can be written as  $\fBCS(t)=G(X)$ with $X\equiv \fBCS(t)/2t$ and the special function
\begin{equation} 
\label{GBCS} 
G(X) \equiv \pi\ee^{-\gamma}\exp\Biggl\{ -X\int_{1}^\infty d\veps\, 
 \frac{\log\left(\veps +\sqrt{\veps^2-1}\right)}{\cosh^2\left(\veps X\right)}\Biggr\}, 
\end{equation} 
and it therefore implies the following implicit representation of $\fBCS(t)$ in terms of $G(X)$,
\begin{equation} 
\label{fBCS10}
\fBCS(t)=G(X),\quad t=\frac{G(X)}{2X}\quad (0\leq X<\infty). 
\end{equation} 
A plot of $\fBCS(t)$ obtained with Eq.\ \eqref{fBCS1} is given in Figure~\ref{fig:fBCS}. 
Eq.\ \eqref{fBCSa}  implies the well-known BCS result that $3.06\sqrt{1-t}$ is a good approximation of $\fBCS(t)$ for $t$ close to 1; see inset of Figure~\ref{fig:fBCS}. 

\begin{proof}[Proof of Lemma~\ref{lem:fBCS}]
It is convenient to write Eq.\ \eqref{fBCS} as $J(\fBCS(t),t)=0$ with  the special function 
\begin{equation}
J(f,t)\equiv \int_0^{\infty}  d\veps \left\{ 
\frac{\tanh\frac{\sqrt{\veps^2+f^2}}{2t}}{\sqrt{\veps^2+f^2}} - 
\frac{\tanh\tfrac{\veps}{2}}{\veps}\right\}. 
\label{Jft}
\end{equation}
One can verify that the function $J(f,t)$ has continuous partial derivatives with respect to $f$ and $t$ which both are non-positive for $0\leq f<\infty$ and $0\leq t\leq 1$, $J(0,t)=\log(1/t)$, and 
\begin{equation} 
J(f,0)=-\log\frac{2\ee^{\gamma}}{\pi} + \log \frac{2}{f} 
\end{equation} 
(the latter two results are derived further below). 
Thus, for fixed $t$ in the range $0\leq t\leq 1$, the function $J(f,t)$ is monotonically decreasing from $\log(1/t)\geq 0$ at $f=0$ to $-\infty$ as $f\to\infty$, 
which implies that $J(f,t)=0$ has a unique solution $f$. 
Clearly, $f=0$ for $t=1$, and $f=\pi\ee^{-\gamma}$ for $t=1$. 
To verify that $f$ as a function of $t$ is monotonically decreasing and differentiable one can use $df/dt=-\frac{\partial J(f,t)}{\partial t}/\frac{\partial J(f,t)}{\partial f}$.

To prove Eq.\ \eqref{fBCS1}, we regularize the integral $J(f,t)$ on the left-hand side in Eq. \eqref{fBCS} by replacing the lower limit by $\Lambda_0>0$ and the upper limit by 
$\Lambda_1<\infty$, rewrite this regularized integral, and then take the limits   $\Lambda_0\to 0$ and $\Lambda_1\to \infty$. 
Changing integration variables in the first term to $\tilde\veps=\sqrt{\veps^2+f^2}$ allows to write the regularized integral as 
\begin{equation} 
\int_{\sqrt{\Lambda_0^2+f^2}}^{\sqrt{\Lambda_1^2+f^2}}d\veps\, \frac{\tanh\frac{\veps}{2t}}{\sqrt{\veps^2-f^2}} 
-\int_{\Lambda_0}^{\Lambda_1}  d\veps\, \frac{\tanh\frac{\veps}{2}}{\veps} =0, 
\end{equation} 
where we renamed $\tilde\veps\to \veps$. 
After partial integrations one can take the limits, and using the exact integrals
\begin{equation} 
\label{BCSintegral}
\int_0^\infty dx\,\frac{\log x}{\cosh^{2}x} = -\log\frac{4\ee^{\gamma}}{\pi}
\end{equation} 
and $\int_0^\infty dx/\cosh^2(x)=1$ one obtains 
\begin{multline} 
\label{Jft1}
J(f,t)= -\frac1{2t}\int_{f}^\infty d\veps\, 
\frac{\log\left(\frac{\veps}{f} +\sqrt{\frac{\veps^2}{f^2}-1}\right)}{\cosh^2\frac{\veps}{2t}} \\
-\log\frac{2\ee^{\gamma}}{\pi} + \log \frac{2}{f}. 
\end{multline} 
This implies that $J(f,t)=0$ is equivalent to Eq.\ \eqref{fBCS1} (to see this, change integration variables $\veps\to f\veps$ and recall that $f$ is short for $\fBCS(t)$). 

The integral $J(0,t)$ is computed in a similar manner as a limit $\Lambda_0\to 0$ and $\Lambda_1\to\infty$ of the following regularized integral,  
\begin{multline} 
\int_{\Lambda_0}^{\Lambda_1}d\veps \left\{ 
\frac{\tanh\frac{\veps}{2t}}{\veps} - \frac{\tanh\frac{\veps}{2}}{\veps}\right\}  
=  \int_{\Lambda_1}^{\Lambda_1/t}
d\veps \frac{\tanh\frac{\veps}{2}}{\veps}\\  -  \int_{\Lambda_0}^{\Lambda_0/t}
d\veps \frac{\tanh\frac{\veps}{2}}{\veps} \to \int_{\Lambda_1}^{\Lambda_1/t}
\frac{d\veps}{\veps}=\log(1/t).
\end{multline} 

We are left to prove Eq.\ \eqref{fBCSa}. 
For that, we change variables to $s=1-t$ and insert $\fBCS(t)\equiv\sqrt{\tilde{f}(s)}$ into Eq.\ \eqref{fBCS} to obtain 
\begin{equation} 
\label{fBCS2} 
\int_{0}^\infty d\veps \left\{ 
\frac{\tanh\frac{\sqrt{\veps^2+\tilde{f}(s)}}{2(1-s)}}{\sqrt{\veps^2+\tilde{f}(s)}} - \frac{\tanh\frac{\veps}{2}}{\veps}\right\}  = 0. 
\end{equation} 
Inserting the ansatz $\tilde{f}(s)=f_1 s+ f_2 s^2+\ldots $, expanding the integrand in powers of $s$, and comparing equal powers of $s$, one obtains a system of equations which allow to compute the coefficients $f_n$ recursively. To lowest non-trivial power one obtains Eq.\ \eqref{fBCSa}. 
\end{proof} 

We note that, from the limiting behavior of $\fBCS(t)$ obtained above in the regions near $t=0$ and $t=1$, we may numerically fit an elementary function interpolating between these two extremes. We find that 
\begin{equation} 
\fBCS^{\mathrm{app}}(t) = \pi\ee^{-\gamma}\sqrt{(1-t^2)^3} + \sqrt{f_1}t^2\sqrt{1-t}
\label{eq:fbcs_approx1}
\end{equation} 
is an excellent approximation to $\fBCS(t)$  in the entire temperature range; more specifically, inserting $\pi \ee^{-\gamma}\approx 1.7639$ and $\sqrt{f_1} \approx 3.0633$ and computing $\fBCS(t)$ with an accuracy of 8 digits, we checked that, for all $0<t<1$, 
the relative error is less than $0.3\%$, i.e., 
\begin{equation} 
\left| \frac{\fBCS^{\mathrm{app}}(t)-\fBCS(t)}{\fBCS(t)}\right| \leq 0.003\quad (0<t<1). 
\end{equation} 
However, while the exact BCS-function $\fBCS(t)$ is monotonically decreasing in the whole temperature range, this is not true for its approximation in \eqref{eq:fbcs_approx1}: $\fBCS^{\mathrm{app}}(t)$ is an increasing function for $0<t<t_1$ and $t_1\approx 0.189$ (it is decreasing otherwise); since $\fBCS^{\mathrm{app}}(t_1)-\fBCS^{\mathrm{app}}(0)\approx 0.005$, this is not visible in Figure~\ref{fig:fBCS}.

\begin{figure}
 \begin{center}
  \centering
  \includegraphics[width=0.45\textwidth]{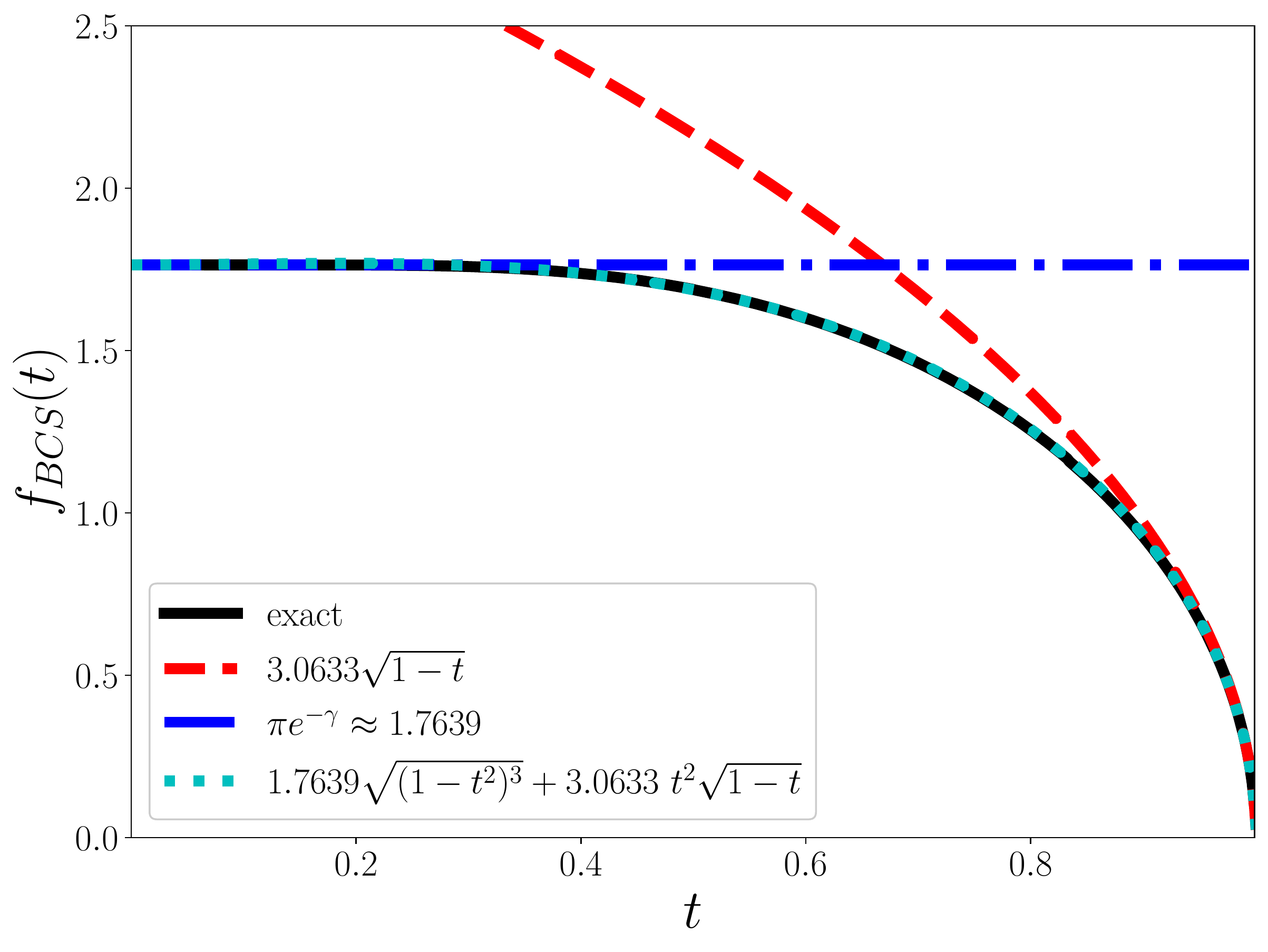}
\caption{Plot of the universal BCS gap function $\fBCS(t)$ defined in Eq.\ \eqref{fBCS}. The exact solution (black/solid), computed numerically, is compared to the limiting behavior near $t=1$ (red/dashed) and $t=0$ (blue/dash-dotted), as well as the approximate form given in Eq. (\ref{eq:fbcs_approx1}) (cyan/dotted).
}
\label{fig:fBCS}
\end{center}
\end{figure}

\section{Computation details}
\label{app:details} 
For the convenience of the reader, we collect computational details to derive results used in the main text. 

\subsection{Model definition}
\label{app:model} 
To avoid misunderstanding, and for the convenience of the reader, we give some formulas that are only described in words in the main text. 
 
The DOS is given by 
\begin{equation} 
\label{DOS} 
N(\eps)\equiv \int\frac{d^3 k}{(2\pi)^3}\delta(\eps_{\vk}-\eps). 
\end{equation} 
Using this, one can compute the energy-resolved interaction potential as 
\begin{equation} 
\label{hVdef} 
\hV(\eps,\eps') = \iint\frac{d^3 k}{(2\pi)^3} \frac{d^3 k'}{(2\pi)^3}\frac{\delta(\eps_{\vk}-\eps)}{N(\eps)}\frac{\delta(\eps_{\vk'}-\eps')}{N(\eps')} \hat{V}_{\vk,\vk'}. 
\end{equation} 
The integrations here are over the Brillouin zone which, for the jellium model in Section \ref{sec:mod}, is $\R^3$. 

The Fourier transform of the interaction potential is defined as 
\begin{equation} 
\label{FTV}
\hat{V}_{\vk,\vk'}\equiv \int d^3r\, V(|\mathbf{r}|)\ee^{\ii\textbf{r}\cdot(\vk-\vk')}, 
\end{equation} 
which implies Eqs.\ \eqref{VW_k}--\eqref{hW}. 
Using the latter formulas, the Fermi surface averages in Eq.\ \eqref{hVdef} are non-zero only if $\eps>-\mu$ and $\eps'\geq -\mu$ and, in this case, 
they reduce to an average over $u\equiv\cos\theta$ with $\theta$ the angle between $\vk$ and $\vk'$: 
\begin{multline} 
\hV(\eps,\eps') = -g\int_{-1}^1\frac{du}{2}\hat{W}\left(\ell\sqrt{p_{\eps}^2+p_{\eps'}^2-2p_{\eps}p_{\eps'}u}\right)\\
= -g\frac1{4\ell^2 p_{\eps}p_{\eps'}}\int_{\ell^2(p_{\eps}-p_{\eps'})^2}^{\ell^2(p_{\eps}+p_{\eps'})^2} d\veps\, \hat{W}(\sqrt{\veps})
\end{multline} 
with $p_{\eps}\equiv \sqrt{2m^*(\mu+\eps)}$ and $\hat{W}$ is defined in Eq. \eqref{hW}. Inserting the well-known DOS for the Sommerfeld dispersion relation, 
\begin{equation} 
\label{DOS1} 
N(\eps) = \theta(\mu+\eps)\frac{2m^*p_{\eps}}{(2\pi)^2}, 
\end{equation} 
yields the formulas in Eqs.\ \eqref{eq:VeeNe}--\eqref{fW}. 

\subsection{Universal BCS gap-ratio}
\label{app:gapratio}  
We derive the formula for the gap ratio in Eq.\ \eqref{Deltaeps11}.

Use the definition in Eq.\ \eqref{BCSgap1} and properties of the logarithm to write 
\begin{multline}
\label{Deltaeps1}
\frac{\Delta(\eps,T)}{\Delta(0,T)} = \frac{\Lambda(\eps,0)}{\Lambda(0,0)} \frac{\log(\Omega_{T,\Delta}(\eps)/T)}{\log(\Omega_{T,\Delta}(0)/T)} \\
=\frac{\Lambda(\eps,0)}{\Lambda(0,0)} \left\{ 1 + \frac{\log\left[\Omega_{T,\Delta}(\eps)/\Omega_{T,\Delta}(0) \right]}{\log(\Omega_{T,\Delta}(0)/T)} \right\}. 
\end{multline} 
Insert Eq.\ \eqref{BCSgap00} and 
\begin{multline}
\log\frac{\Omega_{T,\Delta}(\eps)}{\Omega_{T,\Delta}(0)} = \intR  d\epsilon' \, \frac{\tanh\frac{E(\eps',T)}{2T}}{2E(\eps',T)}
\Biggl(  \frac{\Lambda(\eps,\eps')}{\Lambda(\eps,0)} \\ -
 \frac{\Lambda(0,\eps')}{\Lambda(0,0)}
 \Biggr)\frac{\Delta(\eps',T)}{\Delta(0,T)} 
\end{multline}
implied by Eq.\ \eqref{OmegaTDeltaeps} to obtain Eq.\ \eqref{Deltaeps11}. 

\subsection{BCS integral}
\label{app:BCSintegral} 
We compute the integral in Eq.\ \eqref{Om0} to prove Eq.\ \eqref{Om01}. 
 
Use the exact integral in Eq.\ \eqref{BCSintegral} to compute by partial integration the integral in  Eq.\ \eqref{Om0}: 
\begin{align*} 
\intR \frac{d\eps'}{2\eps'} & \tanh{\tfrac{\eps'}{2T}} \Theta(\omega_c-|\eps'|) = \int_0^{\omega_c/2T}\frac{dx}{x}\tanh(x) \\ 
&= \log(\omega_c/2T)\tanh(\omega_c/2T) - \int_0^{\omega_c/2T}dx\, \frac{\log(x)}{\cosh^2(x)} \\
&= \log(2\ee^{\gamma}\omega_c/\pi T) + R(\omega_c/2T) 
\end{align*} 
with the correction term
\begin{align*}
R(a) =  \log(a)(1-\tanh{a}) +  \int_{a}^\infty dx\, \frac{\log(x)}{\cosh^2(x)} 
\end{align*}
determined by $a=\omega_c/2T$. 
The correction term can be estimated as follows, 
 \begin{equation} 
 \label{Ra} 
 |R(a)| \leq 4|\log(a)|\ee^{-2a}
 \end{equation} 
 (to see this, use $1-\tanh(a)=2\ee^{-2a}/(1+\ee^{-2a})\leq 2\ee^{-2a}$ and $\int_{a}^\infty dx\, \frac{\log(x)}{\cosh^2(x)} \leq 4\log(a)\int_a^{\infty} \ee^{-2x}dx=2\log(a)\ee^{-2a}$ for $a>0$). 
 To summarize: 
 \begin{equation}
  \label{OmegaBCS}
  \Omega_T^{(0)} = \frac{2\ee^{\gamma}}{\pi} \omega_c \ee^{R(\omega_c/2T)}
 \end{equation} 
 with $R(a)\to 0$ as $a\to\infty$. This proves Eq.\ \eqref{Om01}. 
  
\section{Proof of Main Theorem}
\label{app:proof} 
We give a mathematical proof of Theorem~\ref{thm}; Section~\ref{sec:universal}Â  can be read as a non-technical outline of this proof.

To specify error terms, we use the usual norms $\Norm{\cdot}$ ($L^1$-norm) and $\Norminfty{\cdot}$ (supremum norm), i.e., 
\begin{equation} 
\label{norm12} 
\begin{split} 
\Norm{f}\equiv & \intR|f(\eps)|d\eps,\\
\Norminfty{f}\equiv & \supR|f(\eps)|
\end{split}
\end{equation} 
for functions $f=f(\eps)$ depending on a real variable $\eps$. 

\bigskip

\subsection{Properties of proper functions} 
\label{app:proper}
In Definition~\ref{def}, we defined proper functions $\Lambda(\eps,\eps')$ by a list of conditions. 
Here, we state and prove implications of this definition that we need in our proof of Theorem~\ref{thm}. 

For $\Kmax(\eps,\eps')$ in \eqref{Kmax}, we define
\begin{equation} 
\label{tKmax} 
\tKmax(\eps) \equiv  \frac1{|\eps|}\Kmax(\eps).
\end{equation} 
We also recall the definitions of $F_0(\eps)$ and $\tF_0(\eps)$ in \eqref{F0tF0}. 

\begin{lemma} 
\label{lem:proper2} 
Let $\Lambda(\eps,\eps')$ be a proper function in the sense of Definition~\ref{def}. Then the following results hold true. 

(a) All norms $\Norm{F_0}$, $\Norminfty{F_0}$, and $\Norminfty{\tF_0}$ are finite. 

(b) The function $\Kmax(\eps)$ of  $\eps\in\R$ is well-defined, and its norms $\Norm{\Kmax}$ and $\Norminfty{\Kmax}$ both are finite. 

(c) The function $\tKmax(\eps)$ of  $\eps\in\R$ is well-defined, and its norms $\Norm{\tKmax}$ and $\Norminfty{\tKmax}$ both are finite. 

(d) If $\lambda\equiv \Lambda(0,0)$ satisfies the condition in \eqref{eq:convergence_criterion},  then the function 
\begin{equation} 
\label{f}
f(\eps)\equiv \frac{\Delta(\eps,T)}{\Delta(0,T)}
\end{equation} 
determined by  \eqref{Deltaeps11} exists, is piecewise continuous and bounded: $\Norminfty{f}, \Norm{f}<\infty$. 
\end{lemma}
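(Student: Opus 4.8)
The plan is to handle the four parts in the order (a), (b), (c), (d), since each relies on the preceding ones. For part~(a) I would first use the joint $C^2$-regularity of condition~(iii)(a), together with $\Lambda(0,0)=\lambda>0$, to conclude that $F_0$ and $\tF_0$ are continuous (hence locally bounded) near the origin; away from it, condition~(i) gives the decay $|\Lambda(\eps,0)|\lesssim|\eps|^{-\alpha}$ with $\alpha>1$, which yields $\Norminfty{F_0}<\infty$ and, since $\int_{|\eps|>1}|\eps|^{-\alpha}d\eps<\infty$, also $\Norm{F_0}<\infty$. Reading condition~(ii) at $\eps=0$ as a uniform bound on $|\Lambda(0,\eps')/\Lambda(0,0)|$ gives $\Norminfty{\tF_0}<\infty$.

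Part~(b) is the technical heart of the lemma and where I expect the main difficulty. The obstacle is that $K(\eps,\eps')=\frac{1}{2|\eps'|}F_0(\eps)\bigl[\Lambda(\eps,\eps')/\Lambda(\eps,0)-\tF_0(\eps')\bigr]$ carries a $1/|\eps'|$ singularity, while the bracket — call it $B(\eps,\eps')$ — vanishes at $\eps'=0$ for every $\eps$. To bound $\Kmax(\eps)=\sup_{\eps'}|K(\eps,\eps')|$ I would split the $\eps'$-axis at $|\eps'|=\eps_0$: for $|\eps'|\geq\eps_0$ the factor $1/|\eps'|\leq1/\eps_0$ is harmless and boundedness follows from condition~(ii) and $\Norminfty{\tF_0}<\infty$; for $|\eps'|<\eps_0$ I would write $B(\eps,\eps')=N(\eps,\eps')/[\Lambda(\eps,0)\Lambda(0,0)]$ with numerator $N(\eps,\eps')=\Lambda(\eps,\eps')\Lambda(0,0)-\Lambda(0,\eps')\Lambda(\eps,0)$, observe $N(\eps,0)=0$, and apply the mean value theorem in $\eps'$ (legitimate by the $C^2$/piecewise-$C^1$ regularity of condition~(iii)) to extract a factor $|\eps'|$ that cancels the singularity. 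The delicate point, which I regard as the crux of the whole lemma, is making this estimate \emph{uniform in} $\eps$ — in particular for large $\eps$, where condition~(iii) only supplies $\eps'$-regularity near the origin for small $\eps$; here I would exploit the algebraic structure of $N$ and the decay $|F_0(\eps)|\lesssim|\eps|^{-\alpha}$ so that the resulting envelope for $|K(\eps,\eps')|$ is proportional to $|F_0(\eps)|$. Boundedness of this envelope gives $\Norminfty{\Kmax}<\infty$, and integrating the $|\eps|^{-\alpha}$-type bound over $\eps$ (with $\alpha>1$) gives $\Norm{\Kmax}<\infty$, so $1/\Norm{\Kmax}>0$ as claimed.

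For part~(c) I would rerun the estimates of (b) for $\tKmax(\eps)=\Kmax(\eps)/|\eps|$, the only new issue being the behavior as $\eps\to0$. Since \emph{both} the prefactor $F_0(\eps)$, through $\Lambda(\eps,0)-\Lambda(0,0)$, and the bracket vanish as $\eps\to0$, a further use of the $C^2$-regularity of condition~(iii)(a) shows $\Kmax(\eps)=O(|\eps|)$ near the origin, so dividing by $|\eps|$ leaves $\tKmax$ bounded there; at large $|\eps|$ the extra $1/|\eps|$ only improves convergence, yielding $\Norminfty{\tKmax},\Norm{\tKmax}<\infty$.

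Finally, for part~(d) I would recast \eqref{Deltaeps11} as the Fredholm equation $f(\eps)=F_0(\eps)+\lambda\intR d\eps'\,\tilde K(\eps,\eps')f(\eps')$ with $\tilde K(\eps,\eps')=\frac{|\eps'|}{E(\eps',T)}\tanh\frac{E(\eps',T)}{2T}\,K(\eps,\eps')$, and use the pointwise bound $|\tilde K(\eps,\eps')|\leq|K(\eps,\eps')|$, which follows from $|\eps'|\leq E(\eps',T)$ and $\tanh\leq1$ and, crucially, holds \emph{independently} of the gap entering $E$ — so the kernel is controlled uniformly despite its nonlinear dependence on $\Delta$. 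The associated integral operator then has $L^1$-operator norm at most $\Norm{\Kmax}$, so the hypothesis $\lambda<1/\Norm{\Kmax}$ from \eqref{eq:convergence_criterion} makes $I-\lambda\tilde{\mathcal{K}}$ invertible and the Neumann series $f=\sum_n f_n$, $f_n=\lambda\tilde{\mathcal{K}}f_{n-1}$, converges, giving existence and uniqueness. The bounds $\Norm{f_n}\leq(\lambda\Norm{\Kmax})^n\Norm{F_0}$ and $|f_n(\eps)|\leq\lambda^n\Kmax(\eps)\Norm{\Kmax}^{n-1}\Norm{F_0}$ sum geometrically to $\Norm{f}\leq\Norm{F_0}/(1-\lambda\Norm{\Kmax})$ and $\Norminfty{f}\leq\Norminfty{F_0}+\lambda\Norminfty{\Kmax}\Norm{F_0}/(1-\lambda\Norm{\Kmax})$, both finite by parts~(a) and~(b); piecewise continuity of $f$ is inherited from that of $F_0$ and the continuity of the integral term.
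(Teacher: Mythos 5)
Parts (a)--(c) of your proposal follow essentially the same route as the paper: the decisive observation is the factorization $K(\eps,\eps')=\tfrac12\sgn(\eps')F_0(\eps)K_0(\eps,\eps')$ with $K_0(\eps,\eps')=\tfrac1{\eps'}\bigl[\Lambda(\eps,\eps')/\Lambda(\eps,0)-\Lambda(0,\eps')/\Lambda(0,0)\bigr]$ bounded (the paper removes the $1/\eps'$ singularity by L'Hospital, you by the mean value theorem --- the same use of condition (iii)), so that $\Kmax(\eps)\leq\tfrac12|F_0(\eps)|\,\Norminfty{\Kzmax}$ and both norms follow from part (a); for (c) the paper likewise divides by $\eps\eps'$ and invokes the mixed second derivative from (iii)(a). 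One side remark of yours in (c) is wrong: $F_0(\eps)=\Lambda(\eps,0)/\Lambda(0,0)$ does \emph{not} vanish as $\eps\to0$ (it tends to $1$); what vanishes is the bracket, which is all you actually need, so the conclusion $\Kmax(\eps)=O(|\eps|)$ near the origin survives.

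The genuine gap is in part (d). Equation \eqref{Deltaeps11} is \emph{not} a linear Fredholm equation: the kernel you call $\tilde K(\eps,\eps')=\frac{|\eps'|}{E(\eps',T)}\tanh\frac{E(\eps',T)}{2T}K(\eps,\eps')$ contains $E(\eps',T)=\sqrt{(\eps')^2+f(\eps')^2\Delta(0,T)^2}$ and therefore depends on the unknown $f$ itself. Consequently ``$I-\lambda\tilde{\mathcal K}$ is invertible, so the Neumann series converges'' is not a valid existence or uniqueness argument --- there is no fixed linear operator to invert, and a Neumann series built from an $f$-dependent kernel is not defined until $f$ is. The pointwise bound $|\tilde K|\leq|K|$ (from $|\eps'|\leq E$ and $\tanh\leq1$) does yield the \emph{a priori} estimates $\Norm{f}\leq\Norm{F_0}/(1-\delta)$ and $\Norminfty{f}<\infty$ for any solution that exists, so the boundedness claims are fine; but existence requires a genuinely nonlinear iteration. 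The paper's proof runs the method of successive approximations, $f_n(\eps)=F_0(\eps)+\lambda\intR d\eps'\,\frac{\tanh\frac{E(\eps',f_{n-1})}{2T}}{E(\eps',f_{n-1})}|\eps'|K(\eps,\eps')f_{n-1}(\eps')$, in which the kernel is re-evaluated at each iterate, and then estimates the differences $f_n-f_{n-1}$ to obtain a contraction with ratio $\lambda\Norm{\Kmax}<1$. To repair your argument you would need to set up this (or an equivalent fixed-point) scheme and, in the contraction step, control not only $f_{n-1}-f_{n-2}$ but also the change of the weight $\tanh(E/2T)\,|\eps'|/E$ between successive iterates.
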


\begin{remark} 
In this appendix, we use the symbol $f(\eps)$ for the exact gap ratio in \eqref{f} (depending on $T$), to distinguish it from the approximate gap ratio $F(\eps)$ appearing later on; 
note that $f(\eps)$ depends on $T$, while $F(\eps)$ is $T$-independent. 
\end{remark}  

In the main text, we mention the following result. 

\begin{corollary} 
\label{cor:proper} 
The constant $C_a$ in \eqref{Ca} is finite for proper functions $\Lambda(\eps,\eps')$. 
\end{corollary}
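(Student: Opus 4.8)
The plan is to deduce this corollary directly from Lemma~\ref{lem:proper2}, which has already established the finiteness of all the norms we will need; the corollary itself is then a short one-line estimate. First I would rewrite the integrand of $C_a$ so that the factor $1/|\eps'|$ is absorbed into the kernel. Recalling the definition \eqref{tKmax}, namely $\tKmax(\eps')=\Kmax(\eps')/|\eps'|$, we have
\begin{equation}
C_a = \intR \frac{d\eps'}{2|\eps'|}\,|\tF_0(\eps')|\,\Kmax(\eps')
    = \frac12 \intR d\eps'\, |\tF_0(\eps')|\,\tKmax(\eps').
\end{equation}
This recasts $C_a$ as (one half of) an integral of the product of the two functions $|\tF_0|$ and $\tKmax$.

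Next I would apply the basic product inequality $\left|\intR f(\eps)g(\eps)d\eps\right|\leq \Norminfty{f}\Norm{g}$ (the same estimate used as the key tool throughout the proof of Theorem~\ref{thm}), taking $f=\tF_0$ and $g=\tKmax$. Since both functions are nonnegative inside the integral up to absolute values, this yields
\begin{equation}
C_a \leq \frac12\,\Norminfty{\tF_0}\,\Norm{\tKmax}.
\end{equation}
It then remains only to invoke the relevant parts of Lemma~\ref{lem:proper2}: part (a) guarantees that $\Norminfty{\tF_0}$ is finite, and part (c) guarantees that $\Norm{\tKmax}$ is finite, for any proper function $\Lambda(\eps,\eps')$. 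Combining these two facts shows that the right-hand side above is finite, and hence $C_a<\infty$, which is the claim.

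There is, in effect, no serious obstacle at this level: all the genuine analytic work—controlling the short-distance behavior of the bracket in \eqref{Kdef} so that $\tKmax$ is integrable near $\eps'=0$, and controlling the decay of $\tF_0$ and $\tKmax$ at large $|\eps'|$ using the power-law condition (i) of Definition~\ref{def}—is carried out in the proof of Lemma~\ref{lem:proper2}. The only point worth checking carefully is bookkeeping: that the factor $1/|\eps'|$ appearing in $C_a$ is exactly the one used to define $\tKmax$, so that the finiteness statement I need is literally part (c) of the lemma rather than a statement about $\Norm{\Kmax}$ (which would be insufficient, since $\Kmax/|\eps'|$ is the integrable object near the origin, not $\Kmax$ itself). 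Once that identification is made, the corollary follows immediately.
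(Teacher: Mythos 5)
Your proof is correct and essentially identical to the paper's: both absorb the factor $1/|\eps'|$ into $\tKmax$ via \eqref{tKmax}, bound $C_a\leq \frac12\Norminfty{\tF_0}\Norm{\tKmax}$ by the H\"older-type product inequality, and then invoke Lemma~\ref{lem:proper2}(a) and (c) for the finiteness of the two norms. Your closing remark about why $\Norm{\Kmax}$ alone would not suffice is a correct and useful clarification, though the paper leaves it implicit.
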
 
 \begin{proof} 
This follows from the estimate 
\begin{equation} 
C_a\leq \frac12\Norminfty{\tilde{F}_0}\Norm{\tKmax} 
\end{equation}
implied by \eqref{Ca} and \eqref{tKmax}, together with the finiteness of the norms $\Norminfty{\tilde{F}_0}$ and $\Norm{\tKmax}$ guaranteed by Lemma~\ref{lem:proper2}. 
 \end{proof}

\begin{proof}[Proof of Lemma~\ref{lem:proper2}] 
(a) Condition (i) in our Definition~\ref{def} implies that $|F_0(\eps)|=|\Lambda(\eps,0)/\Lambda(0,0)|$ and $|F_0(\eps)||\eps|^{\alpha}$ both are finite for all $\eps\in\R$ and for some $\alpha>1$; clearly, this implies $\Norminfty{F_0}<\infty$ and $\Norm{F_0}<\infty$.

Since $|\tF_0(\eps')|=|\Lambda(0,\eps')/\Lambda(0,0)|$, Condition (ii) in Definition~\ref{def}  for $\eps=0$  clearly is equivalent to $\Norminfty{\tF_0}<\infty$.

(b) Conditions (ii) and  (iii) in Definition~\ref{def} imply that the function 
\begin{equation*} 
K_0(\eps,\eps') \equiv \frac1{\eps'}\left[\frac{\Lambda(\eps,\eps')}{\Lambda(\eps,0)}-\frac{\Lambda(0,\eps')}{\Lambda(0,0)} \right]
\end{equation*} 
is bounded for all $\eps,\eps'\in\R$; indeed, the second of these conditions imply that the limit $\eps'\to 0$ of this function is well-defined and finite: 
\begin{equation*} 
\lim_{\eps'\to 0}K_0(\eps,\eps') = \frac{\partial}{\partial\eps'}\left. \left[\frac{\Lambda(\eps,\eps')}{\Lambda(\eps,0)}-\frac{\Lambda(0,\eps')}{\Lambda(0,0)} \right]\right|_{\eps'=0}
\end{equation*} 
by L'Hospital's rule, and the first of these conditions guarantees finiteness for all other values of $\eps,\eps'$. Thus, 
\begin{equation} 
\Kzmax(\eps)\equiv \sup_{\eps'\in\R}|K_0(\eps,\eps')|
\end{equation} 
is well-defined for all $\eps\in\R$, and $\Norminfty{\Kzmax}$ is finite. Since 
\begin{equation*} 
K(\eps,\eps') = \frac12\sgn(\eps') F_0(\eps)K_0(\eps,\eps') 
\end{equation*} 
we have $\Kmax(\eps)=\frac12|F_0(\eps)|\Kzmax(\eps)$, and $\Norminfty{\Kmax}<\infty$ and $\Norm{\Kmax}<\infty$ follow from the estimates 
\begin{equation*} 
\begin{split} 
\Norminfty{\Kmax}\leq & \frac12\Norminfty{F_0}\Norminfty{\Kzmax},\\ 
\Norm{\Kmax}\leq & \frac12\Norm{F_0}\Norminfty{\Kzmax}. 
\end{split} 
\end{equation*} 

(c) The proofs of $\Norminfty{\tKmax}<\infty$ and $\Norm{\tKmax}<\infty$ are similar to the ones of $\Norminfty{\Kmax}<\infty$ and $\Norm{\Kmax}<\infty$, replacing $K_0(\eps,\eps')$ above by 
\begin{equation*} 
\tK_0(\eps,\eps') \equiv \frac1{\eps\eps'}\left[\frac{\Lambda(\eps,\eps')}{\Lambda(\eps,0)}-\frac{\Lambda(0,\eps')}{\Lambda(0,0)} \right];   
\end{equation*} 
again, the finiteness of $\tK_0(\eps,\eps')$ as $\eps\to 0$, $\eps'\to 0$ and $\eps,\eps'\to 0$ is guaranteed by Condition (iii) in Definition~\ref{def}: 
\begin{equation*} 
\begin{split} 
\lim_{\eps\to 0}K_0(\eps,\eps') = &\frac1{\eps'}\frac{\partial}{\partial\eps}\left. \left[\frac{\Lambda(\eps,\eps')}{\Lambda(\eps,0)}-\frac{\Lambda(0,\eps')}{\Lambda(0,0)} \right]\right|_{\eps=0}, \\
\lim_{\eps'\to 0}K_0(\eps,\eps') = &\frac1{\eps}\frac{\partial}{\partial\eps'}\left. \left[\frac{\Lambda(\eps,\eps')}{\Lambda(\eps,0)}-\frac{\Lambda(0,\eps')}{\Lambda(0,0)} \right]\right|_{\eps'=0}, \\
\lim_{\eps,\eps'\to 0}K_0(\eps,\eps') = &\frac{\partial^2}{\partial\eps\partial\eps'}
\left. \left[\frac{\Lambda(\eps,\eps')}{\Lambda(\eps,0)}-\frac{\Lambda(0,\eps')}{\Lambda(0,0)} \right]\right|_{\eps=\eps'=0};  
\end{split} 
\end{equation*} 
the other steps in the proof are similar to the ones in (b) above and thus omitted. 

\medskip 

(d) We note that \eqref{Deltaeps11} can be written as 
\begin{equation} 
\label{Deltaeps1111} 
f(\eps) = F_0(\eps) + \lambda\intR d\eps'\,\frac{\tanh\frac{E(\eps',T)}{2T}}{E(\eps',T)}|\eps'| K(\eps,\eps')f(\eps'),  
\end{equation} 
using the definitions in \eqref{Kdef}, \eqref{F0tF0} and \eqref{f}. We can now use the method of successive approximations to prove (d) \cite{davis1960}.
Define the sequence of functions:
\begin{align*}
f_0(\eps)&\equiv F_0(\eps), \\
f_1(\eps)&\equiv F_0(\eps) + \lambda \intR d\eps'\,\frac{\tanh\frac{E(\eps',f_0)}{2T}}{E(\eps',f_0)}|\eps'| K(\eps,\eps')f_0(\eps'), \\
&\vdots \\
f_n(\eps)&\equiv F_0(\eps) + \lambda \intR d\eps'\,\frac{\tanh\frac{E(\eps',f_{n-1})}{2T}}{E(\eps',f_{n-1})}|\eps'| K(\eps,\eps')f_{n-1}(\eps'), \\
\end{align*}
where we've used the shorthand $E(\eps,g)\equiv\sqrt{\eps^2+g^2(\eps)\Delta^2(0,T)}$ for convenience.

Now, consider the differences between successive terms:
\begin{align*}
&f_n(\eps)-f_{n-1}(\eps)=\lambda \intR d\eps'\,K(\eps,\eps') |\eps'| \\
&\times \left[ \frac{\tanh\frac{E(\eps',f_{n-1})}{2T}}{E(\eps',f_{n-1})} f_{n-1}(\eps') - \frac{\tanh\frac{E(\eps',f_{n-2})}{2T}}{E(\eps',f_{n-2})} f_{n-2}(\eps') \right].
\end{align*}
It is clear that 
\begin{equation*} 
0\leq  \frac{\tanh\frac{E(\eps',f_{n})}{2T}}{E(\eps',f_{n})}|\eps'|\leq 1 
\end{equation*} 
since $E(\eps',f_n)\geq |\eps'|$ and $0\leq \tanh(x)\leq 1$ for all $x \geq 0$. Therefore, this leads immediately to the bound:
\begin{align*}
f_n(\eps)-f_{n-1}(\eps)&\leq \lambda \intR d\eps'\,K(\eps,\eps')  \left[ f_{n-1}(\eps') -  f_{n-2}(\eps') \right], \\
&\leq \lambda \ \Kmax(\eps) \Norm{f_{n-1} - f_{n-2}}.
\end{align*}
From which we easily see that:
\begin{align*}
\Norm{f_n-f_{n-1}}&\leq \lambda \ \Norm{\Kmax} \Norm{f_{n-1} - f_{n-2}}.
\end{align*}

Iterating these formulas starting from $f_0$, it is straightforward to show that:
\begin{equation}
f_n(\eps)-f_{n-1}(\eps) \leq \lambda \ \Kmax(\eps) \left( \lambda \Norm{\Kmax} \right)^{n-1} \Norm{F_0},
\end{equation}
and 
\begin{equation}
\Norm{f_n-f_{n-1}} \leq  \left( \lambda \Norm{\Kmax} \right)^{n} \Norm{F_0}.
\end{equation}
Provided the condition in \eqref{eq:convergence_criterion} holds, we see that $\lim_{n\rightarrow\infty} \Norm{f_n-f_{n-1}}=0$, which implies that the above sequence converges to the solution of \eqref{Deltaeps1111}. Furthermore, since each term in the sequence is a piecewise continuous function, so is $f(\eps)$.

Therefore, we can write this solution as:
\begin{align*}
f(\eps)&=f_0(\eps) + \sum_{n=1}^{\infty} \left[ f_n(\eps)-f_{n-1}(\eps) \right], \\
&\leq F_0(\eps) + \lambda \ \Kmax(\eps) \Norm{F_0} \sum_{n=1}^{\infty} \left( \lambda \Norm{\Kmax} \right)^{n-1} .
\end{align*}
Since the series on the righthand side of the last line converges when the criterion in Eq. \eqref{eq:convergence_criterion} holds, it is clear that $\Norm{f},\Norminfty{f}<\infty$. 
\end{proof}

\subsection{First approximation}
\label{app:approximation1}
We derive an exact error bound for the approximation in Section~\ref{subsec:gapratio}.

We find it convenient to use the shorthand notation
\begin{equation} 
\label{sc} 
\begin{split}
g(\eps')\equiv K(\eps,\eps') f(\eps') \qquad \Delta\equiv\Delta(0,T)
\end{split} 
\end{equation} 
with $K(\eps,\eps')$ in \eqref{Kdef} and $f(\eps)$ in \eqref{f} to write the integral in Eq.\ \eqref{eq:integral_delta_t} as 
\begin{equation} 
\label{IBCS}
I_{\Delta,T} = \intR d\eps'\, \frac{\tanh\frac{\sqrt{(\eps')^2+[f(\eps')\Delta ]^2}}{2T}}{\sqrt{(\eps')^2+[f(\eps')\Delta ]^2}} |\eps'| g(\eps') .
\end{equation} 

We claimed in the main text that this integral is equal to 
\begin{equation} 
\label{IBCS0}
I_{0,0} = \intR d\eps'\, g(\eps') 
\end{equation} 
up to an error term estimated in Eq.\ \eqref{IIestimate}. 
This upper bound is implied by the following lemma; note that $\Norminfty{f}$ and $\Norminfty{g} \leq \Kmax(\eps)\Norminfty{f}$ both are finite for proper functions $\Lambda(\eps,\eps')$ by Lemma~\ref{lem:proper2}. 

\begin{lemma} 
\label{lemma:key1} 
Let $T\geq 0$, $\Delta\geq 0$ be constants and $f(\eps')$, $g(\eps')$ piece-wise continuous functions of the real variable $\eps'$ and such that $\Norminfty{f}$ and $\Norminfty{g}$ both are finite. 
Then the integrals in Eqs. \eqref{IBCS} and \eqref{IBCS0} are well-defined, and
\begin{equation} 
\label{IBCS1}
\left| I_{\Delta,T} -I_{0,0}\right| \leq \Norminfty{g}\left( 4T + \pi\Delta\Norminfty{f} \right) . 
\end{equation} 
\end{lemma}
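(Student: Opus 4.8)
The plan is to estimate the \emph{difference} $I_{\Delta,T}-I_{0,0}$ directly, since its integrand decays even when the two integrals are not separately controlled term by term. Subtracting \eqref{IBCS0} from \eqref{IBCS} and abbreviating $E=E(\eps')\equiv\sqrt{(\eps')^2+[f(\eps')\Delta]^2}$, one has
\[
I_{\Delta,T}-I_{0,0}=\intR d\eps'\,\left(\frac{\tanh\frac{E}{2T}}{E}|\eps'|-1\right)g(\eps'),
\]
so $|I_{\Delta,T}-I_{0,0}|\leq \Norminfty{g}\intR d\eps'\,\bigl|1-\tfrac{\tanh(E/2T)}{E}|\eps'|\bigr|$. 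Because $E\geq|\eps'|$ and $0\leq\tanh(x)\leq 1$, the product $\tfrac{\tanh(E/2T)}{E}|\eps'|=\tanh(E/2T)\cdot\tfrac{|\eps'|}{E}$ never exceeds $1$; hence the absolute value equals $1-\tfrac{\tanh(E/2T)}{E}|\eps'|$, and the whole problem reduces to showing $\intR d\eps'\,\bigl(1-\tfrac{\tanh(E/2T)}{E}|\eps'|\bigr)\leq 4T+\pi\Delta\Norminfty{f}$.

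The key step is a two-term split isolating the $\Delta$-dependence from the $T$-dependence. Using $\tfrac{\tanh(E/2T)}{E}\cdot E=\tanh\tfrac{E}{2T}$, I write
\[
1-\frac{\tanh\frac{E}{2T}}{E}|\eps'|=\underbrace{\frac{\tanh\frac{E}{2T}}{E}\bigl(E-|\eps'|\bigr)}_{(A)}+\underbrace{\Bigl(1-\tanh\frac{E}{2T}\Bigr)}_{(B)},
\]
both terms nonnegative. For $(A)$ I bound $\tanh(E/2T)/E\leq 1/E$ and use $E-|\eps'|=c^2/(E+|\eps'|)$ with $c\equiv|f(\eps')|\Delta$, together with $E(E+|\eps'|)\geq E^2=(\eps')^2+c^2$, to get the pointwise estimate $(A)\leq c^2/[(\eps')^2+c^2]$. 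Since $x\mapsto x/[(\eps')^2+x]$ is increasing in $x\geq 0$ and $c^2\leq D^2$ with $D\equiv\Delta\Norminfty{f}$, this is at most $D^2/[(\eps')^2+D^2]$, whose integral over $\R$ is exactly $\pi D=\pi\Delta\Norminfty{f}$. For $(B)$ I use $1-\tanh y\leq 2\ee^{-2y}$ and $E\geq|\eps'|$ to obtain $(B)\leq 2\ee^{-|\eps'|/T}$, which integrates to $4T$. Adding the two contributions and restoring the factor $\Norminfty{g}$ yields \eqref{IBCS1}.

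Two subsidiary points I would dispose of at the outset. First, well-definedness: the integrand of the difference is dominated by $\Norminfty{g}\,[\,D^2/((\eps')^2+D^2)+2\ee^{-|\eps'|/T}\,]\in L^1(\R)$, so the difference integral converges absolutely; in the application $g=K(\eps,\cdot)f$ is itself $L^1$ by Lemma~\ref{lem:proper2}, so $I_{0,0}$ and $I_{\Delta,T}$ are individually finite. Second, the degenerate value $T=0$ is read as a limit, where $(B)\equiv 0$ for $\eps'\neq 0$ and only $(A)$ survives, reproducing \eqref{IBCS1} with its first term absent. The only genuinely delicate point is the $\eps'$-dependence of the shift $c=|f(\eps')|\Delta$ appearing in $(A)$: the monotonicity of $c^2/[(\eps')^2+c^2]$ in $c^2$ is precisely what allows me to replace $c$ by its supremum $D$ before integrating, and the resulting Lorentzian integral $\intR D^2/[(\eps')^2+D^2]\,d\eps'=\pi D$ is the source of the constant $\pi$ in the bound.
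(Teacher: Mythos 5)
Your proof is correct and follows essentially the same route as the paper: the paper inserts the intermediate quantity $I_{\Delta,0}$ and writes $I_{\Delta,T}-I_{0,0}=(I_{\Delta,T}-I_{\Delta,0})+(I_{\Delta,0}-I_{0,0})$, which is exactly your pointwise split into $(B)$ and $(A)$, and it then uses the same two bounds — $1-\tanh y\leq 2\ee^{-2y}$ giving $4T$, and the monotone Lorentzian estimate $\frac{E-|\eps'|}{E}\leq \frac{c^2}{(\eps')^2+c^2}\leq \frac{D^2}{(\eps')^2+D^2}$ giving $\pi\Delta\Norminfty{f}$. Your remark that only the \emph{difference} integral is absolutely convergent under the stated hypotheses (with the individual integrals finite only in the application where $g\in L^1$) is a fair and slightly more careful reading than the paper's "it is obvious" claim.
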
 
\begin{proof} 
It is obvious that our assumptions guarantee that the integrals in Eqs.\ \eqref{IBCS} and \eqref{IBCS0} are well-defined. 

To prove the estimate in Eq. \ref{IBCS1} we write 
$$
I_{\Delta,T}-I_{0,0} = (I_{\Delta,T} -I_{\Delta,0}) + (I_{\Delta,0} -I_{0,0}) 
$$
with 
\begin{multline*} 
I_{\Delta,T} -I_{\Delta,0}  = 
-2\intR d\eps' \frac1{\sqrt{1+[f(\eps')\Delta/\eps']^2}}\, \\ \times 
 \frac{\exp\left(- \frac{|\eps'|}{T}\sqrt{1+[f(\eps')\Delta/\eps']^2}\right)}{1+\exp\left(- \frac{|\eps'|}{T}\sqrt{1+[f(\eps')\Delta/\eps']^2}\right)} 
g(\eps') 
\end{multline*} 
using $\tanh(x)-1=-2\ee^{-2x}/(1+\ee^{-2x})$, and 
\begin{multline*} 
I_{\Delta,0} -I_{0,0}  = 
-\intR d\eps'\, 
\frac{\sqrt{1+[f(\eps')\Delta/\eps']^2}-1}{\sqrt{1+[f(\eps')\Delta/\eps']^2}}
g(\eps') .
\end{multline*} 
We use the inequalities $\ee^{-x}/\sqrt{1+x}(1+\ee^{-x})\leq \ee^{-x}\leq \ee^{-y}$ for $0<y<x$ to estimate:  
\begin{multline*} 
|I_{\Delta,T} -I_{\Delta,0} | \leq 2\intR d\eps'\, \ee^{-|\eps'|/T} |g(\eps')| \\ \leq 
 2\supRp\left|g(\eps')\right| \intR d\eps'\, \ee^{-|\eps'|/T} = \Norminfty{g} 4T. 
\end{multline*} 
Similarly, using $(\sqrt{1+x}-1)/\sqrt{1+x}\leq x/(1+x)<y/(1+y)$ for $0<x<y$, we obtain: 
\begin{multline*} 
|I_{\Delta,0} -I_{0,0} | \leq \intR d\eps'\, \frac{(\Norminfty{f}\Delta/\eps')^2}{1+(\Norminfty{f}\Delta/\eps')^2} |g(\eps')| \\ \leq 
 \supRp\left|g(\eps') \right| \intR d\eps' \, \frac{(\Norminfty{f}\Delta/\eps')^2}{1+(\Norminfty{f}\Delta/\eps')^2} 
 \\ =  \Norminfty{g}  \pi \Norminfty{f} \Delta .
\end{multline*} 
This proves the Lemma~\ref{lemma:key1}.
\end{proof} 

The estimate in Eq.\ \eqref{IIestimate} in the main text is obtained by specializing this to the case in \eqref{sc}. 

As discussed in the main text, this suggests that, by replacing $I_{\Delta,T}$ in \eqref{IBCS} with $I_{0,0}$, one only introduces error terms $O(\ee^{-1/\lambda})$. 
To prove this, we denote the correction to this approximation as 
\begin{equation} 
R_{\Delta,T}(\eps) \equiv I_{\Delta,T} -I_{0,0}
\end{equation} 
for the special case in \eqref{sc}; Lemma \ref{lemma:key1} implies that this correction is small in the following sense, 
\begin{equation} 
\label{RTestimate}
|R_{\Delta,T}(\eps)|\leq \Kmax(\eps) \Norminfty{f} (4T+\pi \Delta(0,T)\Norminfty{f}), 
\end{equation} 
using the definition of $\Kmax(\eps)$ in Eq.\ \eqref{Kmax}. 
Using these definitions, we can write the exact equation in \eqref{Deltaeps11} as 
\begin{equation} 
\label{FKF}
f(\eps) = f_0(\eps) + \lambda\intR K(\eps,\eps') f(\eps')d\eps' 
\end{equation} 
with $f(\eps)$ in Eq.\ \eqref{sc} and 
\begin{equation} 
f_0(\eps) = F^{(0)}(\eps) + R_{\Delta,T}(\eps) . 
\end{equation} 
It is useful to write the equation in \eqref{FKF} formally as $(I-\lambda\cK)f=f_0$ with the integral operator $\cK$ defined as  $(\cK f)(\eps)\equiv \intR d\eps'\, K(\eps,\eps')f(\eps')$; this has the formal solution $f=(I-\lambda\cK)^{-1}f_0\equiv F+\cR_{\Delta,T}$ with $F=(I-\lambda\cK)^{-1}F^{(0)}$ and $\cR_{\Delta,T}=(I-\lambda\cK)^{-1}R_{\Delta,T}$, i.e., recalling the definition of $f(\eps)$ in \eqref{sc}, 
\begin{equation} 
\label{gapratio1}
\frac{\Delta(\eps,T)}{\Delta(0,T)} = F(\eps) + \cR_{\Delta,T}(\eps), 
\end{equation} 
with $F(\eps)$ equal to the solution $f(\eps)$ of \eqref{FKF} for $f_0(\eps)=F^{(0)}(\eps)$ (i.e., $F(\eps)$ is exactly the function specified in Theorem~\ref{thm}), and $\cR_{\Delta,T}(\eps)$ equal to the solution $f(\eps)$ of \eqref{FKF} for $f_0(\eps)=R_{\Delta,T}(\eps)$.
This suggests the following result ---  the proof of this below makes the formal argument above mathematically precise. 

\begin{lemma} 
\label{lemma:key0} 
Assume that $\Kmax(\eps)$ in Eq.\ \eqref{Kmax} is well-defined, the norms $\Norm{\Kmax}$ and $\Norminfty{\Kmax}$ both are finite, and the condition in \eqref{eq:convergence_criterion} holds true, i.e., $\delta\equiv \lambda\Norm{\Kmax}$ is in the range $0<\delta<1$. 
Then the integral equation in Eq.\ \eqref{FKF} has a unique solution given by the series 
\begin{equation} 
\label{series} 
f(\eps) = \sum_{n=0}^{\infty} f_n(\eps)\lambda^n
\end{equation}
with $f_{n}(\eps)=\intR K(\eps,\eps')f_{n-1}(\eps')d\eps'$ for $n=1,2,\ldots$. Moreover, the following estimates hold true, 
\begin{equation} 
\label{festimate1}
\begin{split}
|f(\eps)|   \leq & |f_0(\eps)| + \lambda \Kmax(\eps)\Norm{f_0}\frac{1}{1-\delta} , \\
\Norm{f}\leq & \frac1{(1-\delta)}\Norm{f_0}, \\
\Norminfty{f}  \leq & \Norminfty{f_0} + \lambda \Norminfty{\Kmax}\Norm{f_0}\frac{1}{1-\delta}, 
\end{split} 
\end{equation} 
and 
\begin{equation} 
\label{festimate2}
\begin{split}  
\left|\mbox{$f(\eps)-\sum_{n=0}^N f_n(\eps)$}\right| \leq & \lambda\Kmax(\eps)\Norm{f_0} \frac{\delta^{N}}{1-\delta}, \\
\Norm{\mbox{$f-\sum_{n=0}^N f_n$}} \leq & \Norm{f_0} \frac{\delta^{N+1}}{1-\delta},\\
\Norminfty{\mbox{$f-\sum_{n=0}^N f_n$}} \leq & \lambda\Norminfty{\Kmax} \Norm{f_0} \frac{\delta^{N}}{1-\delta}, 
\end{split} 
\end{equation} 
for all $N=0,1,2,\ldots$; in particular, the series in Eq.\ \eqref{series} converge absolutely in both norms $\Norm{\cdot}$ and $\Norminfty{\cdot}$. 
\end{lemma}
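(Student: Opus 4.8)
The plan is to recognize Eq.~\eqref{FKF} as a linear inhomogeneous Fredholm equation of the second kind, $(I-\lambda\cK)f=f_0$, where $\cK$ is the integral operator $(\cK h)(\eps)\equiv\intR d\eps'\,K(\eps,\eps')h(\eps')$, and to solve it by the Neumann series (method of successive approximations, as already used in the nonlinear setting of Lemma~\ref{lem:proper2}(d), but now genuinely linear and hence simpler). First I would observe that, because $\lambda$ is a scalar multiplying the kernel, the iterates in the statement are just $f_n=\cK^n f_0$, so the claimed series \eqref{series} is the formal geometric series $\sum_{n\ge0}\lambda^n\cK^n f_0$. The entire argument then reduces to showing that $\lambda\cK$ behaves as a contraction, controlled by $\delta=\lambda\Norm{\Kmax}<1$ via \eqref{eq:convergence_criterion}.

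The technical core is the same pointwise estimate already used to obtain \eqref{eq:inequality_fn}: since $|K(\eps,\eps')|\le\Kmax(\eps)$ by the definition \eqref{Kmax}, one has $|f_n(\eps)|\le\Kmax(\eps)\Norm{f_{n-1}}$. Integrating this in $\eps$ gives $\Norm{f_n}\le\Norm{\Kmax}\,\Norm{f_{n-1}}$, hence $\Norm{f_n}\le\Norm{\Kmax}^n\Norm{f_0}$ by iteration; taking instead the supremum over $\eps$ gives $\Norminfty{f_n}\le\Norminfty{\Kmax}\Norm{\Kmax}^{n-1}\Norm{f_0}$ for $n\ge1$. These three elementary inequalities — one pointwise, one in $\Norm{\cdot}$, one in $\Norminfty{\cdot}$ — are exactly the ingredients required, and each uses only the finiteness of $\Norm{\Kmax}$ and $\Norminfty{\Kmax}$ assumed in the hypotheses.

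Next I would assemble the conclusions by summing geometric series. Applying the three bounds above to $\sum_{n\ge0}\lambda^n|f_n(\eps)|$, $\sum_{n\ge0}\lambda^n\Norm{f_n}$, and $\sum_{n\ge0}\lambda^n\Norminfty{f_n}$, and using $\sum_{n\ge0}\delta^n=1/(1-\delta)$, yields absolute convergence in both norms together with the three estimates of \eqref{festimate1}; carrying out the identical summation over the tail $n\ge N+1$, which extracts a factor $\delta^{N}$ or $\delta^{N+1}$, yields the truncation bounds of \eqref{festimate2}. To confirm that the resulting sum $f$ solves \eqref{FKF}, I would apply $\cK$ term by term — legitimate because the partial sums converge in $\Norm{\cdot}$ and $\cK$ is bounded on $L^1$ with $\Norm{\cK h}\le\Norm{\Kmax}\Norm{h}$ — giving $\cK f=\sum_{n\ge1}\lambda^{n-1}f_n=(f-f_0)/\lambda$, which is precisely \eqref{FKF}. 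Uniqueness then follows from the contraction: any difference $h$ of two solutions obeys $h=\lambda\cK h$, so $\Norm{h}\le\delta\Norm{h}$ with $\delta<1$, forcing $\Norm{h}=0$ and hence $h\equiv0$, using piecewise continuity to upgrade from almost-everywhere.

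I do not expect a deep obstacle, as this is a standard Banach-space Neumann-series argument; the only points requiring genuine care are the bookkeeping that keeps all three norms finite simultaneously and the interchange of $\cK$ with the infinite sum, which rests on the $L^1$-boundedness of $\cK$ and the absolute convergence established above. It is worth flagging that the finiteness of $\Norminfty{\Kmax}$, not merely $\Norm{\Kmax}$, is what powers the sup-norm estimates, and that the pointwise inequality $|f_n(\eps)|\le\Kmax(\eps)\Norm{f_{n-1}}$ must be derived \emph{before} integrating: integrating a bound that has already discarded the $\eps$-dependence of $\Kmax(\eps)$ would destroy the sharp prefactors $\Kmax(\eps)$ appearing in \eqref{festimate1} and \eqref{festimate2}.
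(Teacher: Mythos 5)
Your proposal is correct and follows essentially the same route as the paper: the Neumann series $\sum_n \lambda^n\cK^n f_0$ controlled by the pointwise H\"older bound $|(\cK h)(\eps)|\le\Kmax(\eps)\Norm{h}$, iterated to give $\Norm{\cK^n f_0}\le\Norm{\Kmax}^n\Norm{f_0}$ and $|\cK^n f_0(\eps)|\le\Kmax(\eps)\Norm{\Kmax}^{n-1}\Norm{f_0}$, then summed geometrically. The only difference is that you spell out the term-by-term application of $\cK$ and the contraction-based uniqueness argument, which the paper defers to standard Banach-space theory.
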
 

\begin{proof} 
This  is implied by general theorems about linear operators on Banach spaces; see e.g.\ \cite{ReedSimonI}, Chapter~III. 
For the convenience of the reader, we sketch the standard proof. 

We only need to prove the first inequality \eqref{festimate1} (since the second and third are simple implication of the first), and similarly for \eqref{festimate2}.

By our assumptions, $f_0$ is in the Banach space $L^1(\mathbb{R})$, and we claim that 
\begin{equation} 
(\cK f)(\eps) = \intR d\eps'\, K(\eps,\eps') f(\eps') 
\end{equation} 
defines a bounded linear operator $\cK: L^1(\R)\to L^1(\R)$. Indeed, using H\"older's inequality, 
\begin{equation}
\label{Hoelder}  
|(\cK f)(\eps)| = \left| \intR d\eps' \, K(\eps,\eps') f(\eps')\right| \leq \Kmax(\eps)\Norm{f} 
\end{equation} 
using the definition of $\Kmax(\eps)$ in \eqref{Kmax}, and thus 
\begin{equation*} 
\Norm{\cK f} \leq \Norm{\Kmax}\Norm{f}
\end{equation*} 
proving $\cK f\in L^1(\R)$ for all $f\in L^1(\R)$. This implies, by induction, using general properties of a norm,  
\begin{equation*} 
\Norm{\cK^n f_0} \leq \Norm{\Kmax}\Norm{\cK^{n-1} f_0} \leq \Norm{\Kmax}^n\Norm{f_0} 
\end{equation*} 
for all $n=1,2,\ldots$, and similarly 
\begin{multline*} 
|\cK^n f_0(\eps)| \leq \Kmax(\eps)\Norm{\cK^{n-1} f_0} \\ \leq \Kmax(\eps)\Norm{\Kmax}^{n-1}\Norm{f_0} . 
\end{multline*} 
Thus 
\begin{equation*} 
|f(\eps)| \leq |f_0(\eps)| + \sum_{n=1}^\infty \lambda^n \Kmax(\eps)\Norm{\Kmax}^{n-1}\Norm{f_0} 
\end{equation*} 
equal to the right-hand side  in Eq.\ \eqref{festimate1}. The estimate in Eq.\ \eqref{festimate2} is proved in a similar manner. 
\end{proof} 

This lemma implies, in particular,
\begin{equation} 
\label{cRestimate}
|\cR_{\Delta,T}(\eps)|  \leq |R_{\Delta,T}(\eps)| + \lambda \Kmax(\eps)\Norm{R_{\Delta,T}}\frac{1}{1-\delta} 
\end{equation} 
and 
\begin{equation} 
\label{Festimate}
|F(\eps)|  \leq |F^{(0)}(\eps)| + \lambda \Kmax(\eps)\Norm{F^{(0)}}\frac{1}{1-\delta} , 
\end{equation} 
together with the corresponding estimates for the norms implied by these inequalities. 
In particular,  $\Norm{\cR_{\Delta,T}}\leq (1-\delta)^{-1}\Norm{R_{\Delta,T}}$, and using \eqref{RTestimate} we obtain 
\begin{equation} 
\label{cRestimate1}
|\cR_{\Delta,T}(\eps)|\leq \frac{\Kmax(\eps)}{1-\delta}\Norminfty{f} \left(4T+\pi \Delta(0,T)\Norminfty{f}\right).
 \end{equation} 

Since $\Norm{\Kmax}T_c=O(\ee^{-1/\lambda})$, this makes clear that $\cR_{\Delta,T}(\eps)=O(\ee^{-1/\lambda})$ (recall that $f=F+\cR_{\Delta,T}$, and $\Norminfty{F}$ is finite).

\subsection{Second approximation} 
\label{app:approximation2}
We use definitions to write 
\begin{multline} 
\frac{\Omega_{T,0}(0)}{\Omega^{(0)}_{T}} = \exp\Biggl\{ \intR  d\eps \, 
\frac{\tanh\frac{\eps}{2T}}{2\eps}\\
\times \left[ \tF_0(\eps)\frac{\Delta(\eps,T)}{\Delta(0,T)} - \theta(\omega_c-|\eps|)\right] \Biggr\}
\end{multline} 
and set $T=T_c$. 
Note that the integral in the exponent has the form as in Eq.\ \eqref{IBCS} for $\Delta=0$, and we thus can use Lemma~\ref{lemma:key1} to simplify it.  
Using also the result in Eq.\ \eqref{OmegaBCS} and inserting Eq.\ \eqref{gapratio1} we obtain from Eq.\ \eqref{Tc111}: 
\begin{multline} 
T_c = 
\frac{2\ee^{\gamma}}{\pi}\omega_c \exp\Biggl\{ -\frac1\lambda+\intR  \frac{d\eps}{2|\eps|} \, \Bigl[ \tF_0(\eps)F(\eps) \\ 
- \theta(\omega_c-|\eps|)\Bigr] + R_{T_c} \Biggr\}
\end{multline} 
with an error term which has three contributions, $R_{T_c}=R^{(1)}_{T_c} + R^{(2)}_{T_c} + R^{(3)}_{T_c}$; the first accounts for our replacing $\Omega^{(0)}_{T_c}$ by $\Omega^{(0)}_{0} = 2\ee^{\gamma}\omega_c/\pi$: 
\begin{equation} 
 R^{(1)}_{T_c} = \log\left(\frac{\Omega^{(0)}_{T_c}}{\Omega^{(0)}_{0}} \right), 
\end{equation} 
the second arises from replacing the gap ratio $\Delta(\eps,T)/\Delta(0,T)$ by $F(\eps)$ using Eq.\ \eqref{gapratio1}: 
\begin{equation} 
 R^{(2)}_{T_c}  = \intR d\eps\, \frac{\tanh\frac{\eps}{2T_c}}{2\eps}\ \,\tF_0(\eps)\cR_{0,T_c}(\eps), 
\end{equation} 
and the third is the error introduced by replacing $T=T_c$ by $T\to 0^+$ in the integral: 
\begin{multline} 
 R^{(3)}_{T_c} =  \intR  d\eps\, \left( \frac{\tanh\frac{\eps}{2T_c}}{2\eps} - \frac1{2|\eps|}\right) \, \Biggl[ \tF_0(\eps)F(\eps) \\ 
- \theta(\omega_c-|\eps|)\Biggr] . 
\end{multline} 

The first error term was estimated already in Appendix~\ref{app:BCSintegral}, Eqs.\ \eqref{Ra}--\eqref{OmegaBCS}: 
\begin{equation} 
 |R^{(1)}_{T_c}| \leq 4|\log(\omega_c/2T_c)|\ee^{-\omega_c/T_c} ;  
 \end{equation} 
this is negligible since it vanishes much faster than $\ee^{-1/\lambda}$ as $\lambda\to 0$ (it goes like $4(\lambda + \ln(c))\exp(-2c\ee^{1/\lambda})$ for some $c>0$). 

The following upper bounded for the second error term is obtained by using the estimate in Eq.\ \eqref{cRestimate1}: 
\begin{multline} 
|R^{(2)}_{T_c}| \leq \sup_{\eps'\in\R} \left| \tF_0(\eps')\right| \intR \frac{d\eps}{|\eps|}\frac{\Kmax(\eps)}{1-\delta}
2T_c\Norminfty{f}\\ = \frac{2T_c}{1-\delta}\Norminfty{\tilde{F}_0} \Norm{\tKmax}\Norminfty{f}, 
\end{multline} 
recalling that $\Kmax(\eps)/|\eps|=\tKmax(\eps)$. Thus, clearly, $R^{(2)}_{T_c}=O(\ee^{-1/\lambda})$.

The third error term can be estimated using Lemma~\ref{lemma:key0} in the special case $\Delta=0$:  
\begin{equation}
\label{R3Tc} 
|R^{(3)}_{T_c}| \leq 2T_c\supR\left|\frac{\left[ \tF_0(\eps)F(\eps)
- \theta(\omega_c-|\eps|)\right]}{|\eps|} \right|.
\end{equation} 
Using Eq.\ \eqref{FEq}, this implies 
\begin{multline*} 
|R^{(3)}_{T_c}| \leq 2T_c \supR\left|\frac{\left[ \tF_0(\eps)F_0(\eps) 
- \theta(\omega_c-|\eps|)\right]}{|\eps|} \right| \\
+ 2T_c \lambda\supR\left| \tF_0(\eps)\right|\supR\left|\frac1{|\eps|}\intR d\eps'\, K(\eps,\eps')F(\eps') \right|, 
\end{multline*} 
and, using the inequality in Eq.\ \eqref{Hoelder}, 
\begin{multline} 
|R^{(3)}_{T_c}| \leq 2T_c \supR\left|\frac{\left[ \tF_0(\eps)F_0(\eps)
- \theta(\omega_c-|\eps|)\right]}{|\eps|} \right| \\
+ 2T_c \lambda\Norminfty{\tilde{F}_0}\Norminfty{\tKmax}\Norm{F} .
\end{multline} 
The second term is clearly $O(\ee^{-1/\lambda})$ (recall that $\Norm{F}$ is finite, as proved in Appendix~\ref{app:approximation1}). 
To show this also for the first term, we recall that $\omega_c$ is arbitrary and thus can be chosen so as to minimize the error terms; setting $\omega_c=\eps_0$ and recalling that $\Lambda(\eps,0)$ and $\Lambda(0,\eps)$, and thus $F_0(\eps)$ and $\tF_0(\eps)$, have continuous derivatives for $|\eps|<\eps_0$ by our assumptions: 
\begin{equation*} 
\sup_{|\eps|\leq\eps_0}\left|\frac{\left[ \tF_0(\eps)F_0(\eps)
- 1\right]}{|\eps|} \right| \leq \sup_{|\eps|\leq\eps_0}\left| \frac{d}{d\eps} \tF_0(\eps)F_0(\eps) \right|
\end{equation*} 
by the fundamental theorem of calculus, and 
\begin{equation*} 
\sup_{|\eps|>\eps_0}\left|\frac{\tF_0(\eps)F_0(\eps)}{|\eps|} \right| \leq \frac1{\eps_0}\Norminfty{\tilde{F}_0}\Norminfty{F_0}
\end{equation*} 
using properties of the norm $\Norminfty{\cdot}$ defined in Eq.\ \eqref{norm12}. Thus, $R^{(3)}_{T_c}=O(\ee^{-1/\lambda})$. 
This completes the proof of $R_{T_c} =O(\ee^{-1/\lambda})$.

\subsection{Third approximation}
\label{app:approximation3}
We use Eq.\ \eqref{gapratio1} to replace the gap ratios $\Delta(\eps,T)/\Delta(0,T)$ and  $\Delta(\eps,T_c)/\Delta(0,T_c)$ in Eq.\ \eqref{towardsfBCS} by $F(\eps)$ at the cost of introducing a corrections term $R_U^{(1)}$, i.e., we write Eq.\ \eqref{towardsfBCS} as 
\begin{equation} 
J_{T,\Delta,T_c} = R_U^{(1)}
\end{equation} 
with the integral 
\begin{equation} 
\label{JBCS} 
J_{T,\Delta,T_c} =  \intR d\eps\,\Biggl\{  \frac{\tanh\frac{\sqrt{\eps^2+[f(\eps)\Delta ]^2}}{2T}}{\sqrt{\eps^2+[f(\eps)\Delta ]^2}} \\ -\frac{\tanh\frac{\eps}{2T_c}}{\eps}\Biggr\}  G(\eps)
\end{equation} 
with $\Delta$ and $f(\eps)$ as in Eq.\ \eqref{sc} and 
\begin{equation} 
\label{sc1} 
G(\eps) \equiv  \tF_0(\eps)F(\eps), \qquad 
\end{equation} 
and the correction term 
\begin{multline}  
R_U^{(1)} \equiv   -\intR d\eps\, \Biggl( \frac{\tanh\frac{\sqrt{\eps^2+[f(\eps)\Delta ]^2}}{2T}}{\sqrt{\eps^2+[f(\eps)\Delta ]^2}} \cR_{\Delta,T}(\eps)\\ 
- \frac{\tanh\frac{\eps}{2T_c}}{\eps}\cR_{0,T_c}(\eps) \Biggr)\tF_0(\eps) . 
\end{multline} 
Inserting the estimate in Eq.\ \eqref{cRestimate1} and the bound 
\begin{equation*} 
\frac{\tanh\frac{\sqrt{\eps^2+[f(\eps)\Delta ]^2}}{2T}}{\sqrt{\eps^2+[f(\eps)\Delta ]^2}} \leq \frac1{|\eps|} , 
\end{equation*} 
and similarly for the special case $(\Delta,T)=(0,T_c)$, we find 
\begin{multline} 
|R_U^{(1)}|\leq \intR d\eps\, \frac1{|\eps|}\frac{\Kmax(\eps)}{1-\delta}\Norminfty{f} \Bigl(4T+\pi \Delta(0,T)\Norminfty{f} \\ + 4T_c\Bigr) \\ 
= \frac{\Norm{\tKmax}\Norminfty{f}}{1-\delta} \left(4T+4T_c+\pi \Delta(0,T)\Norminfty{f}\right). 
\end{multline}  
Thus, clearly, $R_U^{(1)}=O(\ee^{-1/\lambda})$.

To obtain our universality result we approximate the integral in Eq.\ \eqref{JBCS} by 
\begin{equation} 
\label{JBCS0} 
J^{(0)}_{T,\Delta,T_c} =  \intR d\eps\,\Biggl\{  \frac{\tanh\frac{\sqrt{\eps^2+\Delta}}{2T}}{\sqrt{\eps^2+\Delta}} \\ -\frac{\tanh\frac{\eps}{2T_c}}{\eps}\Biggr\} .
\end{equation} 
The following result assesses the accuracy of this approximation. 

\begin{lemma} 
\label{lemma:key2}
Let $T\geq 0$, $\Delta\geq 0$, $T_c\geq 0$ be constants,  and $f(\eps)$, $G(\eps)$ be piece-wise continuous bounded real-valued functions of the real variable $\eps$ such that (i) $f(0)=G(0)=1$ and  (ii) $\tilde f(\eps)\equiv [f(\eps)-1]/\eps$ and $\tilde G(\eps)\equiv [G(\eps)-1]/\eps$ both are bounded functions. Then the integrals in Eqs.\ \eqref{JBCS} and \eqref{JBCS0} both are well-defined, and
\begin{multline} 
\label{JBCS1}
\left| J_{\Delta,T,T_c} -J^{(0)}_{\Delta,T,T_c} \right| \leq \Norminfty{\tilde G}(4T+4T_c+\pi\Delta\Norminfty{f})\\ 
+4\Delta\Norminfty{\tilde f}(1+\Norminfty{f})\left(1+\frac{\Delta}{2\eps_1} \right)
\end{multline} 
with $\eps_1$ such that $f(\eps)>1/2$ for $|\eps|< \eps_1$. 
\end{lemma}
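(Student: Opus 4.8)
The plan is to reduce everything to the already-proven Lemma~\ref{lemma:key1} together with a single elementary, temperature-independent comparison estimate. Throughout I abbreviate the two pieces of the integrand in \eqref{JBCS} as
\[
A(\eps)\equiv \frac{\tanh\frac{\sqrt{\eps^2+[f(\eps)\Delta]^2}}{2T}}{\sqrt{\eps^2+[f(\eps)\Delta]^2}},\qquad
C(\eps)\equiv \frac{\tanh\frac{\eps}{2T_c}}{\eps},
\]
and I write $A_0(\eps)$ for $A(\eps)$ with $f(\eps)$ replaced by $1$, so that $J_{\Delta,T,T_c}=\intR (A-C)G\,d\eps$ and $J^{(0)}_{\Delta,T,T_c}=\intR (A_0-C)\,d\eps$. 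The starting point is the exact algebraic identity
\[
J_{\Delta,T,T_c}-J^{(0)}_{\Delta,T,T_c}=\intR (A-C)(G-1)\,d\eps+\intR (A-A_0)\,d\eps,
\]
which cleanly isolates the two sources of error: the replacement $G\to 1$ and the replacement $f\to 1$. Well-definedness of both integrals follows from $f(0)=G(0)=1$ and the boundedness of $\tilde f$ and $\tilde G$, exactly as in the hypotheses.

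For the first integral I would recognize each of its two terms as an instance of the integral $I_{\Delta,T}$ controlled by Lemma~\ref{lemma:key1}. Writing $G-1=\eps\tilde G=|\eps|\sgn(\eps)\tilde G$ and using that $\tanh(\eps/2T_c)/\eps$ is even, one obtains $\intR A(G-1)\,d\eps = I_{\Delta,T}[\sgn\cdot\tilde G]$ and $\intR C(G-1)\,d\eps = I_{0,T_c}[\sgn\cdot\tilde G]$, in the notation of \eqref{IBCS} with $g=\sgn\cdot\tilde G$ (so $\Norminfty{g}=\Norminfty{\tilde G}$). Applying Lemma~\ref{lemma:key1} to each --- once with parameters $(\Delta,T)$ and once with $(0,T_c)$ --- the two copies of $I_{0,0}[\sgn\cdot\tilde G]=\intR\sgn(\eps)\tilde G(\eps)\,d\eps$ cancel, and the two error bounds add up to exactly $\Norminfty{\tilde G}\,(4T+4T_c+\pi\Delta\Norminfty{f})$, which is the first term on the right of \eqref{JBCS1}.

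The crux is the second integral $\intR(A-A_0)\,d\eps$, and the main obstacle is that its bound must be uniform in $T$ (the second term of \eqref{JBCS1} contains no $T$): a naive mean-value estimate fails because $\partial_m$ of $\tanh(\sqrt{\eps^2+m}/2T)/\sqrt{\eps^2+m}$ carries a factor $1/T$ that destroys integrability. The key observation I would exploit is that $W\mapsto\psi(W)\equiv\tanh(W/2T)/W$ satisfies $|\psi'(W)|\le 1/W^2$ \emph{uniformly} in $T$: indeed $\psi'(W)=\rho(W/2T)/W^2$ with $\rho(x)=x\,\mathrm{sech}^2x-\tanh x$, and since $\rho(0)=0$ and $\rho'(x)=-2x\,\mathrm{sech}^2x\,\tanh x\le 0$ one has $-1\le\rho\le 0$. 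Integrating $|\psi'|$ along the segment between $W=\sqrt{\eps^2+[f\Delta]^2}$ and $W_0=\sqrt{\eps^2+\Delta^2}$ gives the temperature-free comparison $|A-A_0|=|\psi(W)-\psi(W_0)|\le|1/W-1/W_0|$. This reduces the task to estimating $\intR|1/W-1/W_0|\,d\eps$, where $|1/W-1/W_0|=\Delta^2|1-f^2|/[WW_0(W+W_0)]$ and $|1-f^2|\le|\eps|\Norminfty{\tilde f}(1+\Norminfty{f})$.

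To finish I would split the domain at $|\eps|=\eps_1$. For $|\eps|<\eps_1$ the hypothesis $f>1/2$ gives $W\ge\tfrac12\sqrt{\eps^2+\Delta^2}$ (and $W_0,W+W_0\ge\sqrt{\eps^2+\Delta^2}$), so the integrand is bounded by $2\Delta^2|\eps|\Norminfty{\tilde f}(1+\Norminfty{f})(\eps^2+\Delta^2)^{-3/2}$; extending to all $\eps$ and using $\intR|\eps|(\eps^2+\Delta^2)^{-3/2}d\eps=2/\Delta$ yields the contribution $4\Delta\Norminfty{\tilde f}(1+\Norminfty{f})$. For $|\eps|\ge\eps_1$ the crude bounds $W,W_0\ge|\eps|$ give an integrand $\le\Delta^2\Norminfty{\tilde f}(1+\Norminfty{f})/(2\eps^2)$, which integrates to $\Delta^2\Norminfty{\tilde f}(1+\Norminfty{f})/\eps_1$. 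Their sum is bounded by $4\Delta\Norminfty{\tilde f}(1+\Norminfty{f})(1+\Delta/2\eps_1)$, the second term of \eqref{JBCS1}. Here the role of $\eps_1$ is precisely to keep $W$ bounded below near $\eps=0$, without which the inner integral would diverge as $\Delta\to0$; once the comparison $|A-A_0|\le|1/W-1/W_0|$ is established, all remaining steps are elementary and the stated constants follow with some slack.
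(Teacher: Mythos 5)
Your proposal is correct, and the overall architecture matches the paper's: the same decomposition into a ``$G\to 1$'' error and an ``$f\to 1$'' error (your $\intR(A-C)(G-1)\,d\eps+\intR(A-A_0)\,d\eps$ is exactly the paper's $(J-J^{(1)})+(J^{(1)}-J^{(0)})$), the same double application of Lemma~\ref{lemma:key1} for the first piece, and the same split of the second piece at $|\eps|=\eps_1$ using $f>1/2$ there. Where you genuinely diverge is in the technical handling of the second piece. The paper applies the mean value theorem in the gap variable, writing $\chi(\eps,f\Delta)-\chi(\eps,\Delta)=\Delta[f(\eps)-1]\,\chi'_\Delta(\eps,\Delta_0)$ with an $\eps$-dependent intermediate point $\Delta_0$, then converts $\chi'_\Delta$ to $\chi'_\eps$ via $\chi'_\Delta=(\Delta/\eps)\chi'_\eps$ and bounds $|\chi'_\eps(\eps,\Delta_0)|\leq|\eps|/[\eps^2+\Delta_0^2]^{3/2}$. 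You instead integrate $|\psi'|$ along the segment between $W$ and $W_0$ to get the clean, temperature-free comparison $|A-A_0|\leq|1/W-1/W_0|$ and then work with the explicit algebraic expression $\Delta^2|1-f^2|/[WW_0(W+W_0)]$. Both arguments rest on the same underlying uniform-in-$T$ fact, namely $|\psi'(W)|\leq 1/W^2$ for $\psi(W)=\tanh(W/2T)/W$ (the paper's bound on $|\chi'_\eps|$ is precisely this fact in disguise), but your route avoids the mean value theorem's $\eps$-dependent intermediate point $\Delta_0(\eps,\Delta)$ and the attendant bookkeeping, at the cost of a slightly longer algebraic manipulation of $1/W-1/W_0$. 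Your constants come out marginally sharper ($1+\Delta/4\eps_1$ before rounding up to the stated $1+\Delta/2\eps_1$), so the claimed bound holds with room to spare.
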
 

\begin{proof} 
Clearly, our assumptions guarantee that the integrals in Eqs.\ \eqref{JBCS} and \eqref{JBCS0} are well-defined. 

We define 
\begin{equation} 
\label{JBCS2} 
J^{(1)}_{T,\Delta,T_c} =  \intR d\eps\,\Biggl\{  \frac{\tanh\frac{\sqrt{\eps^2+[f(\eps)\Delta ]^2}}{2T}}{\sqrt{\eps^2+[f(\eps)\Delta ]^2}} \\ -\frac{\tanh\frac{\eps}{2T_c}}{\eps}\Biggr\}  
\end{equation} 
and write 
\begin{multline*} 
J_{\Delta,T,T_c} -J^{(0)}_{\Delta,T,T_c} \\
= (J_{\Delta,T,T_c} -J^{(1)}_{\Delta,T,T_c})+(J^{(1)}_{\Delta,T,T_c} -J^{(0)}_{\Delta,T,T_c}). 
\end{multline*} 
The following estimate is obtained by using Lemma~\ref{lemma:key1} twice: 
\begin{multline*} 
|J_{\Delta,T,T_c} -J^{(1)}_{\Delta,T,T_c}| = \Biggl| \intR d\eps\,\Biggl\{ 
\frac{\tanh\frac{\sqrt{\eps^2+[f(\eps)\Delta ]^2}}{2T}}{\sqrt{\eps^2+[f(\eps)\Delta ]^2}} \\ - 
\frac{\tanh\frac{\eps}{2T_c}}{\eps}\Biggr\}  [G(\eps)-1] \Biggr| \\
\leq \supR\left| \frac{G(\eps)-1}{\eps}\right|\left(  4T + \pi\Delta\Norminfty{f} + 4T_c\right) .
\end{multline*} 
To estimate the other term we write 
\begin{equation*} 
J^{(1)}_{\Delta,T,T_c} -J^{(0)}_{\Delta,T,T_c} = \intR d\eps\, \left[ \chi(\eps,f(\eps)\Delta)- \chi(\eps,\Delta)\right]
\end{equation*} 
with the function
\begin{equation} 
\label{chidef} 
\chi(\eps,\Delta) \equiv \frac{\tanh\frac{\sqrt{\eps^2+\Delta^2}}{2T}}{\sqrt{\eps^2+\Delta^2}}
\end{equation} 
(we suppress the $T$-dependence of $\chi$, to simplify notation).
Using the mean value theorem of calculus we write: 
\begin{equation*} 
J^{(1)}_{\Delta,T,T_c} -J^{(0)}_{\Delta,T,T_c} = \intR d\eps\, \Delta\left[f(\eps)-1\right]\chi'_\Delta(\eps,\Delta_0)
\end{equation*} 
with $\chi'_\Delta(\eps,\Delta)= \frac{\partial}{\partial\Delta}\chi(\eps,\Delta)$ and 
$$\Delta_0= \Delta_0(\eps,\Delta) = \Delta\left(1+ \xi\left[ f(\eps) -1\right] \right) $$
for some $\xi=\xi(\eps,\Delta)$ in the range $0\leq \xi\leq 1$. 
Inserting $\chi'_{\Delta}(\eps,\Delta)=\chi'_{\eps}(\eps,\Delta)(\Delta/\eps)$, which follows from the definition in Eq.\ \eqref{chidef},  we obtain 
\begin{equation*} 
J^{(1)}_{\Delta,T,T_c} -J^{(0)}_{\Delta,T,T_c} = \Delta \int_{\mathbb{R}}d\eps\, \tilde f(\eps)\Delta_0\chi'_{\eps}(\eps,\Delta_0)
\end{equation*} 
with $\tilde f(\eps)=[f(\eps)-1]/\eps$. 
We thus can estimate 
\begin{multline*} 
|J^{(1)}_{\Delta,T,T_c} -J^{(0)}_{\Delta,T,T_c} |  \leq  \Delta \intR d\eps\, |\tilde f(\eps)||\Delta_0||\chi'_{\eps}(\eps,\Delta_0)| \\
\leq \Delta \sup_{\eps\in \mathbb{R}} |\tilde f(\eps)||\Delta_0|
 \int_{\mathbb{R}}d\eps\, |\chi'_{\eps}(\eps,\Delta_0)| \\
\leq \Delta^2\Norminfty{\tilde{f}}(1+ \Norminfty{f}) \intR d\eps\,|\chi'_{\eps}(\eps,\Delta_0)|
\end{multline*} 
inserting $|\Delta_0|\leq \Delta(1+\Norminfty{f})$ in the last step. 

We now use $|\chi'_\eps(\eps,\Delta_0)|\leq |\eps|/[\eps^2+\Delta_0^2]^{3/2}$ for all $\eps\in\R$ and $\Delta_0\geq 0$, and that $f(\eps)\geq 1/2$ for $|\eps|\leq \eps_1$ and some $\eps_1>0$ (the existence of such an $\eps_1$ is guaranteed by $f(0)=1$ and the continuity of $f(\eps)$ at $\eps=0$). The latter implies $\Delta_0>\Delta/2$ for $|\eps|\leq \eps_1$ and thus, using the former, 
\begin{multline*} 
 \intR d\eps\,|\chi'_{\eps}(\eps,\Delta_0)| 
 \leq 2\int_0^{\eps_1} d\eps\, \frac{\eps}{[\eps^2+(\Delta/2)^2]^{3/2}} \\ + 2\int_{\eps_1}^\infty \frac{d\eps}{\eps^2}
 \leq \frac4\Delta+\frac2{\eps_1}. 
\end{multline*} 
This implies 
\begin{equation} 
|J^{(1)}_{\Delta,T,T_c} -J^{(0)}_{\Delta,T,T_c} | <  4\Delta\Norminfty{\tilde{f}}(1+\Norminfty{f})\left(1 + \frac{\Delta}{2\eps_1} \right).
\end{equation} 
\end{proof}

To complete the proof that the error term estimated in Eq.\ \eqref{JBCS1}   
is $O(\ee^{-1/\lambda})$ we need to show that $\Norminfty{\tilde G}$ and $ \Norminfty{\tilde{f}}$ both are finite.
Recalling definitions:
\begin{equation*} 
\Norminfty{\tilde G} \leq \supR\left|\frac{\left[ \tF_0(\eps)F(\eps)
- 1 \right]}{|\eps|} \right|, 
\end{equation*} 
which is very similar to the expression on the right-hand side in the error term estimate in Eq.\ \eqref{R3Tc}. 
Thus, by estimates explained after Eq.\ \eqref{R3Tc}: 
\begin{multline} 
\Norminfty{\tilde G}\leq \sup_{|\eps|\leq\eps_0}\left| \frac{d}{d\eps} \tF_0(\eps)F_0(\eps) \right| \\
+ \frac1{\eps_0}\left( \Norminfty{\tilde{F}_0}\Norminfty{F_0}+1\right) \\
+  \lambda\Norminfty{\tilde{F}_0}\Norminfty{\tKmax}\Norm{F} , 
\end{multline} 
which clearly is finite. Moreover, using Eq.\ \eqref{gapratio1},  
\begin{multline} 
\Norminfty{\tilde{f}} = \supR\left|\frac{\left[ F(\eps)+\cR_{\Delta,T}(\eps) 
- 1 \right]}{|\eps|} \right|\\
\leq  \supR\left|\frac{\left[ F(\eps)
- 1 \right]}{|\eps|} \right| +  \supR\left|\frac{\cR_{\Delta,T}(\eps)}{|\eps|} \right|; 
\end{multline} 
the first term on the right-hand side can be estimated using Eq.\ \eqref{FEq} and the inequality in Eq.\ \eqref{Hoelder}: 
\begin{multline*} 
\supR\left|\frac{\left[ F(\eps)
- 1 \right]}{|\eps|} \right| \leq \supR\left|\frac{\left[ F_0(\eps)
- 1 \right]}{|\eps|} \right| + \lambda\Norminfty{\tKmax}\Norm{F}\\
\leq \sup_{|\eps|\leq\eps_0}\left|\frac{dF_0(\eps)}{d\eps}\right| + \frac1{\eps_0}\left(\Norminfty{F} +1\right) + 
 \lambda\Norminfty{\tKmax}\Norm{F}, 
\end{multline*}
with the second inequality obtained by an argument similar to the one used to estimate the first term on the right-hand side in Eq.\ \eqref{R3Tc}; 
the second term can be estimated using Eq.\ \eqref{cRestimate1}:  
\begin{equation*} 
\supR\left|\frac{\cR_{\Delta,T}(\eps)}{|\eps|} \right| \leq  \frac{\Norminfty{\tKmax}}{1-\delta}\Norminfty{f} \left(4T+\pi \Delta(0,T)\Norminfty{f}\right). 
\end{equation*} 
This shows that $\Norminfty{\tilde{f}}$ is finite.

The arguments above prove that Eq.\ \eqref{towardsfBCS} is equivalent to $J^{(0)}_{\Delta,T,T_c}=R_U$ with an error term $R_U=O(\ee^{-1/\lambda})$.
Changing the integration variable from $\eps$ to $\veps=\eps/T_c$ in the definition of $J^{(0)}_{\Delta,T,T_c}$ in Eq.\ \eqref{JBCS0} we can write the latter equation as
\begin{equation}
\label{ffBCS} 
\int_0^{\infty}  d\veps \left\{ 
\frac{\tanh\frac{\sqrt{\veps^2+\ffBCS(t)^2}}{2t}}{\sqrt{\veps^2+\ffBCS(t;R_U)^2}} - 
\frac{\tanh\tfrac{\veps}{2}}{\veps}\right\}=R_U
\end{equation}
with $\ffBCS(t)\equiv \Delta(0,T)/T_c$ and $t\equiv T/T_c$. 
The following lemma provides an upper bound on the deviation of $\Delta(0,T)/T_c$ from $\fBCS(T/T_c)$. 

\begin{lemma}
\label{LemmaU} 
The solution $\ffBCS(t)$ of Eq.\ \eqref{ffBCS} equals the universal BCS gap function, $\fBCS(t)$,  up to corrections such that 
\begin{multline} 
\left|\ffBCS(t)-\fBCS(t)\right| \leq |R_U|\left|\fBCS(t)-t\fBCS'(t)\right| + O(R_U^2). 
\end{multline} 
\end{lemma}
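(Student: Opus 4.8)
The plan is to view Eq.~\eqref{ffBCS} as a perturbation, controlled by the small parameter $R_U$, of the equation $J(\fBCS(t),t)=0$ defining the universal gap function, where $J(f,t)$ is the function in Eq.~\eqref{Jft}. Writing the equation for $\ffBCS(t)$ compactly as $J(\ffBCS(t),t)=R_U$, I would first invoke Lemma~\ref{lem:fBCS}, which shows that for fixed $t$ the map $f\mapsto J(f,t)$ is continuously differentiable and strictly monotone, with $\partial_f J(\fBCS(t),t)\neq 0$ for $0\le t<1$ (the nonvanishing is already implicit in the differentiability of $\fBCS$ asserted there). The implicit function theorem then represents $\ffBCS(t)$ as a differentiable function of $R_U$ near $R_U=0$, equal to $\fBCS(t)$ at $R_U=0$. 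Expanding $J$ to first order in its first argument about $f=\fBCS(t)$ and using $J(\fBCS(t),t)=0$ gives
\begin{equation*}
R_U=\partial_f J(\fBCS(t),t)\,\bigl(\ffBCS(t)-\fBCS(t)\bigr)+O\bigl((\ffBCS(t)-\fBCS(t))^2\bigr),
\end{equation*}
so that $\ffBCS(t)-\fBCS(t)=R_U/\partial_f J(\fBCS(t),t)+O(R_U^2)$.

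The crux is to evaluate $1/\partial_f J(\fBCS(t),t)$ and identify it with $-(\fBCS(t)-t\fBCS'(t))$. I would establish this from two identities. First, a scaling (Euler) identity for $J$: substituting $\veps\to s\veps$ in the \emph{entire} convergent integrand of Eq.~\eqref{Jft} shows that $J(sf,st)=J(f,t)-\log s$, the $-\log s$ being precisely $\int_0^\infty[\tanh(\veps/2)-\tanh(s\veps/2)]/\veps\,\dd\veps=-\log s$, which follows from the integral $\int_0^\infty[\tanh(\veps/2t)-\tanh(\veps/2)]/\veps\,\dd\veps=\log(1/t)$ already evaluated in the computation of $J(0,t)$ in the proof of Lemma~\ref{lem:fBCS}. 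Differentiating $J(sf,st)=J(f,t)-\log s$ at $s=1$ yields $f\,\partial_f J+t\,\partial_t J=-1$. Second, differentiating the defining identity $J(\fBCS(t),t)\equiv 0$ with respect to $t$ gives $\partial_f J\cdot\fBCS'(t)+\partial_t J=0$, i.e.\ $t\,\partial_t J=-t\fBCS'(t)\,\partial_f J$. Substituting the latter into the Euler relation, evaluated along $f=\fBCS(t)$, collapses both to
\begin{equation*}
\partial_f J(\fBCS(t),t)\,\bigl[\fBCS(t)-t\fBCS'(t)\bigr]=-1,
\end{equation*}
whence $1/\partial_f J(\fBCS(t),t)=-(\fBCS(t)-t\fBCS'(t))$. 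Combining the two steps gives $\ffBCS(t)-\fBCS(t)=-R_U\,[\fBCS(t)-t\fBCS'(t)]+O(R_U^2)$, and taking absolute values yields the stated bound.

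The step I expect to be the main obstacle is controlling the remainder uniformly in $t$. The constant hidden in $O(R_U^2)$ involves $\partial_f^2 J$, and near the upper endpoint $t\to 1$ the expansion degenerates: by Lemma~\ref{lem:fBCS}, $\fBCS(t)\sim\sqrt{f_1(1-t)}$, so $\fBCS'(t)\to-\infty$ and hence $\fBCS(t)-t\fBCS'(t)\to+\infty$, while correspondingly $\partial_f J(\fBCS(t),t)\to 0^-$. For the intended application, where $R_U=O(\ee^{-1/\lambda})$ at a fixed reduced temperature $t<1$, the leading term already certifies $\ffBCS(t)-\fBCS(t)=O(\ee^{-1/\lambda})$, so this degeneracy is harmless; making the bound uniform up to $t=1$ would require a more delicate analysis of $J$ near $f=0$. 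I would also confirm, using the smoothness asserted in Lemma~\ref{lem:fBCS}, that the differentiations under the integral sign and the interchange implicit in the scaling argument are legitimate.
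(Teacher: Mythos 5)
Your proposal is correct, and it reaches the stated bound by a route that differs in packaging from the paper's, though the two rest on the same hidden structure. The paper rewrites $J(f,t)=\log G(f/2t)-\log f$ using the special function $G(X)$ of Eq.~\eqref{GBCS}, reads off the parametric representation \eqref{ffBCS10}, and thereby obtains the \emph{exact} identity $\ffBCS(t)=\ee^{-R_U}\fBCS(\ee^{R_U}t)$, which it then Taylor-expands to first order in $R_U$. You instead apply the implicit function theorem to $J(\ffBCS(t),t)=R_U$ and evaluate $1/\partial_f J(\fBCS(t),t)$ via the Euler relation $f\,\partial_f J+t\,\partial_t J=-1$ obtained by differentiating the scaling identity $J(sf,st)=J(f,t)-\log s$ at $s=1$; combined with $\partial_f J\cdot\fBCS'+\partial_t J=0$ this gives $\partial_f J\cdot[\fBCS(t)-t\fBCS'(t)]=-1$ and hence the claimed first-order coefficient, in agreement with the paper's expansion. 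Your scaling identity is verified correctly (the $-\log s$ arises from the shift of the upper regularization limit, equivalently from the $J(0,t)=\log(1/t)$ computation), and it is in fact the integrated form of the paper's key formula: setting $s=\ee^{R_U}$ in $J(sf,st)=J(f,t)-\log s$ immediately yields $J(\ee^{R_U}\ffBCS(t),\ee^{R_U}t)=0$ and thus $\ee^{R_U}\ffBCS(t)=\fBCS(\ee^{R_U}t)$, recovering the paper's exact relation without any reference to $G(X)$. What your version buys is elementarity (no need for the $G$-parametrization); what the paper's version buys is an exact, all-orders formula so that the $O(R_U^2)$ remainder is controlled by the second derivative of the single explicit function $s\mapsto \ee^{-R_U}\fBCS(\ee^{R_U}t)$ rather than by $\partial_f^2 J$ and the inverse-function expansion. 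Your caveats are apt and honest: the nonvanishing of $\partial_f J$ for $t<1$ is exactly what your identity $\partial_f J\cdot[\fBCS-t\fBCS']=-1$ certifies, the degeneracy as $t\to 1$ (where $\fBCS'\to-\infty$) afflicts the paper's bound equally since the right-hand side of the lemma itself blows up there, and for the intended application at fixed $t<1$ with $R_U=O(\ee^{-1/\lambda})$ nothing is lost. I would only suggest that you upgrade your differential scaling identity to its integrated form as above, which closes the uniformity question more cleanly and matches the paper's proof exactly from that point on.
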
 

\begin{proof} 
We use results derived in Appendix~\ref{app:fBCS}.  

Eq.\ \eqref{ffBCS} can be written as $J(\ffBCS(t),t)=R_U$ with the special function $J(f,t)$ defined in Eq.\ \eqref{Jft}, and the representation of this function in Eq.\ \eqref{Jft1} is equivalent to $J(f,t)=\log(G(f/2t))-\log(f)$ with the special function $G(X)$ defined in Eq.\ \eqref{GBCS}. 
Thus, Eq.\ \eqref{ffBCS} is equivalent to $\ffBCS(t) = \ee^{-R_U}G(\ffBCS(t)/2t)$, which gives a parameter representation of $\ffBCS(t)$: 
\begin{equation} 
\label{ffBCS10}
\ffBCS(t)=\ee^{-R_U}G(X),\quad t=\ee^{-R_U} \frac{G(X)}{2X}\quad (0\leq X<\infty)
\end{equation} 
generalizing the one of $\fBCS(t)$ in Eq.\ \eqref{fBCS10}. 

We combine the results in Eqs.\ \eqref{fBCS10} and \eqref{ffBCS10} and obtain a remarkable formula for the solution of Eq.\ \eqref{ffBCS}: 
\begin{equation} 
\ffBCS(t) = \ee^{-R_U}\fBCS(\ee^{R_U}t). 
\end{equation}  
For constant $R_U$, this is a simple explicit formula. However, in general, the correction terms $R_U$ depend on $t$ and $\ffBCS(t)$, and then this formula determines $\ffBCS(t)$ implicitly. Anyway, by a Taylor expansion to first order in $R_U$, this formula gives the result stated in Lemma~\ref{LemmaU}. 
\end{proof} 

Since $R_U=O(\ee^{-1/\lambda})$, this proves $ \Delta(0,T)/T_c=\fBCS(T/T_c)+O(\ee^{-1/\lambda})$.

\bibliographystyle{apsrev}

\end{document}